\documentclass[journal]{IEEEtran}

\usepackage{amsmath,amssymb,amsthm,mathtools}
\usepackage{tikz,graphicx,booktabs,array,arydshln}
\usepackage{url}
\usetikzlibrary{arrows.meta}

\theoremstyle{plain}
\newtheorem{theorem}{Theorem}
\newtheorem{proposition}{Proposition}
\newtheorem{lemma}{Lemma}
\newtheorem{corollary}{Corollary}
\newtheorem{definition}{Definition}
\newtheoremstyle{remstyle}{\topsep}{\topsep}{}{}{\bfseries}{.}{.5em}{}
\theoremstyle{remstyle}

\DeclareMathOperator{\dist}{dist}
\DeclareMathOperator{\spn}{span}
\newcommand{\dd}{\mathrm{d}}
\newcommand{\Lap}{\triangle_\alpha}
\newcommand{\Ha}{\mathcal H_\alpha}
\newcommand{\ip}[2]{\langle #1,#2\rangle}
\newcommand{\cA}{\mathcal A}
\newcommand{\cS}{\mathcal S}
\newcommand{\cX}{\mathcal X}
\newcommand{\BF}{B_{\mathrm F}}
\newcommand{\Pra}{\mathrm{Pr}}
\newcommand{\Ra}{\mathrm{Ra}}

\title{Rayleigh--B\'enard Stabilization\\ by Fredholm Backstepping}
\author{Miroslav Krstic,~\IEEEmembership{Fellow,~IEEE},
and Eugene Lavretsky,~\IEEEmembership{Fellow,~IEEE}%
\thanks{This work was funded by AFOSR grant FA9550-23-1-0535 and NSF grant ECCS-2151525.}%
\thanks{M. Krstic is with the Department of Mechanical and Aerospace Engineering, University of California, San Diego, La Jolla, CA 92093-0411, USA (e-mail: mkrstic@ucsd.edu).}%
\thanks{E. Lavretsky is with The Boeing Company, Huntington Beach, CA 92647, USA (e-mail: eugene.lavretsky@boeing.com).}}

\begin{document}
\maketitle

\begin{abstract}
Among Navier--Stokes systems, the Rayleigh--B\'enard layer, a fluid between two horizontal plates heated from below and cooled from above, is the benchmark of hydrodynamic instability that manifests itself as the unsteady convection phenomenon, which in turn can be interpreted as the onset of chaos: the Lorenz equations represent the Fourier--Galerkin approximation coefficients for the minimal three-mode truncation of it. Its instability has no mean flow behind it. The fluid is at rest, the velocity and the temperature are each stable alone, and buoyancy closes a positive loop between them. The linearization is self-adjoint, its unstable spectrum real, and the number of unstable modes grows without bound in the Rayleigh number; at $\Ra=10^6$ the layer has $85$ of them, with growth rates up to $2137$. We stabilize the conductive state at every Rayleigh and Prandtl number, with any prescribed decay rate, by explicit feedback of the full state to the tangential velocity, temperature, and normal velocity of the upper plate. The feedback laws are obtained by Fredholm backstepping of the coupled velocity--temperature system, with pole placement for the lowest modes.
\end{abstract}

\begin{IEEEkeywords}
Boundary control, Fredholm backstepping, Navier--Stokes equations, Rayleigh--B\'enard convection, rapid stabilization, distributed parameter systems.
\end{IEEEkeywords}

\section{Introduction}

\subsection{Problem}

Fredholm backstepping has now been applied to three plants in the same geometry, a layer of fluid between two plates: the Poiseuille channel \cite{NSF26}, the channel carrying two fluids separated by an interface \cite{Yih26}, and the Rayleigh--B\'enard layer of this paper. The geometry is shared and the method is shared. The physics is not. In the first the instability comes from the shear of the mean flow, in the second from the jump of viscosity across the interface, and here from buoyancy, with the fluid at rest. Each of the three is a benchmark of hydrodynamic stability, and each carries a difficulty the other two do not.

A layer of fluid is held between two horizontal plates and heated from below. Below a critical Rayleigh number it stays at rest, and the temperature falls linearly from the hot plate to the cold one. Above that number, warm fluid rises, cold fluid sinks, and convection rolls set in. This is Rayleigh's problem \cite{Rayleigh1916}, and the rolls are what feedback is asked to prevent.

The linearization has two subsystems. The wall-normal velocity is governed by the Stokes operator, the temperature deviation by the heat equation, and neither is unstable alone. Buoyancy drives the velocity through the horizontal gradient of the temperature, and vertical motion carries heat across the background gradient. The instability is the loop closed by these two couplings. There is no mean flow and no shear, so none of the mechanisms of channel-flow control are present.

The control problem grows with the Rayleigh number. At $\Ra=10^6$, the value used in Section~\ref{sec:sim}, the layer has $85$ unstable eigenvalues over $20$ wavenumbers, with growth rates up to $2137$, and the count grows without bound in $\Ra$.

All actuation is at the upper plate: its temperature, its tangential velocity, its wall-normal velocity. The lower plate is held at the hot temperature, with no slip. One feature of the plant decides the design. Buoyancy enters the velocity equation through the inverse Laplacian, so the coupling is nonlocal across the layer. A Volterra transformation integrates from one wall to the current position, and against a nonlocal coupling it has nothing to work with; a Fredholm transformation integrates over the whole layer, which is what this plant requires. What is asked of the feedback is not stabilization alone, but a decay rate specified in advance and achieved at every Rayleigh and Prandtl number.

\subsection{Literature}

The linear stability of the heated layer is Rayleigh's \cite{Rayleigh1916}, with rigid walls in Pellew and Southwell \cite{PS40}, who computed the critical Rayleigh number $1708$. Chandrasekhar \cite{Chandra61} and Drazin and Reid \cite{DR04} are the standard accounts.

The layer has been controlled by feedback since Tang and Bau \cite{TB93}, who shifted the onset of convection to a higher Rayleigh number by heating the lower plate in proportion to measured temperatures. Howle \cite{Howle97} did the same in experiment, with shadowgraph measurements. Or, Cortelezzi and Speyer \cite{OCS01} designed linear-quadratic-Gaussian and $\mathcal H_\infty$ controllers for the linearized Boussinesq system and held the state of rest up to $14.5$ times the critical Rayleigh number. These designs are finite-dimensional, built on a truncation or on a few measured modes.

In the PDE setting, Vazquez and Krstic \cite{VK06} stabilized a thermal convection loop by backstepping: a singular perturbation in the Prandtl number leaves a single equation for the temperature, a Volterra design applies to it, and exponential stability follows for large Prandtl number. Ramaswamy, Raymond and Roy \cite{RRR19} stabilize the Boussinesq system by finite-dimensional boundary feedback.

For the channel, boundary stabilization at arbitrary Reynolds number is by Volterra backstepping in \cite{VK07,VK08} and by spectral finite-dimensional feedback in \cite{Barbu2007,Munteanu2011,Munteanu2012}. The Fredholm design for the channel is \cite{NSF26} and for the two-fluid channel \cite{Yih26}.

A Fredholm transformation was first used for boundary control in \cite{SGK2009}, on the Euler--Bernoulli beam, where no Volterra transformation assigns an arbitrary decay rate. What turned that into a method is the work of Coron and Lü \cite{CoronLu2014,CoronLu2015}, on the Korteweg--de Vries and Kuramoto--Sivashinsky equations, where the transformation is built from the controllability of the plant and the spectral properties of the operator, and rapid stabilization is obtained from those properties rather than from a kernel equation. That line continues in the treatment of critical operators by Gagnon, Hayat, Xiang and Zhang \cite{GHXZ25}, in the abstract single-input theorem of Hayat and Loko \cite{HayatLoko26}, and in the parabolic case of Boulard and Hayat \cite{BoulardHayat2025}. The present design follows it.

\subsection{Contributions}

\begin{itemize}
\item\emph{Rapid stabilization of the Rayleigh--B\'enard layer at every Rayleigh and Prandtl number.} We stabilize the conductive state and assign its decay rate, by feedback to the temperature and the two velocity components of the upper plate. We remove the two-way coupling that destabilizes two fields each stable alone, by feedback rather than by the large-Prandtl singular perturbation the earlier PDE design depended on \cite{VK06}, so no Prandtl number is excluded. Where earlier feedback moved the onset of convection \cite{TB93}, \cite{Howle97} or held the rest state to $\Ra\approx25{,}000$ on a truncation \cite{OCS01}, we shift the whole spectrum of every actuated wavenumber at once --- infinitely many eigenvalues, at $\Ra=1{,}000{,}000$ the $85$ unstable among them, with growth rates to $2137$.
\end{itemize}

To advance the Fredholm design to the Rayleigh--B\'enard stabilization, we have resolved two technical problems that do not arise in the Poiseuille channel \cite{NSF26} or the two-fluid channel \cite{Yih26}.

\begin{itemize}
\item\emph{Eigenvalues that no design parameter separates are assigned two at a time.} A Fredholm transformation assigns modes one at a time and needs them apart --- held apart by a pre-feedback in the Poiseuille channel \cite{NSF26}, by the coupling between the layers in the two-fluid channel \cite{Yih26}. Here we find the velocity and the thermal eigenvalues approaching each other in pairs, and at unit Prandtl number a velocity eigenvalue and a thermal eigenvalue lying within a fixed distance of each other however high in the spectrum one goes. We assign them two at a time, one wall input per field, and prove the $2\times2$ matrix of each pair invertible however close its two eigenvalues come.
\item\emph{The Fredholm transformation is made invertible by an arbitrarily small deformation of the assigned poles.} In the Poiseuille channel \cite{NSF26} and the two-fluid channel \cite{Yih26} the transformation is invertible because the states it prescribes are close to a basis, and nothing further is needed. For the Rayleigh--B\'enard flow pairs leave a finite matrix that can be singular at the very poles one asks for, so closeness no longer suffices. We deform the assigned poles continuously away from the plant's own eigenvalues and prove the Fredholm transformation invertible except at finitely many values of the deformation parameter, which is why any rate is reached once finitely many poles are moved by an arbitrarily small amount.
\end{itemize}

\section{Rayleigh--B\'enard layer and its wall inputs}\label{sec:model}

A plane layer of fluid, $(x,y)\in\mathbb R\times[0,1]$, with gravity along $-y$, is heated from below. In the Oberbeck--Boussinesq approximation of the Navier--Stokes equations, in units of the layer height, the thermal diffusion time and the temperature difference between the walls, the velocity $(u,v)$, the pressure $p$ and the temperature $T$ satisfy
\begin{subequations}\label{eq:NS}
\begin{align}
u_t+uu_x+vu_y&=-p_x+\Pra\Delta u,\label{eq:NSu}\\
v_t+uv_x+vv_y&=-p_y+\Pra\Delta v+\Pra\Ra\,T,\label{eq:NSv}\\
0&=u_x+v_y,\label{eq:NSdiv}\\
T_t+uT_x+vT_y&=\Delta T,\label{eq:NST}
\end{align}
\end{subequations}
with all fields periodic in $x$ with period $L$,
\begin{equation}\label{eq:per}
(u,v,p,T)(t,x+L,y)=(u,v,p,T)(t,x,y),
\end{equation}
no slip and the hot wall at $y=0$, and the three inputs at $y=1$,
\begin{subequations}\label{eq:NSbc}
\begin{align}
(u,\ v,\ T)&=(0,\ 0,\ 1)\qquad\text{at }y=0,\label{eq:NSbc0}\\
(u,\ T,\ v)&=(U_c,\ T_c,\ V_c)\qquad\text{at }y=1,\label{eq:NSbc1}
\end{align}
\end{subequations}
the tangential velocity, the temperature, and the wall-normal velocity of the upper wall. With zero inputs, $u=v=0$, $\bar T=1-y$ and $\bar p=\Pra\Ra\,(y-y^2/2)$ is the equilibrium of the RB layer. With $\theta\coloneqq T-\bar T$, the stream function $\psi$, $u=\psi_y$, $v=-\psi_x$, which satisfies \eqref{eq:NSdiv}, and the vorticity $\omega\coloneqq u_y-v_x=\Delta\psi$, the curl of \eqref{eq:NSu}--\eqref{eq:NSv} eliminates the pressure, and \eqref{eq:NS}--\eqref{eq:NSbc} take the form
\begin{subequations}\label{eq:nl}
\begin{align}
\omega_t+\psi_y\omega_x-\psi_x\omega_y&=\Pra\Delta\omega-\Pra\Ra\,\theta_x,\label{eq:vort}\\
\theta_t+\psi_y\theta_x-\psi_x\theta_y&=\Delta\theta-\psi_x,\label{eq:temp}\\
\psi=\psi_y=\theta&=0\qquad\text{at }y=0,\label{eq:bclow}\\
\big(\psi_y,\ \theta,\ -\psi_x\big)&=\big(U_c,\ T_c,\ V_c\big)\qquad\text{at }y=1,\label{eq:bcphys}\\
(\psi,\theta)(t,x+L,y)&=(\psi,\theta)(t,x,y),\label{eq:perpsi}
\end{align}
\end{subequations}
where $\psi=0$ at $y=0$ in \eqref{eq:bclow} fixes the additive constant of $\psi$, and \eqref{eq:perpsi} follows from \eqref{eq:per} since $\psi(t,x,y)=\int_0^yu(t,x,\eta)\,\dd\eta$. Figure~\ref{fig:layer} shows the configuration.

\begin{figure*}[t]
\centering
\resizebox{0.77\textwidth}{!}{%
\begin{tikzpicture}[x=1cm,y=1cm,>={Stealth[length=2mm]}]
\draw[gray!55,dashed,line width=0.3pt] (0,4.45) -- (11,4.45);
\draw[line width=0.7pt] plot coordinates{(0.000,4.567) (0.069,4.570) (0.138,4.575) (0.206,4.579) (0.275,4.582) (0.344,4.582) (0.413,4.578) (0.481,4.570) (0.550,4.556) (0.619,4.538) (0.688,4.515) (0.756,4.490) (0.825,4.464) (0.894,4.438) (0.963,4.415) (1.031,4.397) (1.100,4.383) (1.169,4.377) (1.238,4.376) (1.306,4.381) (1.375,4.390) (1.444,4.402) (1.513,4.416) (1.581,4.428) (1.650,4.439) (1.719,4.446) (1.788,4.450) (1.856,4.451) (1.925,4.448) (1.994,4.444) (2.062,4.440) (2.131,4.437) (2.200,4.437) (2.269,4.440) (2.338,4.449) (2.406,4.462) (2.475,4.478) (2.544,4.497) (2.613,4.516) (2.681,4.533) (2.750,4.546) (2.819,4.553) (2.888,4.552) (2.956,4.543) (3.025,4.526) (3.094,4.501) (3.163,4.471) (3.231,4.437) (3.300,4.402) (3.369,4.370) (3.438,4.343) (3.506,4.324) (3.575,4.315) (3.644,4.316) (3.713,4.327) (3.781,4.348) (3.850,4.377) (3.919,4.410) (3.988,4.446) (4.056,4.482) (4.125,4.514) (4.194,4.540) (4.263,4.560) (4.331,4.571) (4.400,4.575) (4.469,4.573) (4.538,4.564) (4.606,4.552) (4.675,4.539) (4.744,4.525) (4.812,4.514) (4.881,4.505) (4.950,4.499) (5.019,4.497) (5.088,4.497) (5.156,4.500) (5.225,4.502) (5.294,4.504) (5.363,4.504) (5.431,4.501) (5.500,4.495) (5.569,4.487) (5.638,4.477) (5.706,4.465) (5.775,4.455) (5.844,4.446) (5.912,4.441) (5.981,4.439) (6.050,4.442) (6.119,4.450) (6.188,4.460) (6.256,4.473) (6.325,4.487) (6.394,4.499) (6.463,4.509) (6.531,4.514) (6.600,4.514) (6.669,4.507) (6.738,4.495) (6.806,4.477) (6.875,4.455) (6.944,4.430) (7.013,4.404) (7.081,4.380) (7.150,4.358) (7.219,4.340) (7.288,4.327) (7.356,4.317) (7.425,4.312) (7.494,4.309) (7.562,4.307) (7.631,4.304) (7.700,4.300) (7.769,4.292) (7.838,4.282) (7.906,4.268) (7.975,4.251) (8.044,4.233) (8.113,4.216) (8.181,4.201) (8.250,4.192) (8.319,4.190) (8.388,4.196) (8.456,4.212) (8.525,4.237) (8.594,4.270) (8.663,4.310) (8.731,4.354) (8.800,4.399) (8.869,4.443) (8.938,4.481) (9.006,4.513) (9.075,4.535) (9.144,4.547) (9.213,4.548) (9.281,4.541) (9.350,4.526) (9.419,4.506) (9.488,4.483) (9.556,4.461) (9.625,4.442) (9.694,4.428) (9.763,4.420) (9.831,4.420) (9.900,4.427) (9.969,4.441) (10.038,4.459) (10.106,4.479) (10.175,4.501) (10.244,4.522) (10.312,4.540) (10.381,4.554) (10.450,4.564) (10.519,4.570) (10.588,4.572) (10.656,4.571) (10.725,4.569) (10.794,4.567) (10.863,4.565) (10.931,4.565) (11.000,4.567)};
\draw[gray!70,line width=0.3pt] (1.306,4.381) -- (1.306,4);
\draw[gray!70,line width=0.3pt] (2.819,4.553) -- (2.819,4);
\draw[gray!70,line width=0.3pt] (4.400,4.575) -- (4.400,4);
\draw[gray!70,line width=0.3pt] (6.050,4.442) -- (6.050,4);
\draw[gray!70,line width=0.3pt] (7.425,4.312) -- (7.425,4);
\draw[gray!70,line width=0.3pt] (9.006,4.513) -- (9.006,4);
\draw[gray!70,line width=0.3pt] (10.175,4.501) -- (10.175,4);
\draw[gray!55,line width=0.3pt] (0,0) rectangle (11,4);
\draw[line width=0.9pt] (0,4) -- (11,4);
\draw[line width=0.9pt] (0,0) -- (11,0);
\draw[->,line width=0.5pt] (0.7,3.6) -- (2.2,3.6);
\draw[->,line width=0.5pt] (4.1,3.6) -- (3.2,3.6);
\draw[->,line width=0.5pt] (5.4,3.6) -- (7.5,3.6);
\draw[->,line width=0.5pt] (10.3,3.6) -- (8.8,3.6);
\node[anchor=west,font=\footnotesize] at (0,3.1) {$\psi_y(1)=U_c(x,t),\quad -\psi_x(1)=V_c(x,t)$};
\node[anchor=east,font=\footnotesize] at (11,5.15) {$\theta(1)=T_c(x,t)$, imposed wall temperature};
\node[font=\small] at (5.5,2.25)
  {$\omega_t+\psi_y\omega_x-\psi_x\omega_y=\Pra\Delta\omega-\Pra\Ra\,\theta_x$};
\node[font=\small] at (5.5,1.6)
  {$\theta_t+\psi_y\theta_x-\psi_x\theta_y=\Delta\theta-\psi_x$};
\node[font=\footnotesize] at (5.5,0.95) {$\omega=\Delta\psi$,\;\; $u=\psi_y$,\;\; $v=-\psi_x$,\;\; $\bar T=1-y$};
\node[anchor=west,font=\footnotesize] at (0,-0.45) {$y=0$, heated from below};
\node[anchor=east,font=\footnotesize] at (11,-0.45) {$\psi=\psi_y=\theta=0$};
\node[font=\footnotesize] at (5.5,-1.05) {periodic in $x$, period $L$};
\node[font=\footnotesize] at (11.95,2.75) {gravity};
\draw[->,line width=0.5pt] (11.95,2.45) -- (11.95,1.25);
\end{tikzpicture}}
\caption{The RB layer with its wall inputs. The tangential velocity is a vector and is drawn as arrows; the wall temperature is a scalar datum and is drawn as its profile above the wall.}
\label{fig:layer}
\end{figure*}

Dropping the quadratic terms and taking the Fourier transform in $x$, with $\alpha=2\pi m/L$, $m\in\mathbb Z$, the Laplacian at the wavenumber $\alpha$,
\begin{equation}\label{eq:Lap}
\Lap\coloneqq\partial_{yy}-\alpha^2,
\end{equation}
and $V\coloneqq-i\alpha\hat\psi=\hat v$, gives per wavenumber $\alpha\ne0$
\begin{subequations}\label{eq:plant}
\begin{align}
\Lap V_t&=\Pra\Lap^2V-\Pra\Ra\,\alpha^2\theta,\label{eq:pV}\\
\theta_t&=\Lap\theta+V,\label{eq:ptheta}\\
0&=V(0)=V_y(0)=\theta(0),\label{eq:pbc}\\
\big(V(1),\ V_y(1),\ \theta(1)\big)&=\big(V_c,\ -i\alpha U_c,\ T_c\big),\label{eq:pbc2}
\end{align}
\end{subequations}
with the state at wavenumber $\alpha$ and the tangential velocity
\begin{equation}\label{eq:state}
X\coloneqq(V,\theta),\qquad u=\frac{V_y}{-i\alpha}.
\end{equation}
The states of \eqref{eq:plant} are the wall-normal velocity and the temperature of \eqref{eq:NS}; the stream function and the vorticity of \eqref{eq:nl} serve only to eliminate the pressure. The velocity subsystem is fourth order in $y$, clamped at $y=0$ with two free data at $y=1$; the thermal subsystem is second order, Dirichlet at $y=0$ with one free datum at $y=1$. Since $\Delta v=-\partial_x\omega$ for every two-dimensional incompressible field, the operator $\Lap$ acting on $V_t$ in \eqref{eq:pV} is the Biot--Savart map at wavenumber $\alpha$, and the coupling of $\theta$ into $V$ is through $\Lap^{-1}$, an integral over the whole layer.

Throughout, $\alpha>0$ (the wavenumber $-\alpha$ follows by conjugation), $\ip\cdot\cdot$ is the $L^2(0,1)$ inner product, $\ip\cdot\cdot_\cX$ and $\ip\cdot\cdot_\alpha$ those of $\cX$ and $\Ha$, $\Ha=H^1_0(0,1)$ with $\|V\|_\alpha^2=\|V_y\|^2+\alpha^2\|V\|^2$, so that $\|u\|^2+\|V\|^2=\|V\|_\alpha^2/\alpha^2$.

\begin{definition}[State space and operators]\label{def:ops}
$\cX\coloneqq\Ha\times L^2(0,1)$ with $\|(V,\theta)\|_\cX^2=\|V\|_\alpha^2+\Pra\Ra\,\alpha^2\|\theta\|^2$. With $\mathcal D\coloneqq H^3\cap H^2_0(0,1)$ and $\mathcal D_\theta\coloneqq H^2\cap H^1_0(0,1)$,
\begin{subequations}\label{eq:ops}
\begin{gather}
S_\alpha\coloneqq\Pra\,\Lap^{-1}\Lap^2\ \text{on }\mathcal D,\label{eq:S}\\
\cA\coloneqq\begin{pmatrix}S_\alpha&-\Pra\Ra\,\alpha^2\Lap^{-1}\\[2pt]1&\Lap\end{pmatrix},\label{eq:cA}\\
\cS\coloneqq\begin{pmatrix}S_\alpha&0\\0&\Lap\end{pmatrix},\label{eq:cS}
\end{gather}
\end{subequations}
with $\mathcal D(\cA)=\mathcal D(\cS)=\mathcal D\times\mathcal D_\theta$.
\end{definition}

In \eqref{eq:S} the inverse $\Lap^{-1}$ is taken with the Dirichlet conditions of $\Ha$, while $\Lap^2$ acts on $\mathcal D$, whose elements satisfy $V=V_y=0$ at both walls, so the two do not cancel: $S_\alpha$ is the clamped Stokes operator, of fourth order, and not $\Pra\Lap$.

\section{Controller and main result}\label{sec:main}

\subsection{Controller and stabilization theorem}\label{sec:gains}

The controller is a full-state feedback: at each wavenumber $\alpha$ in the controlled band it integrates the profiles $u(\cdot,t)$, $V(\cdot,t)$, $\theta(\cdot,t)$ against gain functions of $y$, computed offline from the plant data by Tables~\ref{tab:static} and~\ref{tab:fred}. With $g^u,g^V,g^\theta\in L^2(0,1;\mathbb C^3)$, rows in the order of the inputs $V_c,U_c,T_c$,
\begin{equation}\label{eq:controller}
\begin{split}
\begin{pmatrix}V_c(\alpha,t)\\[2pt]U_c(\alpha,t)\\[2pt]T_c(\alpha,t)\end{pmatrix}
&=\int_0^1\Big[g^u(y)\,u(\alpha,y,t)+g^V(y)\,V(\alpha,y,t)\\
&\qquad\quad+g^\theta(y)\,\theta(\alpha,y,t)\Big]\dd y .
\end{split}
\end{equation}
Written with functionals, \eqref{eq:controller} is
\begin{equation}\label{eq:fb}
V_c=K_V[X],\qquad U_c=K_U[X],\qquad T_c=K_T[X],
\end{equation}
where each gain functional is a sum of two parts, $K_V=\kappa_V+F_V$, $K_U=\kappa_U+F_U$, $K_T=\kappa_T+F_T$, produced by the two stages of the design: the parts $\kappa_V,\kappa_U,\kappa_T$ place the eigenvalues of the finitely many lowest modes of the plant, by Table~\ref{tab:static}, and the parts $F_V,F_U,F_T$ are the Fredholm gains that assign the remaining eigenvalues, by Table~\ref{tab:fred}. Only $F_U$ and $F_T$ act on infinitely many modes; $\kappa_V,\kappa_U,\kappa_T$ and $F_V$ act on finitely many.

At $\alpha=0$ the mean tangential velocity and the mean temperature are fed back through the two wall inputs,
\begin{subequations}\label{eq:meanfbsys}
\begin{gather}
T_{c,0}=\int_0^1k(1,\eta)\,\bar\theta(\eta)\,\dd\eta,\label{eq:meanfb}\\
k(y,\eta)=-q_0\,\eta\,\frac{I_1\!\big(\sqrt{q_0(y^2-\eta^2)}\big)}{\sqrt{q_0(y^2-\eta^2)}},\label{eq:meanfbk}\\
U_{c,0}=\int_0^1k_u(1,\eta)\,\bar u(\eta)\,\dd\eta,\label{eq:meanfbu}\\
k_u(y,\eta)=-\frac{q_u}{\Pra}\,\eta\,\frac{I_1\!\big(\sqrt{(q_u/\Pra)(y^2-\eta^2)}\big)}{\sqrt{(q_u/\Pra)(y^2-\eta^2)}},\label{eq:meanfbku}
\end{gather}
\end{subequations}
with the margins $q_0,q_u\ge0$ and the modified Bessel function $I_1$.

The gains are designed so that a $2\times2$ transformation of the Fredholm form
\begin{equation}\label{eq:transf}
\begin{pmatrix}W\\\Theta\end{pmatrix}(y)=\begin{pmatrix}V\\\theta\end{pmatrix}(y)-\int_0^1k(y,\eta)\begin{pmatrix}V_y\\V\\\theta\end{pmatrix}(\eta)\,\dd\eta ,
\end{equation}
with a kernel $k\in L^2\big((0,1)^2;\mathbb C^{2\times3}\big)$ (for $X\in\cX$, $V_y\in L^2(0,1)$, so the integral is a pairing in $L^2(0,1)$; its boundedness on $\cX$ is proved with the design), maps the closed loop onto the shifted Stokes and shifted heat equations, up to finitely many eigenvalues displaced by an arbitrarily small amount,
\begin{subequations}\label{eq:target}
\begin{align}
\Lap W_t&=\Pra\Lap^2W-q_1\Lap W,\label{eq:tW}\\
\Theta_t&=\Lap\Theta-q_2\Theta,\label{eq:tTheta}\\
0&=W(0)=W_y(0)=W(1)=W_y(1)\nonumber\\
&=\Theta(0)=\Theta(1).\label{eq:tbc}
\end{align}
\end{subequations}

\begin{theorem}[Three wall inputs stabilize the RB layer at any decay rate]\label{thm:phys}
Fix $\Pra>0$, $\Ra>0$, $L>0$ and $\omega>0$. Under the feedback \eqref{eq:controller}, \eqref{eq:meanfbsys} with the gains of Tables~\ref{tab:static} and~\ref{tab:fred}, the physical inputs $U_c(x,t)$, $T_c(x,t)$, $V_c(x,t)$ are real for real fields, the linearized closed loop \eqref{eq:NS}--\eqref{eq:NSbc} generates an analytic semigroup on $\mathcal H$, and every solution satisfies
\begin{equation}\label{eq:physdecay}
\begin{split}
&\|u(t)\|^2+\|v(t)\|^2+\|\theta(t)\|^2\\
&\qquad\le C\,e^{-2\omega t}\big(\|u(0)\|^2+\|v(0)\|^2+\|\theta(0)\|^2\big),
\end{split}
\end{equation}
where
\begin{itemize}
\item[$\bullet$] $C$ is the largest of the constants of the actuated wavenumbers, of the two mean transformations, and of the conversion factor $\chi\coloneqq\max(1,\Pra\Ra)/\min(1,\Pra\Ra)$ on the unactuated ones;
\item[$\bullet$] $\mathcal H_m\coloneqq\cX(\alpha_m)$ for $|m|>M$, $M$ of step~0, and, for $0<|m|\le M$, $\mathcal H_m$ is the set of $(V,\theta)\in H^1(0,1)\times L^2(0,1)$ with $V(0)=0$ whose wall value $V(1)$ is the one assigned by the row $K_V$ of \eqref{eq:fb}, both with the norm of Definition~\ref{def:ops};
\item[$\bullet$] $\mathcal H$ consists of the fields whose mean $(\bar u,\bar\theta)$ lies in $L^2(0,1)^2$, whose Fourier coefficient $X_m$ at each $m\ne0$ lies in $\mathcal H_m$ and is the conjugate of $X_{-m}$, and for which $\|(\bar u,\bar\theta)\|^2+\sum_{m\ne0}\alpha_m^{-2}\|X_m\|^2_{\mathcal H_m}$ is finite.
\end{itemize}
\end{theorem}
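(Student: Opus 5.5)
The proof decomposes the linearized closed loop by Fourier series in $x$. Because the linearization \eqref{eq:plant} is translation invariant and the feedback \eqref{eq:controller}, \eqref{eq:meanfbsys} acts wavenumber by wavenumber, the closed loop is a direct sum over $m\in\mathbb Z$ of decoupled systems on $\mathcal H_m$, plus the mean system at $\alpha=0$.

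\textbf{Reality of the inputs.} The gains at $-\alpha$ are the conjugates of those at $\alpha$, since the tables are built from plant data that conjugate under $\alpha\mapsto-\alpha$. Hence $X_{-m}=\overline{X_m}$ is preserved, and the synthesized $U_c,T_c,V_c$ are real.

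\textbf{Per-wavenumber exponential bounds.} It then suffices to prove, for each $m$, a bound $\|X_m(t)\|_{\mathcal H_m}^2\le C_m e^{-2\omega t}\|X_m(0)\|^2_{\mathcal H_m}$ with $\sup_m C_m<\infty$, together with analyticity of each block with sector and constants uniform in $m$. There are three kinds of modes.

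\emph{Mean, $\alpha=0$.} The mean equations are $\bar\theta_t=\bar\theta_{yy}$ and $\bar u_t=\Pra\,\bar u_{yy}$ (the mean of $-\Pra\Ra\theta_x$ vanishes), with Dirichlet data at $y=0$ and boundary control at $y=1$. The kernels \eqref{eq:meanfbk}, \eqref{eq:meanfbku} are the classical Volterra backstepping kernels for the reaction margins $q_0$ and $q_u$. They map the mean system to $\Theta_t=\Theta_{yy}-q_0\Theta$ and its velocity analogue, whose decay rates are $\pi^2+q_0$ and $\Pra\pi^2+q_u$. Choosing $q_0,q_u$ so that these exceed $\omega$, and using bounded invertibility of the Volterra maps, gives the bound for the mean.

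\emph{Unactuated band, $|m|>M$.} Here the inputs vanish and the generator is $\cA$. In the $\cX$ norm, the energy identity gives
\[
\tfrac{\dd}{\dd t}\|X\|_\cX^2\le -2\big(\min(\Pra,1)\,\pi^2+\alpha^2\big)\|X\|_\cX^2+(\text{cross terms}),
\]
where the cross terms come from the coupling $V\leftrightarrow\theta$. The weight $\Pra\Ra\alpha^2$ in the definition of $\|\cdot\|_\cX$ is chosen so that these cross terms cancel or symmetrize, since $\cA$ is self-adjoint in $\cX$ up to a sign structure. Their size is then $O(\sqrt{\Pra\Ra}\,\alpha)$ relative to the dissipation $\alpha^2$. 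Step~0 chooses $M$ so large that the dissipation dominates and the rate is at least $\omega$. Passing from the $\cX$ norm to the $L^2$ norms in \eqref{eq:physdecay} costs the factor $\chi$, via $\|u\|^2+\|V\|^2=\|V\|_\alpha^2/\alpha^2$.

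\emph{Actuated band, $0<|m|\le M$.} This is finitely many wavenumbers, so constants are automatically uniform. At each one I apply the two-stage design.
\begin{itemize}
\item[(i)] The static gains $\kappa$ place the finitely many lowest modes.
\item[(ii)] The Fredholm transformation \eqref{eq:transf} maps the closed loop to \eqref{eq:target}, up to finitely many poles deformed by a parameter $\varepsilon$. The target operator is diagonalizable with spectrum $\le -\omega$, has a Riesz basis, and generates an analytic semigroup on $\mathcal H_m$; the modified wall condition $V(1)=K_V[X]$ is encoded in $\mathcal H_m$.
\item[(iii)] Boundedness and invertibility of the transformation on $\mathcal H_m$ then transfer analyticity and the decay estimate back to the original variables, with $C_m=\|T\|^2\|T^{-1}\|^2$ times the target constant.
\end{itemize}

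\textbf{Main obstacle.} The main obstacle is step (iii): invertibility of the Fredholm map. The map is $I-K$ with $K$ built from the prescribed states, and the aim is to show it is Fredholm of index zero by proving the prescribed eigenfunction family is quadratically close to the plant's Riesz basis. The eigenvalues, however, come in close velocity--thermal pairs, so they must be assigned two at a time through $2\times2$ systems driven by the two inputs $U_c,T_c$. The resulting $2\times2$ matrices must be uniformly invertible; I would show this from the distinct boundary traces, at $y=1$, of the velocity and thermal eigenfunctions of each pair.

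Injectivity then reduces to the nonvanishing of a finite determinant $d(\varepsilon)$, which is analytic in the pole-deformation parameter $\varepsilon$. To show $d$ is not identically zero, I would check that it is nonzero in a limiting regime, for example $\varepsilon\to$ the value at which the assigned poles are far from the plant spectrum. Then $d$ has only finitely many zeros on a compact parameter range, and an arbitrarily small admissible $\varepsilon$ avoids them.

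Summing over $m$ with $\sup_m C_m\le C$ gives \eqref{eq:physdecay} and the analytic semigroup on $\mathcal H$.
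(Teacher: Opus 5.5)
Your assembly is the paper's. You decompose in Fourier modes, take conjugate gains at $-\alpha$, and treat the mean with the Volterra kernels. On the unactuated band you use self-adjointness of $\cA$ in $\cX$ with $\lambda_1\le-\min(\Pra,1)(\pi^2+\alpha^2)+\kappa_0$, $\kappa_0<\sqrt{\Pra\Ra}$, and the factor $\chi$. Your dissipation should read $\min(\Pra,1)(\pi^2+\alpha^2)$, since for $\Pra<1$ the velocity dissipates only at $\Pra\alpha^2$. For the actuated band the paper simply invokes Theorem~\ref{thm:main}.

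Where you instead sketch that theorem, the step you call the main obstacle fails as proposed. You would show the finite determinant is not identically zero by evaluating it where the assigned poles are far from the plant spectrum. Nothing makes it nonzero there. As a pole $\nu$ is pushed away, the prescribed state $(\nu-\cA'_{-1})^{-1}Bc$ has coordinates $b_j\cdot c/(\nu-\lambda'_j)$ with $|b_j\cdot c|\sim j$, spread over $j\lesssim|\nu|^{1/2}$. After normalization it converges weakly to zero, so the finite middle block degenerates rather than becoming invertible. Along the way the pole also crosses other eigenvalues of $\cA'$, where the columns blow up.

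The missing idea is to anchor the deformation at the plant's own eigenvalues. The paper moves each middle pole along $\nu(\xi)=\lambda_{k(\nu)}+\xi(\nu-\lambda_{k(\nu)})$ and scales the input as $c_\nu(\xi)=(\nu(\xi)-\lambda_{k(\nu)})\hat c_\nu$. The prescribed state then has a removable singularity at $\xi=0$, and $\psi_\nu(0)$ lies in the eigenspace of $\lambda_{k(\nu)}$. At $\xi=0$ the tail coordinates of the middle columns vanish, and the Schur complement is block diagonal with blocks $(\ip{B\hat c_\nu}{\tilde X'_k}_\cX)$ over each eigenspace.

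These blocks are made invertible by choosing $\hat c_\nu\in\mathbb C^3$. This is possible because no eigenvector of $\cA$ is annihilated by all three wall functionals. That is why $V_c$ must enter the Fredholm stage on the middle clusters: with $U_c,T_c$ alone, as in your sketch, $\zeta'''(1)$ is left free and the anchor block need not be invertible. The determinant is then meromorphic, not analytic, near $[0,2]$, with poles where a moving pole meets $\sigma(\cA')$. It is nonzero at $0$, so it has finitely many zeros and poles there.

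Your sketch also passes over three further steps.
\begin{itemize}
\item Reducing to a finite determinant needs the tail block to be invertible and independent of $\xi$. Hilbert--Schmidt closeness gives only Fredholmness; the paper obtains invertibility by placing the middle/tail split $n_*$ high enough that the tail block is a small perturbation of the block-diagonal overlap matrix.
\item Because $V(1)=V_c\neq0$, the target semigroup acts on $P_0X\in\cX$. It must be lifted by $x\mapsto x+K_V[x](h_\alpha,0)$, which adds the factor $1+\alpha\coth\alpha\,\|K_V\|^2$ to your $C_m$.
\item Because $B$ is unbounded, $\cA_{-1}+BK=R\widetilde\cS_\xi R^{-1}$ must be proved with domains (Lemma~\ref{lem:domain}), not only on eigenvectors.
\end{itemize}
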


Theorem~\ref{thm:phys} is proved in Section~\ref{sec:phys}, by assembling the wavenumbers; the design at a single wavenumber, the transformation it produces and the closed-loop spectrum there are Theorem~\ref{thm:main}, at the end of Section~\ref{sec:fredholm}. Step~0 of Table~\ref{tab:static} fixes the actuated band and the mean-field gains; steps 1--9 compute the gains at a fixed wavenumber $\alpha>0$ with $N$ plant modes retained, $N=\infty$ in Theorem~\ref{thm:main}: Table~\ref{tab:static} the plant data and pole placement on the low block, Table~\ref{tab:fred} the Fredholm stage on the clusters and the gain functions.

\begin{table}[t]
\caption{Gains, part 1: band, mean field, plant data and pole placement.}\label{tab:static}
\centering
{\renewcommand{\arraystretch}{1.25}\setlength{\arrayrulewidth}{0.2pt}\begin{tabular}{@{}r@{\hspace{0.7em}}p{\dimexpr\columnwidth-1.9em\relax}@{}}
\toprule
0 & \emph{Band and mean field.} $M$ the least integer such that the plant \eqref{eq:plant} at every $|m|>M$ has no eigenvalue above $-\omega$, finite by Proposition~\ref{prop:mean}(iii). Steps 1--9 at each $0<|m|\le M$, the conjugate gains at $-m$, and zero input at $|m|>M$. At $\alpha=0$, \eqref{eq:meanfbsys} with $q_0\ge\max(0,\omega-\pi^2)$ and $q_u\ge\max(0,\omega-\Pra\pi^2)$.\\[2pt]
\hdashline\noalign{\vskip3pt}
1 & \emph{Plant modes.} Solve $\cA X_j=\lambda_jX_j$, $j=1,\dots,N$, $X_j=(V_j,\theta_j)$, $\lambda_1\ge\lambda_2\ge\cdots$, normalized by $\|V_j\|_\alpha^2+\Pra\Ra\,\alpha^2\|\theta_j\|^2=1$ and orthogonalized in $\cX$ within each multiple eigenvalue. Modal coordinates of a state $X=(V,\theta)$, $-\Lap V=i\alpha\hat\omega$:
\begin{equation}\label{eq:rc-coord}
c_j[X]=i\alpha\int_0^1\hat\omega\,\overline{V_j}\,\dd y+\Pra\Ra\,\alpha^2\int_0^1\theta\,\overline{\theta_j}\,\dd y .
\end{equation}\vspace{-\baselineskip}\\[2pt]
\hdashline\noalign{\vskip3pt}
2 & \emph{Wall rows.} $b_j\coloneqq\big(b_0[V_j],\ b_1[V_j],\ b_2[\theta_j]\big)\in\mathbb C^3$, in the input order $(V_c,U_c,T_c)$, with
\begin{equation}\label{eq:functionals}
\begin{gathered}
b_0[\zeta]\coloneqq\Pra\,\zeta'''(1),\qquad
b_1[\zeta]\coloneqq i\alpha\Pra\,\zeta''(1),\\
b_2[\psi]\coloneqq-\Pra\Ra\,\alpha^2\,\psi'(1).
\end{gathered}
\end{equation}\vspace{-\baselineskip}\\[2pt]
\hdashline\noalign{\vskip3pt}
3 & \emph{Uncoupled modes.} $(\lambda^S_n,e_n)$ the eigenpairs of $S_\alpha$, $\|e_n\|_\alpha=1$; $\delta_m=-(m^2\pi^2+\alpha^2)$ and $\theta_m=\sqrt2\sin(m\pi y)/(\alpha\sqrt{\Pra\Ra})$. The joint list $\varsigma_1\ge\varsigma_2\ge\cdots$ of all $\lambda^S_n$ and $\delta_m$, with uncoupled eigenvectors $\tilde e_k\in\{(e_n,0),(0,\theta_m)\}$ and family $i(k)=1$ for $\lambda^S_n$, $i(k)=2$ for $\delta_m$; $E_{jk}\coloneqq\ip{\tilde e_k}{X_j}_\cX$. Target poles $\nu_k\coloneqq\varsigma_k-q_{i(k)}$, with shifts $q_1,q_2\ge\omega$ chosen so that
\begin{equation}\label{eq:shifts}
q_1-q_2\ne\lambda^S_n-\delta_m,\quad \lambda^S_n-q_1\notin\sigma(\cA),\quad \delta_m-q_2\notin\sigma(\cA)
\end{equation}
for all $n,m\ge1$; such shifts are dense in $[\omega,\infty)^2$ by Proposition~\ref{prop:design}.\\[2pt]
\hdashline\noalign{\vskip3pt}
4 & \emph{Low block.} $\mu_j\coloneqq\nu_j$, $j\le d$. $K_L\in\mathbb C^{3\times d}$ placing the eigenvalues of $\operatorname{diag}(\lambda_1,\dots,\lambda_d)+B_LK_L$ at $\mu_1,\dots,\mu_d$, $B_L$ with rows $b_1,\dots,b_d$; eigenvectors $x_j\in\mathbb C^d$, and $c_j\coloneqq K_Lx_j$.\\[2pt]
\hdashline\noalign{\vskip3pt}
5 & \emph{Pre-compensated basis.} The columns of $W\in\mathbb C^{N\times N}$ are the plant coordinates of the eigenvectors $X'_j$ of the pre-compensated plant, with eigenvalues $\lambda'_j$:
\begin{equation}\label{eq:rc-W}
W_{ij}=\begin{cases}(x_j)_i,&i,j\le d,\\ \dfrac{b_i\cdot c_j}{\mu_j-\lambda_i},&i>d,\ j\le d,\\ \delta_{ij},&j>d,\end{cases}
\qquad\lambda'_j\coloneqq\begin{cases}\mu_j,&j\le d,\\ \lambda_j,&j>d.\end{cases}
\end{equation}
Input rows in the dual basis: $\beta^{(3)}\coloneqq W^{-1}B\in\mathbb C^{N\times3}$, $B$ with rows $b_j$; $\beta$ its last two columns, of $U_c$ and $T_c$.\\
\bottomrule
\end{tabular}}
\end{table}

\begin{table}[t]
\caption{Gains at wavenumber $\alpha$, part 2: Fredholm stage and gain functions.}\label{tab:fred}
\centering
{\renewcommand{\arraystretch}{1.25}\setlength{\arrayrulewidth}{0.2pt}\begin{tabular}{@{}r@{\hspace{0.7em}}p{\dimexpr\columnwidth-1.9em\relax}@{}}
\toprule
6 & \emph{Clusters.} Join consecutive entries $\varsigma_k$, $\varsigma_{k+1}$, $k>d$, of the joint list when $|\varsigma_k-\varsigma_{k+1}|\le\epsilon\min\big(\sqrt{|\varsigma_k|},\sqrt{|\varsigma_{k+1}|}\big)$; the clusters $G$ are the connected components, as sets of positions in the joint list, which are also the plant indices of the cluster. $M_G\coloneqq\operatorname{diag}(\lambda_j)_{j\in G}$, $\mathbb B_G\coloneqq(\beta_j)_{j\in G}$. Clusters of positions below $n_*$, with $n_*$ large enough, are middle, the others tail; for a tail cluster, $\mathbb B_G^{(+)}$ is the inverse of $\mathbb B_G$ for a pair and its minimum-norm right inverse for a singleton.\\[2pt]
\hdashline\noalign{\vskip3pt}
7 & \emph{Closed-loop eigenvectors.} Columns $\Psi_{\cdot k}(\xi)$, in the coordinates of $\{X'_j\}$, and input vectors $c_k(\xi)\in\mathbb C^3$. Tail cluster $G\ni k$, with $v\coloneqq(E_{jk})_{j\in G}$:
\begin{equation}\label{eq:rc-c}
c_k\coloneqq\big(0,\ \mathbb B_G^{(+)}(\nu_k-M_G)v\big),\qquad
\Psi_{jk}\coloneqq\begin{cases}E_{jk},&j\in G,\\[2pt] \dfrac{\beta^{(3)}_j\cdot c_k}{\nu_k-\lambda'_j},&j\notin G,\end{cases}
\end{equation}
and $\nu_k(\xi)\coloneqq\nu_k$. Middle cluster $G$: a bijection $k\mapsto p(k)$ of $G$ onto itself, and $\hat c_k\in\mathbb C^3$ such that the matrix $\big(\beta^{(3)}_j\cdot\hat c_k\big)$, $k\in G$, $\lambda_j=\lambda_{p(k)}$, is invertible for each eigenvalue of $\cA$ in $G$;
\begin{gather}
\nu_k(\xi)\coloneqq\lambda_{p(k)}+\xi\big(\nu_k-\lambda_{p(k)}\big),\label{eq:rc-mid}\\
c_k(\xi)\coloneqq\big(\nu_k(\xi)-\lambda_{p(k)}\big)\hat c_k,\label{eq:rc-midc}\\
\Psi_{jk}(\xi)\coloneqq\frac{\nu_k(\xi)-\lambda_{p(k)}}{\nu_k(\xi)-\lambda'_j}\,\beta^{(3)}_j\cdot\hat c_k,\label{eq:rc-mid2}
\end{gather}
the fraction read as $1$ when $\lambda'_j=\lambda_{p(k)}$.
Low block, $k\le d$: $\Psi_{\cdot k}$ the $k$-th unit vector, $c_k\coloneqq0$, $\nu_k(\xi)\coloneqq\mu_k$.\\[2pt]
\hdashline\noalign{\vskip3pt}
8 & \emph{Fredholm gain.} With $N$ at a cluster boundary and $\xi\in[1,1+\delta)$ such that $\Psi(\xi)$ is invertible, $\delta$ small enough that the displaced poles stay to the left of $-\omega$, $\Xi(\xi)\coloneqq[c_1\cdots c_N]$ and, in plant coordinates,
\begin{equation}\label{eq:rc-F}
F\coloneqq \Xi(\xi)\,\Psi(\xi)^{-1}W^{-1}\in\mathbb C^{3\times N}.
\end{equation}\vspace{-\baselineskip}\\[2pt]
\hdashline\noalign{\vskip3pt}
9 & \emph{Gain functions.} $K\coloneqq[K_L\ \,0]+F$, $(V_c,U_c,T_c)^\top=\sum_jK_{\cdot j}\,c_j[X]$, and, for the rows $r=0,1,2$ of \eqref{eq:controller}, with $\phi_r=(\phi^V_r,\phi^\theta_r)\coloneqq\sum_j\overline{K_{r+1,j}}\,X_j$,
\begin{equation}\label{eq:gains}
g^u_r=-i\alpha\,\overline{(\phi^V_r)'},\qquad g^V_r=\alpha^2\,\overline{\phi^V_r},\qquad g^\theta_r=\Pra\Ra\,\alpha^2\,\overline{\phi^\theta_r}.
\end{equation}
At finite $N$, $(\phi^V_r)'(1)=0$, and the $u$-term integrates by parts onto $V$:
\begin{equation}\label{eq:rc-g}
\begin{gathered}
g^u_r=0,\qquad g^V_r=-\sum_{j\le N}K_{r+1,j}\,\overline{\Lap V_j},\\
g^\theta_r=\Pra\Ra\,\alpha^2\sum_{j\le N}K_{r+1,j}\,\overline{\theta_j}.
\end{gathered}
\end{equation}\vspace{-\baselineskip}\\
\bottomrule
\end{tabular}}
\end{table}

\subsection{Why three wall inputs, and what they achieve}

\begin{theorem}[Three wall inputs control every mode of the RB layer]\label{thm:three}
Fix $\alpha\ne0$, $\Pra$, $\Ra$. The states reachable from zero by the three wall inputs are dense in $\cX$, and every finite spectral block of $\cA$ is controllable from the three wall functionals \eqref{eq:functionals}.
\end{theorem}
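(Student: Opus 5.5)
The plan is to reduce both assertions to a unique-continuation property of the eigenfunctions of $\cA$ at the upper wall, via the modal form of the input-to-state map. I would take the facts announced in the abstract as given: $\cA$ is self-adjoint in $\cX$ with the weighted norm of Definition~\ref{def:ops}, has compact resolvent, and so has an orthonormal eigenbasis $\{X_j\}$ with real eigenvalues (step~1 of Table~\ref{tab:static}). If these are not yet proved at this point, they are the first step.

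First I would derive, for a classical solution of \eqref{eq:plant} with boundary data $(V_c,-i\alpha U_c,T_c)$, the modal identity
\begin{equation*}
\tfrac{\dd}{\dd t}c_j[X]=\lambda_j\,c_j[X]+b_j\cdot(V_c,U_c,T_c)^\top ,
\end{equation*}
with $c_j$ as in \eqref{eq:rc-coord} and $b_j$ as in \eqref{eq:functionals}. To get it, pair \eqref{eq:pV} with $\overline{V_j}$ and \eqref{eq:ptheta} with $\Pra\Ra\,\alpha^2\overline{\theta_j}$, and integrate by parts twice (respectively four times). The interior terms reproduce $\lambda_j c_j$ by the eigen-equation, since $\cA$ is symmetric in $\cX$. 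Only the upper-wall terms survive, because $V_j=V_j'=\theta_j=0$ at both walls and the state is homogeneous at $y=0$. These terms are $V(1)$ against $\Pra V_j'''(1)$, $V_y(1)=-i\alpha U_c$ against $\Pra V_j''(1)$, and $\theta(1)$ against $\Pra\Ra\,\alpha^2\theta_j'(1)$, with the signs and conjugations fixed by the computation. Getting these constants right is routine but must match \eqref{eq:functionals}; the subsequent argument only needs that each $b_j$ component is a nonzero multiple of $V_j'''(1)$, $V_j''(1)$ or $\theta_j'(1)$ respectively.

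The key lemma is the following. If $\lambda$ is an eigenvalue of $\cA$ and $X=(V,\theta)$ is an eigenvector with $b_0[V]=b_1[V]=b_2[\theta]=0$, then $X=0$. The eigen-equation is the sixth-order linear ODE system
\begin{equation*}
\Pra\Lap^2V-\lambda\Lap V-\Pra\Ra\,\alpha^2\theta=0,\qquad \Lap\theta-\lambda\theta+V=0 .
\end{equation*}
At $y=1$ we already have $V=V'=\theta=0$ from $\mathcal D(\cA)$, and the hypothesis adds $V''=V'''=\theta'=0$. That is full Cauchy data, so the solution vanishes identically. Two consequences follow. Every eigenspace has dimension at most $3$, since the map $X\mapsto(V''(1),V'''(1),\theta'(1))$ is injective on it. And for each eigenvalue the matrix whose rows are the $b_j$, with $j$ ranging over an orthonormal basis of the eigenspace, has full row rank: a vector $a$ in its left kernel gives the eigenvector $\sum_j\bar a_jX_j$, which is annihilated by all three functionals, hence $a=0$.

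Both claims of Theorem~\ref{thm:three} then follow. A finite spectral block is the diagonal system $\operatorname{diag}(\lambda_j)+B$ with rows $b_j$, and the Hautus test for such a system is exactly the row-rank condition above. For density, the orthogonal complement of the reachable set consists of those $Z=\sum_jz_jX_j\in\cX$ with
\begin{equation*}
\sum_j\bar z_j\,e^{\lambda_js}\,b_j=0\qquad\text{for all }s\ge0,
\end{equation*}
obtained from the modal identity with arbitrary smooth compactly supported inputs. Group the terms by distinct eigenvalues. Since $\lambda_j\to-\infty$, the sum converges for $s>0$, and uniqueness of Dirichlet series coefficients gives $\sum_{\lambda_j=\lambda}\bar z_jb_j=0$ for every $\lambda$. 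The rank property then forces $z=0$.

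The main obstacle I expect is making the duality rigorous for $L^2$ boundary inputs: defining the reachable set through a weak (transposition) solution and justifying the termwise Dirichlet-series argument, given that the $b_j$ may grow polynomially in $j$. I would handle this by testing only against $Z$ in a dense smooth subset, or by inserting a factor $e^{\lambda_j\epsilon}$, so that all sums converge absolutely.
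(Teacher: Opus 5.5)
Your proof is correct and takes essentially the paper's route. You use the same key lemma: an eigenvector whose three wall functionals vanish has six zero Cauchy data at $y=1$ and so vanishes. From it you get the Hautus test on every vector of each eigenspace and density by duality, where your time-domain Dirichlet-series uniqueness, with the $e^{\lambda_j\epsilon}$ factor for convergence, is just the Laplace-transform counterpart of the paper's Hahn--Banach argument through the residue identity \eqref{eq:residue}.
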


Theorem~\ref{thm:three} is proved in Section~\ref{sec:ref}. Its proof uses all three wall functionals: under $(b_1,b_2)$ alone one Cauchy datum at $y=1$, $\zeta'''(1)$, is left free. The two inputs that carry the Fredholm gains are the tangential velocity $U_c$ and the wall temperature $T_c$, the two whose wall coefficients grow linearly in the joint index by Proposition~\ref{prop:joint}(iii); wall heat flux in place of the wall temperature would give coefficients bounded in the index, and the coefficients of $V_c$ are smaller by the factor $\tanh(\alpha/2)$ on the modes symmetric about the mid-plane. So $V_c$ is driven by a finite-rank feedback: the pole placement of Proposition~\ref{prop:low}, which makes the low block controllable at every plant, and the Fredholm stage on finitely many middle clusters, which removes every rank condition on the design.

\section{Simulations}\label{sec:sim}

Or, Cortelezzi and Speyer \cite{OCS01} gave the first model-based feedback stabilization of the layer: linear-quadratic-Gaussian and $\mathcal H_\infty$ controllers that actuate only the temperature of the lower plate, estimate the state from measurements, and hold the state of rest to $\Ra\approx25{,}000$, $14.8$ times the critical value $1708$. Their result shows how hard it is to push stabilization an order of magnitude above the critical Rayleigh number. We simulate the recovery of the equilibrium from saturated convection at $\Ra=50{,}000$, $29$ times the critical value, and at $\Ra=10^6$ the release of saturated convection when the feedback is switched off.

The layer is simulated at $\Pra=7$, with period $L=4.032$ in $x$, by a Fourier--Chebyshev method, Crank--Nicolson for diffusion and Adams--Bashforth for advection and buoyancy, with the time step set by the flow and wall velocities.

\paragraph*{Recovery from saturated convection at $\Ra=50{,}000$} The equilibrium has $15$ unstable eigenvalues, in the wavenumbers $m=1,\dots,9$, listed in Table~\ref{tab:eig5}; the uncontrolled layer settles into steady convection rolls with three periods in $x$, Figure~\ref{fig:rec}, top. The grid has $128$ Fourier modes in $x$ and $96$ Chebyshev modes in $y$. The gains are recomputed every $2\times10^{-4}$ time units by steps~1, 2 and~4 of Table~\ref{tab:static}, with the plant linearized about the current mean temperature profile of the layer and each eigenvalue $\lambda_j$ to the right of $-\sigma$ placed at $-\lambda_j-2\sigma$. The shift $\sigma$ is scheduled on the kinetic energy $E$: it starts at $15$ and is raised in steps to $80$ as $E$ falls from $70\%$ to $10\%$ of its saturated value. On the wavenumbers $m=1,2,4,5$ adjacent to the rolls, the shift is $\sigma/4$ until the decrease of $E$ stalls, after which the full $\sigma$ is applied at every wavenumber. In the gain computation the tangential wall velocity is weighted $30$ times less, relative to the wall temperature, than at $\Ra=10^6$. Feedback switched on in saturated convection returns the layer to the equilibrium, Figure~\ref{fig:rec}: $E$ is below $10^{-7}$ of its saturated value by $t=0.24$. The tangential wall velocity stays within twice the largest velocity of the saturated flow, except for a peak of four times at $t=0.01$, and the wall temperature within $0.7$ of the temperature difference across the layer.

\begin{figure}[p]
\centering
\includegraphics[width=\columnwidth]{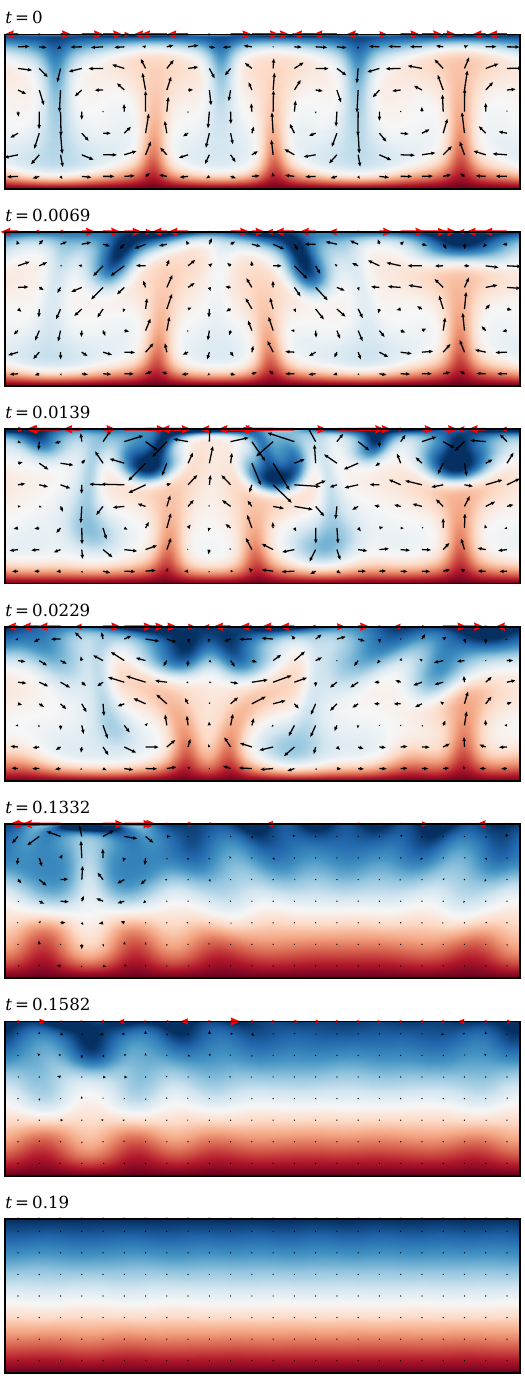}
\caption{Recovery of the equilibrium from saturated convection at $\Ra=50{,}000$, feedback switched on at $t=0$. Temperature (color, from $0$ at the upper plate to $1$ at the lower plate), velocity (black arrows) and wall velocity at $y=1$ (red arrows), all arrows to one scale. The vertical scale of the panels is stretched.}
\label{fig:rec}
\end{figure}

\begin{table}[t]
\caption{The $15$ unstable eigenvalues of the linearized RB layer at $\Pra=7$, $\Ra=50{,}000$, $L=4.032$.}\label{tab:eig5}
\centering
\begin{tabular}{@{}rrl@{}}
\toprule
$m$ & $\alpha_m$ & unstable eigenvalues\\
\midrule
1 & 1.56 & 131 \\
2 & 3.12 & 271, 43 \\
3 & 4.67 & 325, 89 \\
4 & 6.23 & 321, 110 \\
5 & 7.79 & 284, 106 \\
6 & 9.35 & 228, 81 \\
7 & 10.91 & 162, 41 \\
8 & 12.47 & 90 \\
9 & 14.02 & 16 \\
\bottomrule
\end{tabular}
\end{table}

\paragraph*{Release of convection at $\Ra=1{,}000{,}000$} The controlled runs use $192$ Fourier modes in $x$ and $96$ Chebyshev modes in $y$, time step $10^{-5}$, and the uncontrolled runs $256\times128$ modes, which the turbulent flow requires. The equilibrium of the RB layer has $85$ unstable eigenvalues, listed in Table~\ref{tab:eig}: in the wavenumbers $m=1,\dots,20$, at most five per wavenumber, $29$ of them above $1000$, the largest $2137$ at $m=5$.

The gains are computed by steps~1, 2 and~4 of Table~\ref{tab:static} at the wavenumbers $m=1,\dots,23$, with the eigenpairs and wall rows taken from the discretized linearization. The low block consists of the eigenvalues to the right of $-100$, and step~4 places each of them, $\lambda_j$, at $-\lambda_j-200$ in place of $\nu_j$, with the three wall inputs weighted in the gain computation; the eigenvalues to the left of $-100$ are left in place, $F=0$ in step~8. The bottom plate temperature is not used by the feedback.

From the equilibrium perturbed by a temperature disturbance of amplitude $10^{-6}$ on the wavenumbers $m=1,\dots,40$, the kinetic energy decays at the rate $200$ after an initial transient, down to round-off level in longer runs. When the feedback is switched off, the round-off perturbations grow at the rates of the $85$ unstable eigenvalues. The layer erupts within $0.014$ time units into the state of Figure~\ref{fig:sim}(b), with finer structure than the saturated convection of Figure~\ref{fig:sim}(a), which it approaches within a further $0.01$ time units. The controlled layer is not shown: its temperature is the equilibrium profile.

\begin{figure*}[t]
\centering
\includegraphics[width=.8\textwidth]{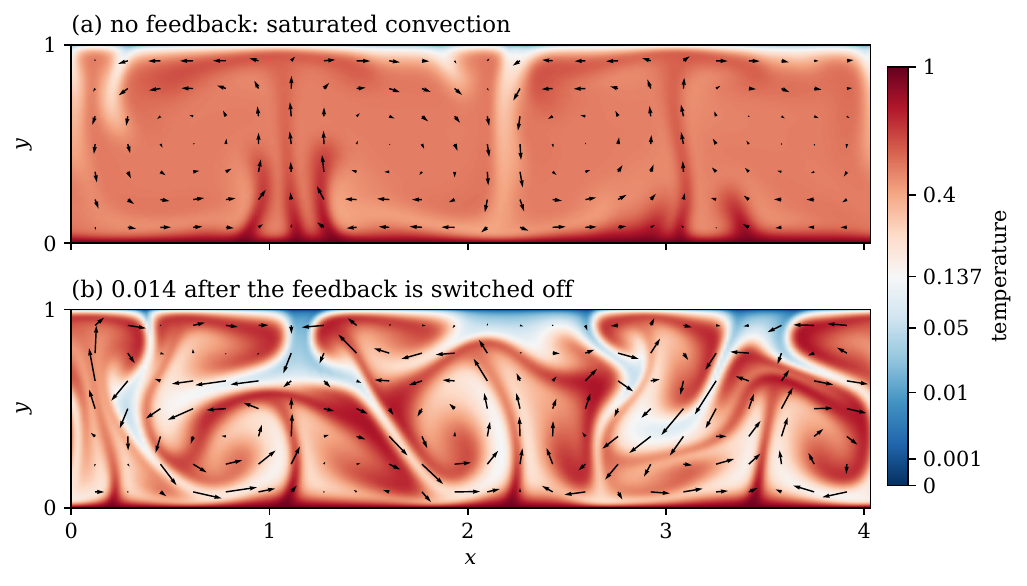}
\caption{Temperature and velocity of the RB layer at $\Ra=10^6$: (a) without feedback, saturated convection; (b) $0.014$ time units after the feedback is switched off. The arrows are the velocity vectors, drawn to the same scale in both panels. The color scale is $c=a\ln(1+T/\tau)/\ln(1+1/\tau)+(1-a)T$, with $\tau=0.001$, $a=0.63$, and is the same in both panels.}
\label{fig:sim}
\end{figure*}

\begin{table}[t]
\caption{The $85$ unstable eigenvalues of the linearized RB layer at $\Pra=7$, $\Ra=10^6$, $L=4.032$: $4$ of them above $2000$ and $25$ more above $1000$. Wavenumbers $\alpha_m=2\pi m/L$.}\label{tab:eig}
\centering
\begin{tabular}{@{}rrl@{}}
\toprule
$m$ & $\alpha_m$ & unstable eigenvalues\\
\midrule
1 & 1.56 & 960, 337, 58 \\
2 & 3.12 & 1652, 817, 340, 47 \\
3 & 4.67 & 1980, 1185, 614, 220 \\
4 & 6.23 & 2110, 1428, 836, 383, 53 \\
5 & 7.79 & 2137, 1567, 997, 519, 148 \\
6 & 9.35 & 2103, 1629, 1099, 620, 227 \\
7 & 10.91 & 2031, 1632, 1149, 684, 283 \\
8 & 12.47 & 1933, 1593, 1157, 715, 317 \\
9 & 14.02 & 1813, 1521, 1129, 715, 328 \\
10 & 15.58 & 1679, 1424, 1073, 688, 317 \\
11 & 17.14 & 1532, 1310, 994, 639, 287 \\
12 & 18.70 & 1377, 1181, 897, 571, 239 \\
13 & 20.26 & 1215, 1041, 786, 487, 176 \\
14 & 21.82 & 1048, 894, 664, 390, 100 \\
15 & 23.37 & 878, 741, 534, 283, 12 \\
16 & 24.93 & 707, 584, 397, 167 \\
17 & 26.49 & 535, 425, 255, 44 \\
18 & 28.05 & 364, 264, 110 \\
19 & 29.61 & 193, 102 \\
20 & 31.17 & 22 \\
\bottomrule
\end{tabular}
\end{table}

\section{Clusterwise Fredholm design}\label{sec:proof}

\subsection{How to read this section}

What this section works toward is Theorem~\ref{thm:main}, at the end of Section~\ref{sec:fredholm}: at a single wavenumber, and for a decay rate $\omega$ fixed in advance, the three wall inputs have gain functions in $L^2(0,1)$ for which the closed-loop spectrum is the uncoupled spectrum shifted past $-\omega$, and the velocity and the temperature decay at the rate $\omega$.

The design that produces those gains is backstepping, driven by the target \eqref{eq:target}: a bounded, boundedly invertible transformation $T$ of the state is required to map the closed loop onto it, so that the two have the same decay. The target is the plant's two subsystems with their coupling deleted and their spectra shifted by $q_1$ and $q_2$, which the designer sets, so for any rate $\omega$ specified in advance the whole state decays at that rate. The rate is not obtained by moving a few unstable modes: the infinitely many eigenvalues of both subsystems are shifted at once. The plant operator, the wall functionals through which the three inputs act on it, and the spectra of the two uncoupled subsystems are in Section~\ref{sec:ref}.

In a Volterra design the kernel of $T$ solves a partial differential equation and the feedback is that kernel evaluated at the wall. Here the transformation is built from its action on a basis, in Section~\ref{sec:fredholm}, since the plant admits no Volterra transformation onto this target. The target eigenvectors are the uncoupled eigenvectors $\tilde e_\nu$, so the intertwining forces $T^{-1}\tilde e_\nu$ to be the closed-loop eigenvector at the target eigenvalue $\nu$. Such a state is the response of the plant at the complex frequency $\nu$ to a wall input vector $c_\nu$, that is, the solution $\psi_\nu$ of $(\nu-\cA)\psi_\nu=Bc_\nu$, and it is a closed-loop eigenvector once the feedback returns $c_\nu$ when applied to $\psi_\nu$. The pair $(\psi_\nu,c_\nu)$, one for every target eigenvalue, takes the place of the kernel equations; the feedback is read off from the whole family, which is where the gain functions of Theorem~\ref{thm:main} come from.

What decides whether $T$ is bounded with a bounded inverse is how far the states $\psi_\nu$ are from the eigenvectors of the plant, since $T^{-1}$ carries the orthonormal basis $\{\tilde e_\nu\}$ to $\{\psi_\nu\}$. Shifting one eigenvalue changes its eigenvector little when the neighboring eigenvalues are far away, and $\psi_\nu$ is then close to the plant eigenvector it comes from. The spacing of the RB spectrum, however, is not uniform: a velocity eigenvalue falls close to a temperature eigenvalue at infinitely many indices, at every wavenumber, and with a single input the two states prescribed there would be nearly parallel. The spectrum is therefore cut into clusters in Section~\ref{sec:clusters}, each cluster a single eigenvalue or a close pair from the two subsystems, and each is assigned as a unit by the two inputs $U_c$ and $T_c$, which keeps the two states apart. How large the wall coefficients are on a cluster, which is what makes the family $\{\psi_\nu\}$ summable, and the pole placement of the finitely many lowest eigenvalues, whose spacing is of the order of the shifts $q_1,q_2$, are in Section~\ref{sec:Bm}. Proofs of supporting results are in Appendix~\ref{app:proofs}.

\paragraph*{An LTI ODE analogy of the section's results} In the eigenbasis of the plant operator, which Proposition~\ref{prop:plant} provides, the plant at one wavenumber is a diagonal system with three input columns,
\begin{equation}\label{eq:diag}
\dot x_j=\lambda_jx_j+b_j\cdot u,\qquad u=(V_c,U_c,T_c),
\end{equation}
with $\lambda_1\ge\lambda_2\ge\cdots\to-\infty$ real and $b_j\in\mathbb C^3$ the wall coefficients of the $j$-th mode. Everything up to Proposition~\ref{prop:inv} concerns this pair.

The eigenvalues of the uncoupled operator are known in closed form, and its wall coefficients grow like $j$ (Proposition~\ref{prop:joint}). The coupling between the two subsystems is a bounded operator, so each plant eigenvalue sits within a fixed distance of an uncoupled one (Proposition~\ref{prop:cluster}). Since consecutive eigenvalues of each family are $O(j)$ apart, that distance resolves them one by one, except where a velocity eigenvalue and a temperature eigenvalue nearly collide; those two are kept together in a cluster and treated as a $2\times2$ block, for which the two inputs $U_c$ and $T_c$ give a uniformly invertible input matrix (Proposition~\ref{prop:Bm}, Corollary~\ref{cor:BG}).

Assigning the closed-loop eigenvalues then follows the finite-dimensional rule. If $A+BF$ is to have the eigenvalue $\nu$ with the eigenvector $\psi$, then $(\nu-A)\psi=Bc$ with $c=F\psi$, so
\begin{equation}\label{eq:assign}
\psi=\sum_j\frac{b_j\cdot c}{\nu-\lambda_j}\,X_j ,
\end{equation}
and the input direction $c$ selects the eigenvector. Here \eqref{eq:assign} is imposed at every target eigenvalue at once. Each $\nu$ is an uncoupled eigenvalue shifted by $q_1$ or $q_2$, and $|b_j|$ grows like $j$, so $|c|$ is of order $1/j$ and $\psi$ differs from the uncoupled eigenvector by $O(1/j)$, square-summably over the spectrum (Lemma~\ref{lem:eig}). That is what makes the transformation of these eigenvectors onto the uncoupled basis bounded with a bounded inverse, which is Proposition~\ref{prop:inv}. The finitely many lowest modes, whose spacing is of the order of the shifts, are placed beforehand by a finite gain matrix, as in a finite-dimensional plant (Proposition~\ref{prop:low}).

The section, then, in one sentence: the plant is diagonal with input columns growing like $j$; it is a bounded perturbation of an explicit uncoupled, resolved mode by mode except at the collisions, which are handled in pairs by two inputs; every closed-loop eigenvector is prescribed by \eqref{eq:assign}, and the growth of the input columns makes the prescribed family close enough to the uncoupled basis for the transformation to be invertible.

\subsection{Plant operator, wall functionals, and uncoupled spectra}\label{sec:ref}

With $V_c\ne0$ the velocity profile does not vanish at the upper wall and is not an element of $\cX$; the physical state space adds one direction to $\cX$.

\begin{proposition}[Plant is self-adjoint, and its inputs act through three wall functionals]\label{prop:plant}
Fix $\alpha\ne0$.
\begin{itemize}
\item[(i)] $\cA$ is self-adjoint on $\cX$ with compact resolvent. Its eigenvalues are real, of finite multiplicity, its eigenvectors form an orthonormal basis of $\cX$, and it commutes with the reflection $\mathcal P(V,\theta)(y)\coloneqq(V,\theta)(1-y)$, so every eigenvector can be taken of definite parity.
\item[(ii)] Let $(V,\theta)$ solve the resolvent problem \eqref{eq:plant} at $s\in\mathbb C$ with wall data $(V_c,U_c,T_c)$, and let $(e,\vartheta)$ be a normalized eigenvector of $\cA$ with eigenvalue $\lambda$. Then
\begin{equation}\label{eq:residue}
\begin{split}
(s-\lambda)\,\ip{(V,\theta)}{(e,\vartheta)}_\cX
&=\Pra\,e'''(1)\,V_c+i\alpha\Pra\,e''(1)\,U_c\\
&\quad-\Pra\Ra\,\alpha^2\,\vartheta'(1)\,T_c .
\end{split}
\end{equation}
\item[(iii)] Let $\hat\cX\coloneqq\{(V,\theta)\in H^1(0,1)\times L^2(0,1):V(0)=0\}$ with the norm of Definition~\ref{def:ops}, and $h_\alpha(y)\coloneqq\sinh(\alpha y)/\sinh\alpha$. Then $\hat\cX=\cX\oplus\spn\{(h_\alpha,0)\}$ orthogonally, with $P_0$ the orthogonal projection on $\cX$, $\|(h_\alpha,0)\|^2=\alpha\coth\alpha$, and $\|u\|^2+\|V\|^2+\Pra\Ra\|\theta\|^2=\|X\|^2/\alpha^2$ on $\hat\cX$. For a classical solution $X=(V,\theta)$ of \eqref{eq:plant} with wall data $V_c,U_c,T_c$, the component $\tilde X\coloneqq X-V_c(h_\alpha,0)\in\cX$ satisfies $\ip{\tilde X}{X_j}_\cX=\ip{X}{X_j}_\cX$ and
\begin{equation}\label{eq:modal-ode}
\begin{split}
\frac{\dd}{\dd t}\ip{\tilde X}{X_j}_\cX&=\lambda_j\ip{\tilde X}{X_j}_\cX+b_0[X_j]V_c\\
&\quad+b_1[X_j]U_c+b_2[X_j]T_c
\end{split}
\end{equation}
for every eigenvector $X_j$ of $\cA$, $\lambda_j$ its eigenvalue.
\end{itemize}
\end{proposition}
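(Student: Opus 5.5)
Part (i) will be proved by writing $\cA=\cS+\mathcal K$ and checking symmetry on $\mathcal D(\cA)$ in the $\cX$ inner product. For $S_\alpha$ on $\Ha$ with $\ip{V}{W}_\alpha=-\ip{\Lap V}{W}$, two integrations by parts give $\ip{S_\alpha V}{W}_\alpha=-\Pra\ip{\Lap^2V}{W}=-\Pra\ip{\Lap V}{\Lap W}$. This uses $V=V_y=W=W_y=0$ at both walls, and it is symmetric and negative. $\Lap$ is symmetric on $\mathcal D_\theta$ in the weighted $L^2$. The coupling is where the weight $\Pra\Ra\,\alpha^2$ matters. We have $\ip{-\Pra\Ra\,\alpha^2\Lap^{-1}\theta}{W}_\alpha=\Pra\Ra\,\alpha^2\ip{\theta}{W}$, and the $(2,1)$ entry gives $\Pra\Ra\,\alpha^2\ip{V}{\vartheta}$, so the off-diagonal part is symmetric. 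It is also bounded, since $\|\theta\|\mapsto\|\Lap^{-1}\theta\|_\alpha$ is bounded and $\Ha\hookrightarrow L^2$. The operator $\cS$ is self-adjoint with compact resolvent (clamped Stokes and Dirichlet heat), so by Kato--Rellich $\cA$ is self-adjoint on the same domain. Compactness of the resolvent persists under a bounded perturbation. The spectral theorem then gives real eigenvalues, finite multiplicities and an orthonormal eigenbasis. The reflection $\mathcal P$ commutes with $\Lap$, with its Dirichlet inverse, and with the boundary conditions, and it is unitary on $\cX$. Hence $\mathcal P$ preserves each finite-dimensional eigenspace, and diagonalizing the involution there gives eigenvectors of definite parity.

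Part (ii) is a Green identity. Take $(V,\theta)$ solving $s\Lap V=\Pra\Lap^2V-\Pra\Ra\,\alpha^2\theta$ and $s\theta=\Lap\theta+V$ with the given wall data. Pair the first equation with $-\bar e$ in $L^2$ and the second with $\Pra\Ra\,\alpha^2\bar\vartheta$, then add. The left side is $s\ip{(V,\theta)}{(e,\vartheta)}_\cX$, provided we use $-\ip{\Lap V}{e}=\ip{V}{e}_\alpha$, which holds because $e\in H^1_0$. On the right, move every derivative onto $(e,\vartheta)$ and use the eigen-equations for $(e,\vartheta)$, whose real coupling coefficients come out under conjugation because $\lambda$ is real. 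This reproduces $\lambda\ip{(V,\theta)}{(e,\vartheta)}_\cX$ plus boundary terms at $y=1$. The terms at $y=0$ vanish, and so do those carrying $e(1)$, $e'(1)$ and $\vartheta(1)$. The survivors are $\Pra[V e'''-V_ye'']$ from the fourth-order part (up to sign) and $-\Pra\Ra\,\alpha^2\theta(1)\vartheta'(1)$ from the heat part. Substituting $V(1)=V_c$ and $V_y(1)=-i\alpha U_c$ gives \eqref{eq:residue}. I will track signs carefully here, with $\Lap e=-\bar{}$ conventions and conjugates taken as real, since the eigenvectors may be chosen real.

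Part (iii) starts from $\ip{h_\alpha}{W}_\alpha=[h_\alpha'\bar W]_0^1-\ip{\Lap h_\alpha}{W}=0$ for $W\in\Ha$, since $\Lap h_\alpha=0$. This gives the orthogonal decomposition, and $\|h_\alpha\|_\alpha^2=h_\alpha h_\alpha'|_0^1=\alpha\coth\alpha$. The norm identity follows from $\|u\|^2=\|V_y\|^2/\alpha^2$. Orthogonality then gives $\ip{\tilde X}{X_j}_\cX=\ip{X}{X_j}_\cX$. For the modal ODE, differentiate $\ip{X}{X_j}_\cX$ in time and apply the computation of (ii) with $s\,(V,\theta)$ replaced by $(V_t,\theta_t)$. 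The only identity used there was integration by parts of the equations, so it carries over verbatim and yields \eqref{eq:modal-ode} with $b_0,b_1,b_2$ of \eqref{eq:functionals}.

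The main obstacle I expect is the bookkeeping of the boundary terms and signs in (ii), especially showing that $V_c$ enters only through $e'''(1)$ even though $V\notin\Ha$. The pairing $-\ip{\Lap V_t}{e}$ still equals $\ip{V_t}{e}_\alpha$ because $e$, not $V$, vanishes at the walls. Classical regularity of the solutions justifies every integration by parts.
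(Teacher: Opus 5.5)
Your argument is correct, and for (i) and (ii) it is the paper's. You write $\cA=\cS+\mathcal K$ with $\mathcal K$ bounded and symmetric in the weighted inner product. You then use a Lagrange identity in which the two cross terms cancel and only the wall terms at $y=1$ survive.

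One small point in (i): symmetry and negativity of $S_\alpha$ on $\mathcal D$ are not yet self-adjointness. The paper obtains self-adjointness, together with the domain $H^3\cap H^2_0$ and the compact resolvent, from the closed form $\Pra\int_0^1\Lap V\,\overline{\Lap W}\,\dd y$ on $H^2_0$ and one-dimensional elliptic regularity. You should invoke that, or the bijectivity of $S_\alpha:\mathcal D\to\Ha$.

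The routes differ in the modal equation of (iii). You differentiate $\ip{X}{X_j}_\cX$ for the physical state and rerun the identity of (ii). That identity uses the clamped and Dirichlet conditions of $(e,\vartheta)$ but never $V(1)=0$, so $V_c$ appears at once with the coefficient $\Pra e'''(1)=b_0[X_j]$.

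The paper instead lifts, $\tilde V=V-V_ch_\alpha$, to the system \eqref{eq:liftsys}. That system has inhomogeneous Neumann data at both walls and the source $V_ch_\alpha$ in the heat equation. The same computation then gives $V_c$ the coefficient $\Pra\big(\alpha\coth\alpha\,e''(1)-\tfrac{\alpha}{\sinh\alpha}e''(0)+\Ra\,\alpha^2\int_0^1\vartheta h_\alpha\,\dd y\big)$. This must then be recognized as $\Pra e'''(1)$ by integrating the eigen-equation against $h_\alpha$, which is \eqref{eq:b0id}.

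Your route is shorter and shows directly why the $V_c$ coefficient is the functional already appearing in \eqref{eq:residue}. The paper's detour pays for itself later in two places:
\begin{itemize}
\item the lifted system is what characterizes $\mathcal D(A_K)$ in the proof of Theorem~\ref{thm:main};
\item \eqref{eq:b0id}, which expresses $b_0$ through the second derivatives at both walls and an $L^2$ pairing, is what Proposition~\ref{prop:Bm}(ii) uses, with the reflection $\mathcal P$, to bound $b_0$ by $Cj$ on the cluster subspaces.
\end{itemize}
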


Throughout, the eigenvalues $\lambda_1\ge\lambda_2\ge\cdots$ of $\cA$ are listed with multiplicity with a real orthonormal eigenbasis $(X_j)_{j\ge1}$, which exists by (i) of Proposition~\ref{prop:plant} because $\cA$ has real coefficients, and $b_i[\cdot]$, $i=0,1,2$, are the wall functionals \eqref{eq:functionals}. By (ii), the wall rows of step~2 of Table~\ref{tab:static} are the coefficients of the three inputs, and the functionals \eqref{eq:functionals} are conjugate to $V_c$, $U_c$, $T_c$.

\begin{definition}[Joint index]\label{def:joint}
$\{\varsigma_j\}_{j\ge1}$ is the sequence of the eigenvalues $\{\lambda^S_n\}_{n\ge1}$ of $S_\alpha$ and $\{\delta_m\}_{m\ge1}$ of $\Lap$ on $\mathcal D_\theta$, listed in decreasing order; $j(\lambda^S_n)$, $j(\delta_m)$ are the positions.
\end{definition}

\begin{proposition}[Uncoupled spectra, and linear growth of their wall coefficients in the joint index]\label{prop:joint}
Fix $\alpha\ne0$.
\begin{itemize}
\item[(i)] $S_\alpha$ is self-adjoint and negative on $\Ha$ with simple eigenpairs $(e_n,\lambda^S_n)_{n\ge1}$,
\begin{equation}\label{eq:lam}
\lambda^S_n=-\Pra\big(\sigma_n^2+\alpha^2\big),
\end{equation}
$\sigma_n$ the increasing positive roots of
\begin{subequations}\label{eq:rootsys}
\begin{align}
\sigma\sin\tfrac\sigma2+\alpha\tanh\tfrac\alpha2\cos\tfrac\sigma2&=0\qquad(\text{symmetric}),\label{eq:roots}\\
\sigma\cos\tfrac\sigma2-\alpha\coth\tfrac\alpha2\sin\tfrac\sigma2&=0\qquad(\text{antisymmetric}),\label{eq:roots2}
\end{align}
\end{subequations}
\begin{equation}\label{eq:sigma}
\begin{gathered}
\sigma_n=(n+1)\pi-\frac{c_n}{\sigma_n}+O(\sigma_n^{-3}),\\
c_n=\begin{cases}2\alpha\tanh(\alpha/2),&n+1\ \text{even},\\2\alpha\coth(\alpha/2),&n+1\ \text{odd},\end{cases}
\end{gathered}
\end{equation}
the symmetric roots being those with $n+1$ even. $\Lap$ on $\mathcal D_\theta$ is self-adjoint and negative on $L^2$ with eigenpairs
\begin{equation}\label{eq:delta}
\begin{gathered}
\theta_m=\frac{\sqrt2\sin m\pi y}{\alpha\sqrt{\Pra\Ra}},\\
\delta_m=-\big(m^2\pi^2+\alpha^2\big),\qquad m\ge1,
\end{gathered}
\end{equation}
so that $\|(0,\theta_m)\|_\cX=1$; $\theta_m$ is symmetric about $y=\tfrac12$ for $m$ odd and antisymmetric for $m$ even. With $e_n$ normalized in $\Ha$,
\begin{subequations}\label{eq:trsys}
\begin{gather}
|e_n''(1)|=\sqrt2\,\sigma_n\big(1+O(\sigma_n^{-2})\big),\label{eq:etr}\\
|e_n'''(1)|=\tfrac{\sqrt2}{2}\,c_n\,\sigma_n\big(1+O(\sigma_n^{-2})\big),\label{eq:etr3}\\
|\theta_m'(1)|=\frac{\sqrt2\,m\pi}{\alpha\sqrt{\Pra\Ra}} .\label{eq:ttr}
\end{gather}
\end{subequations}
\item[(ii)] The joint index satisfies
\begin{subequations}\label{eq:jidxsys}
\begin{gather}
j(\lambda^S_n)=\big(1+\Pra^{1/2}\big)n+O(1),\label{eq:jidx}\\
j(\delta_m)=\big(1+\Pra^{-1/2}\big)m+O(1),\label{eq:jidx2}
\end{gather}
\end{subequations}
\item[(iii)] The wall coefficients of the uncoupled modes satisfy
\begin{subequations}\label{eq:bsys}
\begin{gather}
|b_1[e_n]|=\sqrt2\,\alpha\Pra\,\sigma_n\big(1+O(n^{-2})\big)\asymp j,\label{eq:bgrow}\\
|b_2[\theta_m]|=\sqrt2\,\pi\,\alpha\sqrt{\Pra\Ra}\;m\asymp j,\label{eq:bgrow2}\\
|b_0[e_n]|=\tfrac{\sqrt2}{2}\,\Pra\,c_n\,\sigma_n\big(1+O(n^{-2})\big)\asymp j .\label{eq:b0grow}
\end{gather}
\end{subequations}
Under wall heat flux, $-k_{\rm th}\theta_y(1)=q_c$, the uncoupled thermal eigenvectors are $\sqrt2\sin(m+\tfrac12)\pi y$, $m\ge0$, and the wall coefficient of $q_c$ is $\sqrt2\,\alpha\sqrt{\Pra\Ra}\,/k_{\rm th}$, bounded in $m$.
\end{itemize}
\end{proposition}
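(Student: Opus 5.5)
The plan is to prove each item by direct computation from the clamped Stokes eigenproblem and the Dirichlet heat eigenproblem, exploiting the reflection symmetry about $y=\tfrac12$.

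\emph{Self-adjointness and the characteristic equations.} I would begin with (i). On $\Ha$ the inner product is $\ip{V}{W}_\alpha=\ip{-\Lap V}{W}$. Hence for $V,W\in\mathcal D$,
\[
\ip{S_\alpha V}{W}_\alpha=-\Pra\ip{\Lap^2V}{W}=-\Pra\ip{\Lap V}{\Lap W}.
\]
This uses two integrations by parts whose boundary terms vanish because $W=W_y=0$ at both walls. The right-hand side is symmetric and negative, and the compact resolvent follows from the fourth-order elliptic regularity. The eigenproblem $\Pra\Lap^2e=\lambda\Lap e$ with clamped conditions has characteristic roots $\pm i\sigma,\pm\alpha$, where $\lambda=-\Pra(\sigma^2+\alpha^2)$, which gives \eqref{eq:lam}.

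Since the problem commutes with $y\mapsto1-y$, I would split into two parity classes:
\[
e=A\Big(\tfrac{\cos\sigma(y-\frac12)}{\cos(\sigma/2)}-\tfrac{\cosh\alpha(y-\frac12)}{\cosh(\alpha/2)}\Big)
\]
in the symmetric case, and its $\sin/\sinh$ analogue in the antisymmetric case. These satisfy $e(1)=0$ by construction. The condition $e'(1)=0$ becomes $\sigma\tan\frac\sigma2+\alpha\tanh\frac\alpha2=0$, which is \eqref{eq:roots}, and the antisymmetric analogue is \eqref{eq:roots2}. Simplicity follows because each parity class is one-dimensional per root, and the two classes interlace.

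\emph{Root asymptotics and wall traces.} In the symmetric case, $\tan\frac\sigma2=-\alpha\tanh(\alpha/2)/\sigma$ is small, so $\sigma=2k\pi-2\alpha\tanh(\alpha/2)/\sigma+O(\sigma^{-3})$. The antisymmetric case is the same with $\cot$ and $\coth$. This gives \eqref{eq:sigma}.

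For the traces, the hyperbolic part of $e$ has coefficient $A\cos(\sigma/2)/\cosh(\alpha/2)$, and it contributes $O(A)$ to every norm and trace. The trigonometric part gives $\|e\|_\alpha^2=\tfrac12A^2\sigma^2(1+O(\sigma^{-2}))$, so $A=\sqrt2/\sigma$ up to $O(\sigma^{-2})$ relative error. Then:
\begin{itemize}
\item $|e''(1)|=A\sigma^2(1+O(\sigma^{-2}))=\sqrt2\sigma$, which is \eqref{eq:etr};
\item $e'''(1)=A\sigma^3\sin\frac\sigma2+O(A)$, and the root equation turns $\sigma\sin\frac\sigma2$ into $\alpha\tanh\frac\alpha2\cdot O(1)$, giving $|e'''(1)|=\sqrt2\,\alpha\tanh(\alpha/2)\,\sigma=\tfrac{\sqrt2}{2}c_n\sigma$, which is \eqref{eq:etr3}.
\end{itemize}
The thermal pairs \eqref{eq:delta} and the trace \eqref{eq:ttr} are explicit, with normalization in the weighted $\cX$-norm.

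\emph{Joint index.} For (ii), the position of $\lambda^S_n$ in the joint list is
\[
n+\#\{m:\delta_m\ge\lambda^S_n\}=n+\big\lfloor\sqrt{\Pra(\sigma_n^2+\alpha^2)-\alpha^2}/\pi\big\rfloor=n+\Pra^{1/2}\sigma_n/\pi+O(1).
\]
Since $\sigma_n=(n+1)\pi+O(1/n)$, this gives \eqref{eq:jidx}, with ties contributing at most $O(1)$. Symmetrically, the position of $\delta_m$ is $m+\#\{n:\lambda^S_n\ge\delta_m\}=m+\Pra^{-1/2}m+O(1)$, which is \eqref{eq:jidx2}.

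\emph{Wall coefficients.} Item (iii) is then the product of \eqref{eq:functionals} with \eqref{eq:trsys}. Combined with (ii), each of $\sigma_n$ and $m$ is $\asymp j$.

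For the heat-flux remark, replace the Dirichlet condition at $y=1$ by the Neumann condition. The eigenfunctions become $\sqrt2\sin(m+\tfrac12)\pi y$. Redoing the Green identity behind \eqref{eq:residue} with a prescribed $\theta_y(1)$ leaves the boundary term $\vartheta(1)\theta_y(1)$ in place of $\vartheta'(1)\theta(1)$. Since $|\vartheta(1)|$ is bounded in $m$, the coefficient of $q_c$ is bounded in $m$.

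The step I expect to need the most care is the uniform $O(\sigma_n^{-2})$ relative error in $|e_n'''(1)|$. The leading term arises only after a cancellation through the root equation, and the hyperbolic correction is of the same order $O(A)$ as individual pieces of that term. I would expand $\sin\frac\sigma2$ and $\cos\frac\sigma2$ to second order using \eqref{eq:sigma} before collecting terms.
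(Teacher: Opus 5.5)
Your proposal is correct and follows essentially the paper's own route: the parity split about $y=\tfrac12$, the root equations from the clamped conditions, the $\tan$-asymptotics, the $\Ha$-normalized trigonometric amplitude $\sqrt2/\sigma_n$ with \eqref{eq:roots} used to reduce $e_n'''(1)$, direct counting for the joint index, and the Green identity with $\vartheta(1)$ in the flux case. The step you flag is actually exact --- in your parametrization $e_n''(1)=-A(\sigma_n^2+\alpha^2)$, and \eqref{eq:roots} gives $e_n'''(1)=\alpha\tanh\tfrac\alpha2\,e_n''(1)$, which is the identity behind \eqref{eq:e3} --- so the only real estimate is the normalization, where the trigonometric--hyperbolic cross term in $\|e_n\|_\alpha^2$ must be shown to be $O(A^2)$ (one integration by parts, using $\Lap\cosh\alpha z=0$, leaves only a boundary term), because bounding it by Cauchy--Schwarz, or treating the hyperbolic part as an $O(A)$ perturbation of the norm, gives only an $O(\sigma_n^{-1})$ relative error.
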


\begin{proof}[Proof of Theorem~\ref{thm:three}]
No eigenvector $(\zeta,\psi)\ne0$ of $\cA$ is annihilated by the three wall functionals. The eigenproblem of $\cA$ is the sixth-order system
\begin{equation}\label{eq:eig}
\Pra\Lap^2\zeta-\lambda\Lap\zeta=\Pra\Ra\,\alpha^2\psi,\qquad\Lap\psi-\lambda\psi=-\zeta,
\end{equation}
with Cauchy data $\zeta,\zeta',\zeta'',\zeta''',\psi,\psi'$ at $y=1$. The domain gives $\zeta(1)=\zeta'(1)=\psi(1)=0$; the three functionals give $\zeta''(1)=\zeta'''(1)=\psi'(1)=0$; uniqueness for linear ordinary differential systems gives $(\zeta,\psi)\equiv0$. Density follows by Hahn--Banach from \eqref{eq:residue}, and controllability of finite blocks by the Hautus test.
\end{proof}

\subsection{Clustering of the plant spectrum}\label{sec:clusters}

The uncoupled operator $\cS=\operatorname{diag}(S_\alpha,\Lap)$ of Definition~\ref{def:ops} has the eigenvectors $\tilde e_\nu\in\{(e_n,0)\}\cup\{(0,\theta_m)\}$ and the eigenvalues $\lambda^S_n$, $\delta_m$, listed jointly as $\varsigma_1\ge\varsigma_2\ge\cdots$ (Definition~\ref{def:joint}). The target \eqref{eq:target} is $\cS$ with each of its two subsystems shifted, and the plant is $\cS$ plus the coupling.

\begin{definition}[Clusters]\label{def:clusters}
For $\epsilon\in(0,\epsilon_0)$, $\epsilon_0\coloneqq\tfrac12\pi\min(1,\sqrt{\Pra})$, join two consecutive entries $a,b$ of the joint list $\varsigma_1\ge\varsigma_2\ge\cdots$ when $|a-b|\le\epsilon\min(\sqrt{|a|},\sqrt{|b|})$; the clusters are the connected components. For a cluster $G$ of index $j$, $n_G$ is the number of its elements, $D_G$ is its diameter, $g_G\coloneqq\dist(G,\sigma(\cS)\setminus G)$, $\Gamma_G$ is the circle centered at the midpoint of $G$ with radius $r_G\coloneqq\tfrac12D_G+\tfrac12g_G$, and $Q^0_G$, $Q_G$ are the spectral projections of $\cS$ and $\cA$ for the interior of $\Gamma_G$.
\end{definition}

\begin{proposition}[Plant spectrum splits into clusters close to the uncoupled clusters]\label{prop:cluster}
Fix $\alpha\ne0$ and $\epsilon\in(0,\epsilon_0)$.
\begin{itemize}
\item[(i)] $\mathcal K\coloneqq\cA-\cS$, $\mathcal K(V,\theta)=(-\Pra\Ra\,\alpha^2\Lap^{-1}\theta,\ V)$, is a bounded self-adjoint operator on $\cX$ with
\begin{equation}\label{eq:kappa0}
\|\mathcal K\|\le\kappa_0\coloneqq\frac{\alpha\sqrt{\Pra\Ra}}{\sqrt{\pi^2+\alpha^2}} ,
\end{equation}
and $|\lambda_j-\varsigma_j|\le\kappa_0$ for every $j$.
\item[(ii)] There is $j_0$ such that every cluster of index $j\ge j_0$ has at most two elements, of different families, with diameter $D_G\le\epsilon\sqrt{|\varsigma_j|}$ and distance $g_G\coloneqq\dist(G,\sigma(\cS)\setminus G)>\epsilon\sqrt{|\varsigma_j|}(1-o(1))$, $\sqrt{|\varsigma_j|}\asymp j$, and
\begin{equation}\label{eq:rG}
r_G\asymp j,\qquad\dist(\Gamma_G,\sigma(\cS))\ge\tfrac12g_G\asymp j .
\end{equation}
\item[(iii)] There is $j_1\ge j_0$ such that for $j\ge j_1$ the circle $\Gamma_G$ encloses exactly $n_G$ eigenvalues of $\cA$, counted with multiplicity, and
\begin{equation}\label{eq:QQ}
\|Q_G-Q^0_G\|\le\frac{C_G\kappa_0}{j}<\tfrac12,\qquad\sum_G\|Q_G-Q^0_G\|^2<\infty ,
\end{equation}
with $C_G$ depending on $\epsilon,\alpha,\Pra$. In particular $\sum_\nu\|(I-Q_{G(\nu)})\tilde e_\nu\|^2<\infty$ over the high uncoupled eigenvectors.
\end{itemize}
\end{proposition}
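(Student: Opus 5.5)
The plan is to treat (i) as a direct computation in the $\cX$ inner product, (ii) as bookkeeping on the explicit spectra of Proposition~\ref{prop:joint}, and (iii) as Kato-type contour perturbation theory on the circles $\Gamma_G$. Proposition~\ref{prop:plant}(i) is used for the self-adjointness of $\cA$ and Proposition~\ref{prop:joint} for the asymptotics.

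\emph{Step (i).} Write $\cA-\cS=\mathcal K$. Its first component is $-\Pra\Ra\,\alpha^2\Lap^{-1}\theta$, which lies in $H^2\cap H^1_0\subset\Ha$, and its second is $V\in L^2$, so $\mathcal K$ maps $\cX$ to itself. For symmetry, use $\ip{V}{W}_\alpha=-\ip{\Lap V}{W}$. This gives
\[
\ip{\mathcal K(V,\theta)}{(W,\phi)}_\cX=\Pra\Ra\,\alpha^2\big(\ip{\theta}{W}+\ip{V}{\phi}\big),
\]
which is symmetric under $(V,\theta)\leftrightarrow(W,\phi)$ after conjugation. For the bound, Cauchy--Schwarz gives $\big|\ip{\mathcal K X}{Y}_\cX\big|\le\Pra\Ra\,\alpha^2(\|\theta\|\|W\|+\|V\|\|\phi\|)$. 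Poincaré gives $\|W\|\le\|W\|_\alpha/\sqrt{\pi^2+\alpha^2}$, and $\|\phi\|=\|(0,\phi)\|_\cX/(\alpha\sqrt{\Pra\Ra})$. Hence each term is bounded by $\frac{\alpha\sqrt{\Pra\Ra}}{\sqrt{\pi^2+\alpha^2}}$ times the corresponding product of $\cX$-norm components. A final Cauchy--Schwarz over the two terms gives $\|\mathcal K\|\le\kappa_0$. Since $\cA=\cS+\mathcal K$, with both self-adjoint, bounded above and with compact resolvent, Weyl's inequality via min--max gives $|\lambda_j-\varsigma_j|\le\|\mathcal K\|$ when both lists are ordered decreasingly with multiplicity.

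\emph{Step (ii).} By \eqref{eq:lam}, \eqref{eq:sigma} and \eqref{eq:delta}, consecutive entries within one family are spaced by $\asymp n$ (resp.\ $m$). This is $\gg\epsilon\sqrt{|\varsigma|}$ because $\sqrt{|\lambda^S_n|}\approx\sqrt{\Pra}\,n\pi$ and the gap is $\approx2\Pra\pi^2 n$; the choice $\epsilon<\epsilon_0$ ensures the gap exceeds $\epsilon\sqrt{|\varsigma|}$ with room. Two same-family entries are therefore never joined. A chain of three entries would need two of them in the same family, and since these are at least a gap apart, the chain would force a same-family link, which is excluded. So clusters have at most two elements, from different families, and $D_G\le\epsilon\sqrt{|\varsigma_j|}$ by definition. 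The distance $g_G$ exceeds $\epsilon\sqrt{|\varsigma_j|}(1-o(1))$ because the neighbors are not joined; the $o(1)$ comes from replacing $\min(\sqrt{|a|},\sqrt{|b|})$ by $\sqrt{|\varsigma_j|}$, and these differ by $O(1)$ relative to $j$. Proposition~\ref{prop:joint}(ii) gives $\sqrt{|\varsigma_j|}\asymp j$. Then $r_G\asymp j$, and $\dist(\Gamma_G,\sigma(\cS))\ge r_G-\tfrac12D_G=\tfrac12g_G$ on the inside. On the outside the same bound holds because the nearest outer point is at least $\tfrac12D_G+g_G$ from the midpoint.

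\emph{Step (iii).} This is the main quantitative step. On $\Gamma_G$ the resolvent satisfies $\|(z-\cS)^{-1}\|\le 2/g_G\lesssim 1/j$, so for $j$ large $\kappa_0\|(z-\cS)^{-1}\|<\tfrac12$. The Neumann series then gives $z\in\rho(\cA)$ and
\[
(z-\cA)^{-1}-(z-\cS)^{-1}=(z-\cS)^{-1}\mathcal K(z-\cA)^{-1},
\]
with $\|(z-\cA)^{-1}\|\le 4/g_G$. Integrating over $\Gamma_G$, of length $2\pi r_G\asymp j$, gives
\[
\|Q_G-Q^0_G\|\le\frac{1}{2\pi}\cdot2\pi r_G\cdot\frac{2}{g_G}\cdot\kappa_0\cdot\frac{4}{g_G}\lesssim\frac{\kappa_0}{j}.
\]
For $j\ge j_1$ this is below $\tfrac12<1$, so the ranks of $Q_G$ and $Q^0_G$ agree and equal $n_G$. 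This counts the enclosed eigenvalues of $\cA$. Square-summability follows because the clusters are indexed by distinct $j$, so $\sum_G 1/j^2<\infty$. The last claim follows from $(I-Q_G)\tilde e_\nu=(Q^0_G-Q_G)\tilde e_\nu$.

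The delicate point is the $1/j$ decay, which relies on $g_G\asymp j$ uniformly; that is exactly what (ii) delivers.
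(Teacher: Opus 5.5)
Your proposal is correct and follows the paper's proof essentially step for step: the symmetric off-diagonal form with Poincar\'e and min--max for (i), the same-family spacing $2\pi\min(1,\sqrt{\Pra})\sqrt{|\varsigma|}(1+o(1))$ against the joining threshold for (ii), and the second resolvent identity integrated over $\Gamma_G$ plus the rank argument for (iii). Two differences are cosmetic: your bilinear-form bound replaces the paper's two block estimates, and your Neumann-series resolvent bound replaces its localization of $\sigma(\cA)$ within $\kappa_0$ of $\sigma(\cS)$. The one place to tighten is the three-element case in (ii): when the two same-family entries are separated by an entry of the other family, no same-family link occurs, and the contradiction instead comes from the triangle inequality, since their distance is at most $2\epsilon\sqrt{|\varsigma|}(1+o(1))<4\epsilon_0\sqrt{|\varsigma|}$, which is below the same-family gap.
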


\begin{lemma}[Eigenvectors in one cluster have opposite parity at unit Prandtl number]\label{lem:parity}
There is $n_0(\alpha)$ such that for $n\ge n_0$, if $\sigma_n$ lies within $\pi/2$ of $m\pi$ then $e_n$ is symmetric about $y=\tfrac12$ if and only if $m$ is even, $\theta_m$ is symmetric if and only if $m$ is odd, and
\begin{equation}\label{eq:zero}
\ip{\Lap^{-1}\theta_m}{e_n}_\alpha=\ip{e_n}{\theta_m}=0 .
\end{equation}
At $\Pra=1$ every high cluster of Definition~\ref{def:clusters} is a pair $\{\lambda^S_{m-1},\delta_m\}$ with
\begin{equation}\label{eq:gap1}
\lambda^S_{m-1}-\delta_m=m^2\pi^2-\sigma_{m-1}^2=2c_{m-1}+O(m^{-2}),
\end{equation}
bounded uniformly in $m$, and the two eigenvectors of the pair are of opposite parity. At other Prandtl numbers a pair need not have opposite parity: at $\Pra=4$, $\lambda^S_n-\delta_{2(n+1)}=8c_n-3\alpha^2+O(n^{-2})$ for even $n$, and both eigenvectors are antisymmetric.
\end{lemma}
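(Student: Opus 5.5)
The plan is to read parity off the root equations \eqref{eq:rootsys} and the asymptotics \eqref{eq:sigma}, and then obtain the orthogonality \eqref{eq:zero} from the reflection symmetry $\mathcal P$ of Proposition~\ref{prop:plant}(i). The two Prandtl-number statements then reduce to expanding $\sigma_n^2$.

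\emph{Parity and orthogonality.} By \eqref{eq:sigma}, $\sigma_n=(n+1)\pi+O(1/n)$. Choose $n_0(\alpha)$ so that $|\sigma_n-(n+1)\pi|<\pi/2$ for all $n\ge n_0$. Then $\sigma_n$ lies within $\pi/2$ of $m\pi$ only for $m=n+1$. By Proposition~\ref{prop:joint}(i), $e_n$ is symmetric exactly when $\sigma_n$ is a root of \eqref{eq:roots}, that is, when $n+1=m$ is even. The function $\sin(m\pi y)$ satisfies $\sin(m\pi(1-y))=(-1)^{m+1}\sin(m\pi y)$, so $\theta_m$ is symmetric exactly when $m$ is odd. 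Hence $e_n$ and $\theta_m$ have opposite parity.

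The reflection $\mathcal P$ preserves both $L^2(0,1)$ and the Dirichlet problem, so it commutes with $\Lap$ and with $\Lap^{-1}$. Therefore $\Lap^{-1}\theta_m$ has the parity of $\theta_m$. Inner products (in $L^2$ or in $\Ha$) of a symmetric and an antisymmetric function vanish, since $\mathcal P$ is unitary for both. This gives \eqref{eq:zero}. As a cross-check, integration by parts gives $\ip{f}{g}_\alpha=-\ip{\Lap f}{g}$ for $f\in\mathcal D_\theta$, $g\in\Ha$, so $\ip{\Lap^{-1}\theta_m}{e_n}_\alpha=-\ip{\theta_m}{e_n}$ and the two equalities in \eqref{eq:zero} are the same statement.

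\emph{The case $\Pra=1$.} By \eqref{eq:lam} and \eqref{eq:delta}, $\lambda^S_{m-1}-\delta_m=m^2\pi^2-\sigma_{m-1}^2$. Squaring \eqref{eq:sigma} with $n=m-1$ gives $\sigma_{m-1}^2=m^2\pi^2-2c_{m-1}+O(m^{-2})$, which is \eqref{eq:gap1}. Since $c_{m-1}\le 2\alpha\coth(\alpha/2)$, the gap is bounded uniformly in $m$. It is therefore below $\epsilon\sqrt{|\delta_m|}\approx\epsilon m\pi$ for large $m$, so the two entries are joined.

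It remains to show the cluster contains nothing else. The neighbors in the joint list are $\delta_{m\pm1}$ and $\lambda^S_{m-2},\lambda^S_m$. Their distances to the pair are $(2m\pm1)\pi^2+O(1)$, because $\sigma_{m-2}^2\approx(m-1)^2\pi^2$, $\sigma_m^2\approx(m+1)^2\pi^2$, and consecutive $\delta$'s differ by about $2m\pi^2$. These exceed $\epsilon m\pi$ because $\epsilon<\epsilon_0=\pi/2$. So every high cluster is exactly $\{\lambda^S_{m-1},\delta_m\}$, and its two eigenvectors have opposite parity by the first part.

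\emph{The case $\Pra=4$.} Here $\lambda^S_n-\delta_{2(n+1)}=4(n+1)^2\pi^2-4\sigma_n^2-3\alpha^2$. Squaring \eqref{eq:sigma} again gives $8c_n-3\alpha^2+O(n^{-2})$. For even $n$, $n+1$ is odd, so $e_n$ is antisymmetric. The index $2(n+1)$ is even, so $\theta_{2(n+1)}$ is antisymmetric as well.

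\emph{Main obstacle.} The delicate step is the uniformity of $n_0$ and of the $O(\cdot)$ terms. The expansion \eqref{eq:sigma} must hold with an error uniform in the parity class, so that $\sigma_n$ is pinned to $(n+1)\pi$ and not to a neighboring multiple of $\pi$. I would first verify this directly from \eqref{eq:roots}--\eqref{eq:roots2}: on each interval $((n+\tfrac12)\pi,(n+\tfrac32)\pi)$ the relevant root equation has exactly one root, by monotonicity of $\sigma\tan(\sigma/2)$ and $\sigma\cot(\sigma/2)$ there.
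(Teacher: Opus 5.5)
Your proof is correct and is essentially the paper's: parity comes from \eqref{eq:sigma} and the sine formula, \eqref{eq:zero} from the reflection invariance of $\Lap^{-1}$ and the unitarity of $\mathcal P$, and both gap formulas from squaring \eqref{eq:sigma} (the paper multiplies by $m\pi+\sigma_{m-1}$, the same computation), with your $\Pra=4$ expression $4(n+1)^2\pi^2-4\sigma_n^2-3\alpha^2$ having the correct sign. Your direct check that the neighbours lie about $(2m\pm1)\pi^2$ away replaces the paper's appeal to Proposition~\ref{prop:cluster}(ii) and is fine; one caution is that the interval-by-interval root count in your last paragraph holds only for $n$ large relative to $\alpha$ (at large $\alpha$ the low roots sit near $n\pi$), which is why $n_0$ depends on $\alpha$, though you do not need that count since \eqref{eq:sigma} is already available.
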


\begin{proof}
By \eqref{eq:sigma}, for $n$ large the root $\sigma_n$ nearest $m\pi$ is on the symmetric branch when $m$ is even (at small $n$ this fails: at $\alpha=100$ the first symmetric root is within $0.07$ of $\pi$); $\sin m\pi(\tfrac12+z)=\pm\sin m\pi z$ for $m$ even and $\pm\cos m\pi z$ for $m$ odd. Functions of opposite parity are orthogonal in $L^2$ and in $\Ha$, and $\Lap^{-1}$ preserves parity. \eqref{eq:gap1} is \eqref{eq:sigma} multiplied by $m\pi+\sigma_{m-1}$; a bounded gap is below the threshold $\epsilon\sqrt{|\varsigma_j|}$ of Proposition~\ref{prop:cluster}(ii) for $j$ large, so the two lie in one cluster. The $\Pra=4$ example is $4\sigma_n^2-4(n+1)^2\pi^2$ with \eqref{eq:sigma}.
\end{proof}

\subsection{Wall coefficients and low-block pole placement}\label{sec:Bm}

Let $G$ be a high cluster with plant eigenvalues $\{\lambda_j\}_{j\in G}$. The plant wall matrix is
\begin{equation}\label{eq:PhiG}
\Phi_G\coloneqq\big(b_1[X_j],\ b_2[X_j]\big)_{j\in G}\in\mathbb C^{n_G\times2}.
\end{equation}
Since $\{X_j\}_{j\in G}$ is an orthonormal basis of $Q_G\cX$, the singular values of $\Phi_G$ are those of the map $\Phi:Y\mapsto(b_1[Y],b_2[Y])$ from $Q_G\cX$ to $\mathbb C^2$. On the uncoupled subspace $Q^0_G\cX$, in the uncoupled basis, $\Phi$ has the matrix $\operatorname{diag}(b_1[e_n],b_2[\theta_m])$ for a pair and the row $(b_1[e_n],0)$ or $(0,b_2[\theta_m])$ for a singleton, because $b_1$ acts on the velocity component only and $b_2$ on the thermal component only; both nonzero entries are of order $j$ by \eqref{eq:bgrow}.

\begin{proposition}[Wall coefficients of the plant modes grow linearly on every cluster]\label{prop:Bm}
Fix $\alpha\ne0$ and $\epsilon\in(0,\epsilon_0)$.
\begin{itemize}
\item[(i)] For $s\notin\sigma(\cA)$ and $i\in\{1,2\}$, the resolvent problem \eqref{eq:plant} at $s$ with $U_c=1$, $V_c=T_c=0$ for $i=1$, and with $T_c=1$, $V_c=U_c=0$ for $i=2$, has a unique solution $Y_{i,s}\in\cX$, and
\begin{equation}\label{eq:dual}
\begin{gathered}
\big|b_i\big[(z-\cA)^{-1}f\big]\big|=\big|\ip{Y_{i,\bar z}}{f}\big|,\\
f\in\cX,\quad z\notin\sigma(\cA).
\end{gathered}
\end{equation}
The same holds for $\cS$ in place of $\cA$, with solutions $Y^0_{i,s}$. There is $C_Y$, depending on $\alpha,\Pra,\Ra,\epsilon$, such that for every high cluster $G$ of index $j\ge j_1$ and every $s\in\Gamma_G$
\begin{equation}\label{eq:respbound}
\|Y_{i,s}\|\le2\|Y^0_{i,s}\|\le C_Y,\qquad i=1,2 .
\end{equation}
\item[(ii)] There are $c_1,C_1>0$ and $j_2\ge j_1$, depending on $\alpha,\Pra,\Ra,\epsilon$, such that $|b_j|\le C_1j$ for every $j\ge1$.
\item[(iii)] For every high cluster $G$ of index $j\ge j_2$, $\sigma_{\min}(\Phi_G)\ge2c_1j$ if $n_G=2$ and $|\Phi_G|\ge2c_1j$ if $n_G=1$.
\end{itemize}
\end{proposition}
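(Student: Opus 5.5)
The plan is to control every wall functional through the resolvent and a contour integral over $\Gamma_G$. The duality \eqref{eq:dual} turns a wall coefficient into an inner product with a bounded resolvent solution, so every estimate reduces to bounds on $\|Y_{i,s}\|$ for $s\in\Gamma_G$. Part (iii) then follows by a perturbation argument about the explicit uncoupled matrix $\operatorname{diag}(b_1[e_n],b_2[\theta_m])$.

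\emph{Step 1: existence and duality in (i).} For $s\notin\sigma(\cA)$, lift the unit wall datum by a fixed smooth profile $\ell_i$ that carries it and vanishes near $y=0$. Writing $Y_{i,s}=\ell_i+Z$, the remainder $Z\in\mathcal D(\cA)$ solves $(s-\cA)Z=$ (an explicit forcing in $\cX$), which has a unique solution since $s\notin\sigma(\cA)$. For the duality, apply the Green identity behind \eqref{eq:residue} to the pair $Y_{i,\bar z}$ and $Z\coloneqq(z-\cA)^{-1}f\in\mathcal D(\cA)$. Its only boundary terms are at $y=1$, namely the wall data of $Y_{i,\bar z}$ multiplied by $\Pra Z'''(1)$, $i\alpha\Pra Z''(1)$ and $-\Pra\Ra\,\alpha^2Z_\theta'(1)$. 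This gives $b_i[Z]=\pm\overline{\ip{Y_{i,\bar z}}{f}}$ (the inner product being that of $\cX$). The identical computation with $\cS$ gives the uncoupled version.

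\emph{Step 2: the bound \eqref{eq:respbound}.} $Y_{i,s}-Y^0_{i,s}$ has homogeneous wall data and satisfies $(s-\cA)(Y_{i,s}-Y^0_{i,s})=\mathcal K Y^0_{i,s}$. Hence
\[
Y_{i,s}=\big(I+(s-\cA)^{-1}\mathcal K\big)Y^0_{i,s}.
\]
On $\Gamma_G$, Proposition~\ref{prop:cluster}(i),(ii) give $\dist(s,\sigma(\cA))\ge\tfrac12g_G-\kappa_0\gtrsim j$. Therefore $\|(s-\cA)^{-1}\mathcal K\|\le\kappa_0/(cj)\le1$ for $j\ge j_1$, which yields $\|Y_{i,s}\|\le2\|Y^0_{i,s}\|$.

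For $Y^0_{1,s}$, expand in the $\Ha$-orthonormal eigenbasis $e_n$. By \eqref{eq:residue} with $\mathcal K=0$,
\[
\|Y^0_{1,s}\|^2=\sum_n\frac{|b_1[e_n]|^2}{|s-\lambda^S_n|^2}.
\]
By \eqref{eq:bgrow}, $|b_1[e_n]|\asymp n$. Since $|\lambda^S_n-\lambda^S_{n'}|\gtrsim|n^2-n'^2|$ and $\dist(s,\sigma(\cS))\gtrsim j$, the $O(1)$ terms with index near $j$ contribute $O(1)$. The others contribute at most $\sum_n n^2/(n^2-n_s^2)^2=O(1)$ uniformly, where $n_s\asymp j$ is the nearest index. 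The thermal case $Y^0_{2,s}$ is the same computation with \eqref{eq:bgrow2} and the $\delta_m$.

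\emph{Step 3: part (ii).} For $j$ in a high cluster $G$, write $b_i[X_j]=b_i[Q_GX_j]=\frac1{2\pi i}\oint_{\Gamma_G}b_i[(z-\cA)^{-1}X_j]\,\dd z$. By \eqref{eq:dual} and \eqref{eq:respbound} this is bounded by $r_GC_Y\asymp j$. The coefficient $b_0$ is handled identically, with a third solution $Y_{0,s}$ carrying $V_c=1$ and the same lifting-and-perturbation argument; there $h_\alpha$ is used and the residue formula \eqref{eq:residue} still applies. The finitely many low indices only enlarge $C_1$.

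\emph{Step 4: part (iii), which I expect to be the main obstacle.} The task is to show that the coupling perturbs $\Phi$ by only $O(1)$ while the uncoupled entries are $\asymp j$. For $f\in Q^0_G\cX$, the difference between $\Phi Q_Gf$ and $\Phi Q^0_Gf$ is
\[
\frac1{2\pi i}\oint_{\Gamma_G}\Phi\big[(z-\cA)^{-1}\mathcal K(z-\cS)^{-1}f\big]\dd z .
\]
By duality and \eqref{eq:respbound} it is bounded by
\[
r_G\,C_Y\,\kappa_0\,\sup_{\Gamma_G}\|(z-\cS)^{-1}\|\lesssim j\cdot j^{-1}=O(1).
\]
On $Q^0_G\cX$ the map $\Phi$ is $\operatorname{diag}(b_1[e_n],b_2[\theta_m])$, or a single nonzero entry for a singleton, with $\sigma_{\min}\ge c'j$ by Proposition~\ref{prop:joint}(iii). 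Hence $\Phi\circ(Q_G|_{Q^0_G\cX})$ has smallest singular value at least $c'j-O(1)$.

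Since $\|Q_G-Q^0_G\|<\tfrac12$ by \eqref{eq:QQ}, the map $Q_G|_{Q^0_G\cX}:Q^0_G\cX\to Q_G\cX$ is an isomorphism with inverse of norm at most $2$. Composing, $\sigma_{\min}(\Phi_G)\ge(c'j-O(1))/2\ge2c_1j$ for $j\ge j_2$, with $c_1$ taken smaller than $c'/4$. The delicate point is the uniformity of the $O(1)$ term. It rests entirely on \eqref{eq:respbound} holding on the whole circle $\Gamma_G$ and on $\dist(\Gamma_G,\sigma(\cS))\asymp j$. If that turns out to be too weak for pairs, I would first try shrinking $\epsilon$, which enlarges the relative gap $g_G$.
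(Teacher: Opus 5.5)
Your arguments for (i), for (iii), and for the $U_c$, $T_c$ half of (ii) are the paper's. The duality comes from lifting plus the Green identity behind \eqref{eq:residue}. The bound \eqref{eq:respbound} comes from Parseval over the uncoupled eigenbasis together with $(s-\cA)(Y_{i,s}-Y^0_{i,s})=\mathcal KY^0_{i,s}$. The upper bound is the contour estimate \eqref{eq:bQ}, and the $O(1)$ perturbation of $\Phi$ comes from \eqref{eq:QQint} with $r_G/g_G$ bounded. You parametrize from $Q^0_G\cX$ and transport by $Q_G$, whereas the paper starts from $Q_G\cX$ and uses $\|Q^0_GY\|\ge1-C_G\kappa_0/j$; these are equivalent.

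The genuine difference is $b_0$, and your route for it is valid but not ``identical'' to the other two inputs.

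\emph{The paper's route.} It never builds a $V_c$-response. It extends \eqref{eq:b0id} by linearity to $Q_G\cX$, which expresses $b_0$ through $\zeta''(1)$, $\zeta''(0)$ and $\int_0^1\eta h_\alpha\,\dd y$. It then bounds $\zeta''(1)$ by \eqref{eq:bQ}, and $\zeta''(0)$ by \eqref{eq:bQ} applied to $\mathcal PY$, using that the reflection is unitary and commutes with $Q_G$.

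\emph{Your route.} The third solution $Y_{0,s}$ has $V(1)=1$, so it lies in $\hat\cX$, not in $\cX$. This has three consequences:
\begin{itemize}
\item[$\bullet$] its uncoupled norm is $\sum_n|b_0[e_n]|^2/|s-\lambda^S_n|^2+\alpha\coth\alpha$, the last term being the $(h_\alpha,0)$ component;
\item[$\bullet$] the comparison step uses $\mathcal K$ extended to $\hat\cX$, which is still bounded but not by $\kappa_0$;
\item[$\bullet$] you need \eqref{eq:b0grow}.
\end{itemize}
More importantly, \eqref{eq:dual} is false for $b_0$ on general $f$. Repeating the Green computation with $g=(\zeta,\eta)\in\mathcal D(\cA)$, the quantity $\ip{Y_{0,s}}{(\bar s-\cA)g}$ is the conjugate of $\Pra\big(\alpha\coth\alpha\,\zeta''(1)-\tfrac{\alpha}{\sinh\alpha}\zeta''(0)+\Ra\,\alpha^2\int_0^1\eta h_\alpha\,\dd y\big)$. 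This equals $\Pra\zeta'''(1)$ only on eigenvectors, which is precisely \eqref{eq:b0id}. Your step survives because (ii) evaluates $b_0$ only at the $X_j$, where \eqref{eq:residue} applies; you should state that restriction explicitly.

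\emph{Comparison.} Your route treats all three inputs uniformly and needs no reflection symmetry. The paper's stays in $\cX$ and reuses the $b_1$ estimate at no extra cost.

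Two small slips, neither affecting the conclusion. First, in Step 4 the bound $2$ on the norm of $(Q_G|_{Q^0_G\cX})^{-1}$ gives $\|f\|\le2\|Q_Gf\|$. That is the wrong direction for a lower bound on $|\Phi Y|/\|Y\|$. What you need is $\|Q_G\|\le1$, which gives $\|Q_Gf\|\le\|f\|$ and hence $|\Phi Y|\ge(c'j-O(1))\|Y\|$ directly, without the factor $\tfrac12$. Second, the fallback of shrinking $\epsilon$ is unnecessary. What enters is $r_G/g_G$, which is at most $2$ for every $\epsilon\in(0,\epsilon_0)$ once $j$ is large. The guaranteed size of $g_G$ itself is proportional to $\epsilon$, so shrinking $\epsilon$ would lower that lower bound rather than raise it.
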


\begin{proof}[Proof of Proposition~\ref{prop:Bm}(ii)--(iii)]
(ii), inputs $U_c$ and $T_c$. Let $G$ be a high cluster of index $j\ge j_1$ and $Y\in Q_G\cX$. The integral $Q_GY=\frac1{2\pi i}\oint_{\Gamma_G}(z-\cA)^{-1}Y\,\dd z$ converges in the graph norm of $\cA$, on which $b_1$ and $b_2$ are continuous, so by \eqref{eq:dual} and \eqref{eq:respbound}
\begin{equation}\label{eq:bQ}
\begin{split}
|b_i[Y]|&\le r_G\sup_{z\in\Gamma_G}\|Y_{i,\bar z}\|\,\|Y\|\\
&\le C_Y\,r_G\,\|Y\|\le Cj\,\|Y\| ,\qquad i=1,2 .
\end{split}
\end{equation}
Every plant eigenvector $X_j$ with $j\ge j_1$ lies in $Q_G\cX$ for the cluster of $\varsigma_j$, because $|\lambda_j-\varsigma_j|\le\kappa_0<\tfrac12g_G$ by Proposition~\ref{prop:cluster}(i). The finitely many $X_j$ with $j<j_1$ are smooth.

(ii), input $V_c$. The identity \eqref{eq:b0id} extends by linearity to every $Y\in Q_G\cX$. The reflection $\mathcal P$ of Proposition~\ref{prop:plant}(i) is unitary and commutes with $Q_G$, and the second derivative at $y=0$ of the velocity component of $Y$ is that at $y=1$ of $\mathcal PY$; so \eqref{eq:bQ} bounds both second derivatives by $Cj\|Y\|$, and $|b_0[Y]|\le Cj\|Y\|$.

(iii). By \eqref{eq:QQint}, for $Y\in Q_G\cX$,
\begin{equation}\label{eq:bQQ}
\begin{split}
&b_i\big[(Q_G-Q^0_G)Y\big]\\
&\quad=\frac1{2\pi i}\oint_{\Gamma_G}b_i\big[(z-\cA)^{-1}\mathcal K(z-\cS)^{-1}Y\big]\,\dd z ,
\end{split}
\end{equation}
and by \eqref{eq:dual}, \eqref{eq:respbound}, \eqref{eq:kappa0} and $\|(z-\cS)^{-1}\|\le2/g_G$ on $\Gamma_G$, its modulus is at most $2C_Y\kappa_0(r_G/g_G)\|Y\|\le C_2\|Y\|$, since $r_G/g_G\le2$ for $j$ large by \eqref{eq:rG} and Proposition~\ref{prop:cluster}(ii). By \eqref{eq:bgrow} there is $c_0>0$ with $|\Phi Z|\ge c_0j\|Z\|$ for $Z\in Q^0_G\cX$. For a unit $Y\in Q_G\cX$, $\|Q^0_GY\|\ge1-C_G\kappa_0/j$ by \eqref{eq:QQ}, so
\begin{equation}\label{eq:Philow}
\begin{split}
|\Phi Y|&\ge|\Phi Q^0_GY|-\sqrt2\,C_2\\
&\ge c_0j-C_Gc_0\kappa_0-\sqrt2\,C_2\ge\tfrac12c_0j
\end{split}
\end{equation}
for $j\ge j_2$. Since $\dim Q_G\cX=n_G$ and $\{X_j\}_{j\in G}$ is orthonormal, this is (iii) with $c_1\coloneqq c_0/4$.
\end{proof}

By Proposition~\ref{prop:plant}(iii), the wall inputs act on the modal coordinates through \eqref{eq:modal-ode}, so that in the eigenbasis the input operator $B=[B_{V_c}\ B_{U_c}\ B_{T_c}]$ has the rows $b_j\coloneqq(b_0[X_j],b_1[X_j],b_2[X_j])$. By Proposition~\ref{prop:Bm}(ii), $|b_j|\le C_1j$, so $B$ maps $\mathbb C^3$ into the extrapolation space $\cX_{-1}$, and for a bounded $\kappa:\cX\to\mathbb C^3$ the operator $\cA+B\kappa$ is defined on $\{x\in\cX:\cA_{-1}x+B\kappa x\in\cX\}$, with $\cA_{-1}$ the extrapolated operator on $\cX_{-1}$.

\begin{definition}[Low block and pole-placement pre-feedback]\label{def:P}
$P$ is the orthogonal projection onto $\operatorname{span}\{X_1,\dots,X_d\}$, where $d$ is chosen so that this span is a union of eigenspaces of $\cA$; $A_L\coloneqq\operatorname{diag}(\lambda_1,\dots,\lambda_d)$, $B_L\in\mathbb C^{d\times3}$ with rows $b_1,\dots,b_d$. $K_L\in\mathbb C^{3\times d}$ is a gain for which $A_L+B_LK_L$ has prescribed distinct eigenvalues $\mu_1,\dots,\mu_d$ in $\rho(\cA)$, with eigenvectors $x_j$ and input vectors $c_j\coloneqq K_Lx_j$. The pole-placement pre-feedback is $\kappa\coloneqq K_L\big(\ip{\cdot}{X_j}_\cX\big)_{j\le d}$, and the pre-compensated operator is $\cA'\coloneqq\cA+B\kappa$.
\end{definition}

\begin{proposition}[Pole placement on low block leaves tail unchanged]\label{prop:low}
Let $d$ and $\mu_1,\dots,\mu_d$ be as in Definition~\ref{def:P}.
\begin{itemize}
\item[(i)] $(A_L,B_L)$ is controllable, so a gain $K_L$ as in Definition~\ref{def:P} exists for every choice of distinct $\mu_1,\dots,\mu_d$. One such gain, valid whenever the matrix $X_L\coloneqq[x_1\cdots x_d]$ below is invertible, is $K_L\coloneqq[c_1\cdots c_d]X_L^{-1}$ with
\begin{subequations}\label{eq:Kgain}
\begin{gather}
c_j\coloneqq\frac{\overline{b_j}}{b_j\cdot\overline{b_j}}\,(\mu_j-\lambda_j),\qquad j\le d,\label{eq:cj}\\
(x_j)_i\coloneqq\begin{cases}1,&i=j,\\ \dfrac{b_i\cdot c_j}{\mu_j-\lambda_i},&i\ne j,\ i\le d.\end{cases}\label{eq:xj}
\end{gather}
\end{subequations}
\item[(ii)] With $\mu_1,\dots,\mu_d$ distinct in $\rho(\cA)$, $\cA'$ has compact resolvent,
\begin{equation}\label{eq:specAp}
\sigma(\cA')=\{\mu_1,\dots,\mu_d\}\cup\{\lambda_j\}_{j>d},
\end{equation}
the tail eigenvalues with their multiplicities in $\cA$, with eigenvectors
\begin{subequations}\label{eq:eigApsys}
\begin{gather}
X'_j\coloneqq X_j,\qquad j>d,\label{eq:eigAp}\\
\begin{aligned}X'_j\coloneqq{}&\sum_{i\le d}(x_j)_iX_i\\&+(\mu_j-\cA)^{-1}(I-P)Bc_j,\qquad j\le d,\end{aligned}\label{eq:eigAp2}
\end{gather}
\end{subequations}
and $\{X'_j\}_{j\ge1}$ is a Riesz basis of $\cX$ with dual basis $\{\tilde X'_j\}$, $\tilde X'_j=X_j$ for $j>d$ up to a correction in $\operatorname{span}\{X_i\}_{i\le d}$. The subspace $(I-P)\cX$ is invariant under $\cA'$, on which $\cA'=\cA$; $P\cX$ is invariant if and only if $(I-P)B\kappa=0$.
\end{itemize}
\end{proposition}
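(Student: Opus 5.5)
The plan is to work entirely in the modal coordinates of Proposition~\ref{prop:plant}, where $\cA$ is $\operatorname{diag}(\lambda_j)$ and the input acts through the rows $b_j$ by \eqref{eq:modal-ode}. Everything then reduces to a finite $d\times d$ block coupled one way to a diagonal tail.

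For (i), I would apply the Hautus test to the diagonal pair $(A_L,B_L)$. Controllability fails only if, for some eigenvalue $\lambda$ of $A_L$, the rows $b_j$ with $\lambda_j=\lambda$ are linearly dependent. Because $d$ is chosen so that the span of $X_1,\dots,X_d$ is a union of eigenspaces, such a dependence $\sum_j a_j b_j=0$ says the following. The eigenvector $\sum_j\bar a_jX_j$ of $\cA$ is annihilated by the three wall functionals $b_0,b_1,b_2$, since these are linear and the eigenbasis is real.

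This is excluded by the argument in the proof of Theorem~\ref{thm:three}. The domain conditions and the three functionals together give six vanishing Cauchy data at $y=1$, so uniqueness for the ODE system \eqref{eq:eig} forces the eigenvector to vanish. The same count also shows that the multiplicity of any eigenvalue is at most three.

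For the explicit gain, the eigenrelation $(A_L+B_LK_L)x_j=\mu_jx_j$ with $K_Lx_j=c_j$ is equivalent, row by row, to
\[
b_i\cdot c_j=(\mu_j-\lambda_i)(x_j)_i .
\]
For $i=j$ this holds by the choice \eqref{eq:cj}, since $b_j\cdot\overline{b_j}\ne0$ by the previous step. For $i\ne j$ it is the definition \eqref{eq:xj}, which is legitimate because $\mu_j\in\rho(\cA)$. If $X_L$ is invertible, then $K_L\coloneqq[c_1\cdots c_d]X_L^{-1}$ satisfies $K_Lx_j=c_j$, which gives the claim.

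For (ii), I would split $\cX=P\cX\oplus(I-P)\cX$. Since $\kappa$ vanishes on $(I-P)\cX$, we have $\cA'=\cA$ there, so $(I-P)\cX$ is invariant and $X'_j=X_j$ for $j>d$. On $P\cX$, $\cA'$ acts as $A_L+B_LK_L$ in the $P$-component, plus the off-diagonal term $(I-P)B\kappa$. So $\cA'$ is block lower triangular:
\[
\cA'=\begin{pmatrix}A_L+B_LK_L&0\\(I-P)B\kappa&\cA|_{(I-P)\cX}\end{pmatrix}.
\]
The resolvent is obtained by block inversion and is compact, being a sum of a compact tail resolvent and finite-rank terms. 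This gives \eqref{eq:specAp}, with the tail multiplicities unchanged. It also shows that $P\cX$ is invariant exactly when the off-diagonal block $(I-P)B\kappa$ vanishes.

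The eigenvector for $\mu_j$ has $P$-component $x_j$ and tail component solving $(\mu_j-\cA)z=(I-P)Bc_j$. That is \eqref{eq:eigAp2}, with modal coefficients $b_i\cdot c_j/(\mu_j-\lambda_i)$ for $i>d$.

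The step I expect to need the most care is showing that this tail component lies in $\cX$ and that $\cA'$ is well defined despite the unbounded $B$. I would use $|b_i|\le C_1 i$ from Proposition~\ref{prop:Bm}(ii), together with $|\lambda_i|\gtrsim i^2$, which follows from Proposition~\ref{prop:cluster}(i) and Proposition~\ref{prop:joint}(ii). Together these make the coefficients $O(1/i)$, hence square summable. The domain statement of $\cA'$ is then the one given through $\cA_{-1}$ above.

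Finally, for the Riesz basis I would note that the map $T:X_j\mapsto X'_j$ is the identity on $(I-P)\cX$. On $P\cX$ it equals $X_L$ plus a bounded finite-rank column into the tail, so $T=I+{}$finite rank. It is block triangular with invertible diagonal blocks, because $X_L$ is invertible: its columns are eigenvectors of a matrix with distinct eigenvalues. Hence $T$ is bounded and boundedly invertible, and $\{X'_j\}$ is a Riesz basis.

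The dual basis is $\tilde X'_j=(T^{-1})^*X_j$. For $j>d$, the triangular structure gives $\tilde X'_j=X_j$ plus a correction lying in $\operatorname{span}\{X_i\}_{i\le d}$.
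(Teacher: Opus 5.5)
Your proof is correct and follows the paper's argument essentially step for step: the Hautus test via the six vanishing Cauchy data of the proof of Theorem~\ref{thm:three}, the row-by-row identity $b_i\cdot c_j=(\mu_j-\lambda_i)(x_j)_i$ for the explicit gain, the lower block-triangular splitting along $P\cX\oplus(I-P)\cX$ for the resolvent, the spectrum and the invariance claims, and $T=I+{}$finite rank with $X_L$ invertible for the Riesz basis. You also justify two points the paper states without detail: that $(\mu_j-\cA)^{-1}(I-P)Bc_j\in\cX$, which you get from $|b_i|\le C_1 i$ and $|\lambda_i|\gtrsim i^2$, and the form of the dual basis, which you get from the formula $\tilde X'_j=(T^{-1})^*X_j$.
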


\begin{proof}
(i) Controllability: an eigenvector of $\cA$ annihilated by $b_0,b_1,b_2$ has, with its clamped and Dirichlet conditions, six vanishing Cauchy data at $y=1$ for the sixth-order system \eqref{eq:pV}--\eqref{eq:ptheta} and vanishes; this is the Hautus test on every vector of every eigenspace, so multiple eigenvalues are covered. Existence of $K_L$ is multi-input pole placement. For the displayed gain, row $i$ of $(\mu_j-A_L)x_j$ is $b_i\cdot c_j$, so $(\mu_j-A_L)x_j=B_Lc_j$, and $K_Lx_j=c_j$.
(ii) For $j>d$, $\kappa X_j=0$, so $X_j\in\mathcal D(\cA')$ and $\cA'X_j=\lambda_jX_j$; this gives the invariance and the tail of \eqref{eq:specAp}--\eqref{eq:eigAp}. For $j\le d$ write $X'_j=x+y$, $x\coloneqq\sum_{i\le d}(x_j)_iX_i$, $y\in(I-P)\cX$; then $\kappa X'_j=K_Lx_j=c_j$ and $\cA'X'_j=\mu_jX'_j$ splits into $(A_L+B_LK_L)x_j=\mu_jx_j$ and $\cA y+(I-P)Bc_j=\mu_jy$, solved by $y=(\mu_j-\cA)^{-1}(I-P)Bc_j\in\cX$. The resolvent: for $z\notin\sigma(\cA)\cup\{\mu_j\}$ and $f\in\cX$, the equation $(z-\cA')x=f$ with $x=Px+(I-P)x$ reads $(z-A_L-B_LK_L)Px=Pf$ and $(z-\cA)(I-P)x=(I-P)f+(I-P)BK_LPx$, solved in that order; the inverse is $(z-\cA)^{-1}$ plus an operator of rank at most $d$, hence compact. This also shows that \eqref{eq:specAp} is the whole spectrum. The Riesz property: $\{X_j\}_{j>d}$ is orthonormal, and the $d$ vectors \eqref{eq:eigAp2} have low coordinates $X_L$, which is invertible because the $x_j$ are eigenvectors of $A_L+B_LK_L$ for distinct eigenvalues; so $\{X'_j\}_{j\le d}$ together with $\{X_j\}_{j>d}$ is the image of the orthonormal basis under $I+(\text{rank }d)$, invertible. Invariance of $P\cX$: $\cA'P-P\cA'=(I-P)B\kappa$.
\end{proof}

Pole placement leaves the tail eigenvectors untouched, but the dual basis of the tail is changed by the tails of the low eigenvectors, and the input coefficients that enter the Fredholm stage are the ones in the dual basis: $\ip{Bc}{\tilde X'_j}_\cX$, not $b_j\cdot c$. This is used in Definition~\ref{def:psi}.

With $\BF\coloneqq[B_{U_c}\ B_{T_c}]$, the cluster input matrix of a high cluster $G$ is
\begin{equation}\label{eq:BG}
\mathbb B_G\coloneqq\big(\ip{B_{U_c}}{\tilde X'_j}_\cX,\ \ip{B_{T_c}}{\tilde X'_j}_\cX\big)_{j\in G}\in\mathbb C^{n_G\times2},
\end{equation}
whose rows $\beta_j$ are the rows of $\BF$ in the dual basis of $\cA'$.

\begin{corollary}[Cluster input matrices are invertible with norms growing linearly]\label{cor:BG}
Given the low block of Definition~\ref{def:P}, there are $C_L$ and $j_3\ge j_2$ such that for every $j>d$
\begin{equation}\label{eq:betaPhi}
\big|\beta_j-\big(b_1[X_j],b_2[X_j]\big)\big|\le\frac{C_L}{j},
\end{equation}
and for every high cluster $G$ of index $j\ge j_3$, $\sigma_{\min}(\mathbb B_G)\ge c_1j$ if $n_G=2$ and $|\mathbb B_G|\ge c_1j$ if $n_G=1$; consequently $\|\mathbb B_G^{(+)}\|\le1/(c_1j)$, with $\mathbb B_G^{(+)}$ the inverse for a pair and the minimum-norm right inverse for a singleton.
\end{corollary}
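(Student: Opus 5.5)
The plan is to compute $\beta_j$ explicitly from the structure of the dual basis in Proposition~\ref{prop:low}(ii), which differs from the plant basis only by a rank-$d$ correction. Then \eqref{eq:betaPhi} reduces to a decay estimate on finitely many sequences.

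\textbf{Step 1: the dual basis.} For $j>d$ the dual vector $\tilde X'_j$ is determined by biorthogonality. It must satisfy $\ip{X'_i}{\tilde X'_j}_\cX=\delta_{ij}$ with $X'_i=X_i$ for $i>d$, so $\tilde X'_j=X_j+\sum_{l\le d}a_{jl}X_l$. The coefficients are fixed by the $d$ conditions
\[
\ip{X'_i}{\tilde X'_j}_\cX=0,\qquad i\le d .
\]
By \eqref{eq:eigAp2}, the $X_j$-coordinate of $X'_i$ for $j>d$ is
\[
\ip{(\mu_i-\cA)^{-1}(I-P)Bc_i}{X_j}_\cX=\frac{b_j\cdot c_i}{\mu_i-\lambda_j},
\]
by Proposition~\ref{prop:plant}(ii) read at $s=\mu_i$; this is the entry $W_{ji}$ of \eqref{eq:rc-W}. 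The conditions then become $X_L^*\,\overline{a_j}=-\big(\overline{W_{ji}}\big)_{i\le d}$, up to the conjugation convention. Since $X_L$ is invertible, this gives $|a_j|\le\|X_L^{-1}\|\max_{i\le d}|b_j\cdot c_i|/|\mu_i-\lambda_j|$.

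\textbf{Step 2: the estimate \eqref{eq:betaPhi}.} For $j>d$ and the inputs $U_c,T_c$, the $j$-th row is
\[
\beta_j=\big(b_1[X_j],b_2[X_j]\big)+\sum_{l\le d}\overline{a_{jl}}\,\big(b_1[X_l],b_2[X_l]\big),
\]
which follows by pairing $B_{U_c},B_{T_c}$ against $\tilde X'_j$ through \eqref{eq:modal-ode}, i.e.\ $\ip{Be}{X_l}=b_l\cdot e$. The finitely many rows $b_l$, $l\le d$, are fixed vectors, because the low eigenvectors are smooth. Proposition~\ref{prop:Bm}(ii) gives $|b_j|\le C_1j$. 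By Proposition~\ref{prop:cluster}(i), $\lambda_j=\varsigma_j+O(1)$ with $|\varsigma_j|\asymp j^2$ (since $\sqrt{|\varsigma_j|}\asymp j$), so $|\mu_i-\lambda_j|\ge cj^2$ for large $j$. For the finitely many small $j>d$ it stays positive because $\mu_i\in\rho(\cA)$. Hence $|a_j|\le Cj/j^2=C/j$, and \eqref{eq:betaPhi} follows with $C_L$ collecting $\|X_L^{-1}\|$, $\max_{i}|c_i|$ and $\max_{l\le d}|b_l|$.

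\textbf{Step 3: the cluster bounds.} For a high cluster $G$ of index $j$, $\mathbb B_G=\Phi_G+E_G$ with $\|E_G\|\le\sqrt{n_G}\,C_L/j'$, where $j'\ge j-1$ is the smallest index in $G$. By Weyl's inequality for singular values, $\sigma_{\min}(\mathbb B_G)\ge\sigma_{\min}(\Phi_G)-\|E_G\|$. Proposition~\ref{prop:Bm}(iii) gives $\sigma_{\min}(\Phi_G)\ge 2c_1j$, so $\sigma_{\min}(\mathbb B_G)\ge 2c_1j-2C_L/(j-1)\ge c_1j$ once $j\ge j_3$. The singleton case is the same with vector norms. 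The bound $\|\mathbb B_G^{(+)}\|=1/\sigma_{\min}(\mathbb B_G)$ holds for a $2\times2$ inverse and for the minimum-norm right inverse of a nonzero row, which gives the last claim.

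\textbf{Main obstacle.} The delicate step is Step 1: correctly identifying $\ip{(\mu_i-\cA)^{-1}(I-P)Bc_i}{X_j}$, where $B$ takes values only in the extrapolation space. The approach is to justify it via \eqref{eq:residue}, by applying Proposition~\ref{prop:plant}(ii) to the resolvent solution with wall data $c_i$ and subtracting its low-mode part, and to track the conjugations in the pairing consistently.
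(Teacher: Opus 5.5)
Your proposal is correct and follows essentially the same route as the paper. The paper writes the dual vector as a rank-$d$ correction of $X_j$ fixed by biorthogonality through $X_L$. It bounds $w_{jr}=(b_j\cdot c_r)/(\mu_r-\lambda_j)$ by $C/j$, using $|b_j|\le C_1j$ and $|\mu_r-\lambda_j|\ge cj^2$, with $\mu_r\in\rho(\cA)$ covering the finitely many remaining $j$. It then obtains the lower bounds by perturbing the singular values in Proposition~\ref{prop:Bm}(iii).

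There are two minor differences. You propose to justify, via \eqref{eq:residue}, the $X_j$-coordinate of the low eigenvectors, which the paper takes from \eqref{eq:rc-W} without comment. Your conjugation in $X_L^*\overline{a_j}$ is off: the correct relation is $X_L^\top\overline{a_j}=-w_{j\cdot}$. This does not affect the norm estimate, since $\|(X_L^\top)^{-1}\|=\|X_L^{-1}\|$.
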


\subsection{Fredholm stage and proof of Theorem~\protect\ref{thm:main}}\label{sec:fredholm}

\begin{definition}[Target operator]\label{def:target}
$\widetilde\cS\coloneqq\operatorname{diag}(S_\alpha-q_1,\ \Lap-q_2)$ on $\cX$, with spectrum $\{\nu_n\coloneqq\lambda^S_n-q_1\}\cup\{\nu'_m\coloneqq\delta_m-q_2\}$ and the orthonormal uncoupled eigenvectors $\tilde e_\nu$. Each target pole $\nu$ is assigned to the cluster $G(\nu)$ of its uncoupled eigenvalue $\nu+q_i$; the low target poles are the shifted low uncoupled eigenvalues and are the $\mu_1,\dots,\mu_d$ of Definition~\ref{def:P}.
\end{definition}

\begin{proposition}[Low block, shifts and middle vectors of the design exist]\label{prop:design}
Let $\omega>0$, $\epsilon\in(0,\epsilon_0)$, $b$ the first cluster boundary at or above $j_1$, and $d\coloneqq b-1$. Then $\operatorname{span}\{X_1,\dots,X_d\}$ is a union of eigenspaces of $\cA$, and the low block contains every cluster of index below $j_1$; the set $\mathcal E$ of $(q_1,q_2)\in[\omega,\infty)^2$ for which the target spectrum is not simple or contains an eigenvalue of $\cA$ lies on countably many lines $q_1=a$, $q_2=a$, $q_1-q_2=a$, and its complement is dense in $[\omega,\infty)^2$; for $(q_1,q_2)\notin\mathcal E$ the target poles have real parts less than $-\omega$ and a gain $K_L$ as in Definition~\ref{def:P} exists; and in every middle cluster vectors $\hat c_\nu$ as in step~7 of Table~\ref{tab:fred} exist.
\end{proposition}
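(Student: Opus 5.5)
The proof proceeds in four parts: the union-of-eigenspaces property, the exceptional set $\mathcal E$, the existence of $K_L$, and the middle vectors.

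\emph{Union of eigenspaces.} Since $b$ is a cluster boundary at or above $j_1$, Proposition~\ref{prop:cluster}(iii) applies to every cluster of index at least $b$. Each such circle $\Gamma_G$ encloses exactly $n_G$ eigenvalues of $\cA$, counted with multiplicity. By Proposition~\ref{prop:cluster}(i), the eigenvalue $\lambda_j$ lies within $\kappa_0<\tfrac12 g_G$ of $\varsigma_j$, so the plant indices of these clusters are exactly their positions. The eigenvalues $\lambda_1,\dots,\lambda_d$ are therefore exactly those of $\cA$ lying outside the disks of the high clusters. Because the disks are separated, no eigenspace of $\cA$ is split between $\{1,\dots,d\}$ and $\{d+1,\dots\}$. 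Every cluster of index below $j_1$ lies below $b$ and hence in the low block.

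\emph{The exceptional set.} The target poles are $\lambda^S_n-q_1$ and $\delta_m-q_2$. Two poles coincide only in one of three ways:
\begin{itemize}
\item within one family, which is impossible since $\lambda^S_n$ and $\delta_m$ are simple by Proposition~\ref{prop:joint}(i);
\item across families, which happens exactly when $q_1-q_2=\lambda^S_n-\delta_m$, a countable family of lines;
\item at an eigenvalue of $\cA$, which happens when $q_1=\lambda^S_n-\lambda$ or $q_2=\delta_m-\lambda$ for some $\lambda\in\sigma(\cA)$, again a countable family of lines since $\sigma(\cA)$ is countable.
\end{itemize}
A countable union of lines is meagre and of measure zero in $[\omega,\infty)^2$, so its complement is dense by Baire. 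Because $S_\alpha$ and $\Lap$ are negative, every target pole satisfies $\nu\le\varsigma-\omega<-\omega$. For $(q_1,q_2)\notin\mathcal E$ the low poles $\mu_1,\dots,\mu_d$ are distinct and lie in $\rho(\cA)$. The pair $(A_L,B_L)$ is controllable by Proposition~\ref{prop:low}(i), so multi-input pole placement gives $K_L$.

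\emph{Middle vectors.} Fix a middle cluster $G$ and an eigenvalue $\lambda$ of $\cA$ in $G$, with multiplicity $r$ and eigenspace indices $J$. Let $K\subset G$ be the set of $k$ with $\lambda_{p(k)}=\lambda$; any bijection $p$ respecting multiplicities has $|K|=r$. We need $\hat c_k$, $k\in K$, making the $r\times r$ matrix $(\beta^{(3)}_j\cdot\hat c_k)_{j\in J,k\in K}$ invertible. This is possible if and only if the $r\times3$ matrix $R=(\beta^{(3)}_j)_{j\in J}$ has rank $r$.

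\emph{Main obstacle.} The difficulty is that $r$ could a priori exceed $3$. I would rule this out with a Hautus-type argument, adapting the controllability step in the proof of Proposition~\ref{prop:low}(i) to this block. For $j>d$ the dual vectors are $\tilde X'_j=X_j$ plus a low correction, by Proposition~\ref{prop:low}(ii), so $\beta^{(3)}_j=\ip{B\cdot}{\tilde X'_j}$. A vanishing combination $\sum_{j\in J} a_j\beta^{(3)}_j=0$ yields a vector $Y=\sum_j \bar a_j\tilde X'_j$, a left eigenvector of $\cA'$ at $\lambda$, that is annihilated by $b_0,b_1,b_2$.

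Now $\cA'^*$ differs from $\cA$ only by a rank-$d$ term $\kappa^*B^*$, and that term vanishes on $Y$ because $B^*Y=0$. Hence $Y$ is an eigenvector of $\cA$ whose three wall functionals vanish. By the six-Cauchy-data uniqueness used in the proof of Theorem~\ref{thm:three}, $Y=0$, so $a=0$ and $R$ has full row rank $r$.

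The same ODE uniqueness bounds the multiplicity: the eigenproblem \eqref{eq:eig} is a sixth-order system with three homogeneous conditions at $y=1$, so its solution space there is three-dimensional. Hence $r\le3$, and a generic choice of $\hat c_k$ then makes the matrix invertible.
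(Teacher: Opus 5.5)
Your proof is correct and follows the paper's argument. The paper uses the split $\lambda_d>\lambda_{d+1}$ from $|\lambda_j-\varsigma_j|\le\kappa_0$ and the gap at the cluster boundary, which you phrase through the eigenvalue count of Proposition~\ref{prop:cluster}(iii). It then uses the countable families of excluded lines, pole placement from Proposition~\ref{prop:low}(i), and the Hautus argument $\cA w=\cA'^*w-\kappa^*B^*w$ with Theorem~\ref{thm:three} for the middle vectors, exactly as you do. Your separate ODE bound $r\le3$ is redundant: full row rank of an $r\times3$ matrix already forces it.
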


\begin{definition}[Closed-loop eigenvectors]\label{def:psi}
Fix $n_*\ge j_3$; the high clusters of index below $n_*$ are the middle clusters, the others the tail. For a target pole $\nu$ in a tail cluster $G=G(\nu)$, let $v_\nu\coloneqq Q_G\tilde e_\nu$, $M_G\coloneqq\cA'|_{Q_G\cX}=\cA|_{Q_G\cX}$, and
\begin{subequations}\label{eq:tailvec}
\begin{gather}
c_\nu\coloneqq\mathbb B_G^{(+)}(\nu-M_G)v_\nu\in\mathbb C^2,\label{eq:cnu}\\
\psi_\nu\coloneqq v_\nu+\sum_{j\notin G}\frac{\ip{\BF c_\nu}{\tilde X'_j}_\cX}{\nu-\lambda'_j}\,X'_j ,\label{eq:psinu}
\end{gather}
\end{subequations}
with $\lambda'_j$ the eigenvalue of $\cA'$ belonging to $X'_j$ ($\lambda'_j=\mu_j$ for $j\le d$, $\lambda_j$ for $j>d$). In \eqref{eq:psinu} $c_\nu\in\mathbb C^2$ is read as $(0,c_\nu)\in\mathbb C^3$, so that $\BF c_\nu=B(0,c_\nu)$. For the low target poles, $\psi_{\mu_j}\coloneqq X'_j$ and $c_{\mu_j}\coloneqq0$, $j\le d$. In each middle cluster take the bijection and the vectors of step~7 of Table~\ref{tab:fred}, written $k(\nu)\coloneqq p(k)$ and $\hat c_\nu\coloneqq\hat c_k$ for $\nu=\nu_k$. For $\xi\in\mathbb C$ put, in the middle clusters,
\begin{subequations}\label{eq:midvec}
\begin{gather}
\nu(\xi)\coloneqq\lambda_{k(\nu)}+\xi\big(\nu-\lambda_{k(\nu)}\big),\label{eq:family}\\
c_\nu(\xi)\coloneqq\big(\nu(\xi)-\lambda_{k(\nu)}\big)\hat c_\nu,\label{eq:familyc}\\
\psi_\nu(\xi)\coloneqq\sum_{j\ge1}\frac{\nu(\xi)-\lambda_{k(\nu)}}{\nu(\xi)-\lambda'_j}\,\ip{B\hat c_\nu}{\tilde X'_j}_\cX\,X'_j ,\label{eq:psimid}
\end{gather}
\end{subequations}
the coefficient $(\nu(\xi)-\lambda_{k(\nu)})/(\nu(\xi)-\lambda'_j)$ read as $1$ when $\lambda'_j=\lambda_{k(\nu)}$; in the tail $\nu(\xi)\coloneqq\nu$, $c_\nu(\xi)\coloneqq c_\nu$, $\psi_\nu(\xi)\coloneqq\psi_\nu$. $\widetilde\cS_\xi$ is the operator with the eigenvectors $\tilde e_\nu$ and the eigenvalues $\nu(\xi)$.
\end{definition}

\begin{lemma}[Prescribed states are closed-loop eigenvectors and are quadratically close to the uncoupled basis]\label{lem:eig}
For a design of Tables~\ref{tab:static} and~\ref{tab:fred}, for every high target pole $\nu$ and every $\xi$ with $\nu(\xi)\notin\sigma(\cA')\setminus\{\lambda_{k(\nu)}\}$ for the middle poles, the series \eqref{eq:psinu} and \eqref{eq:psimid} converge in $\cX$,
\begin{equation}\label{eq:eigid}
\big(\nu(\xi)-\cA'_{-1}\big)\psi_\nu(\xi)=Bc_\nu(\xi) ,
\end{equation}
and, for the tail poles, with $q\coloneqq\max(q_1,q_2)$ and $C$ depending on the design,
\begin{subequations}\label{eq:closesys}
\begin{gather}
|c_\nu|\le\frac{C(q+\kappa_0)}{j},\qquad
\|\psi_\nu-\tilde e_\nu\|^2\le\frac{C}{j^2},\label{eq:cbound}\\
\sum_\nu\|\psi_\nu-\tilde e_\nu\|^2<\infty .\label{eq:qc}
\end{gather}
\end{subequations}
\end{lemma}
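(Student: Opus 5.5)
The plan is to work in the Riesz basis $\{X'_j\}$ of Proposition~\ref{prop:low}(ii), where $\cA'$ is diagonal with eigenvalues $\lambda'_j$, and to read both \eqref{eq:eigid} and the bounds \eqref{eq:closesys} off coordinate by coordinate. Since $\cA'_{-1}$ is the extension of $\cA'$ and $Bc$ lies in $\cX_{-1}$ with dual-basis coordinates $\ip{Bc}{\tilde X'_j}_\cX$, the identity $(\nu-\cA'_{-1})\psi=Bc$ holds iff $(\nu-\lambda'_j)\ip{\psi}{\tilde X'_j}_\cX=\ip{Bc}{\tilde X'_j}_\cX$ for every $j$, provided $\psi\in\cX$.

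\emph{Middle poles.} By \eqref{eq:psimid} the $j$-th coordinate of $\psi_\nu(\xi)$ is exactly $\ip{Bc_\nu(\xi)}{\tilde X'_j}_\cX/(\nu(\xi)-\lambda'_j)$, with $c_\nu(\xi)$ from \eqref{eq:familyc}, so the coordinate identity holds. At $\lambda'_j=\lambda_{k(\nu)}$ the convention "read as $1$" is consistent because both sides then carry the factor $\nu(\xi)-\lambda_{k(\nu)}$. Convergence in $\cX$ needs $\sum_j|\ip{B\hat c_\nu}{\tilde X'_j}|^2/|\nu(\xi)-\lambda'_j|^2<\infty$. For $j>d$ the dual coefficient is $b_j\cdot\hat c_\nu$ plus $O(1/j)$ by \eqref{eq:betaPhi}, hence $O(j)$ by Proposition~\ref{prop:Bm}(ii), while $|\nu(\xi)-\lambda_j|\gtrsim j^2$ since $\lambda_j\asymp -j^2$ by Proposition~\ref{prop:joint}(ii) and Proposition~\ref{prop:cluster}(i). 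The summand is therefore $O(j^{-2})$ and the series converges. Finitely many $j$ are excluded by the hypothesis $\nu(\xi)\notin\sigma(\cA')$.

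\emph{Tail poles.} Write $v_\nu=Q_G\tilde e_\nu=\sum_{j\in G}E_{j\nu}X_j$. For $j\in G$ we have $X'_j=X_j$ and $M_G$ is diagonal there, so the coordinate identity for $j\in G$ reads $(\nu-M_G)v=\mathbb B_Gc_\nu$. This holds by \eqref{eq:cnu}, since $\mathbb B_G\mathbb B_G^{(+)}=I$ on $\mathbb C^{n_G}$: it is the inverse for a pair and the right inverse of a nonzero row for a singleton (Corollary~\ref{cor:BG}). For $j\notin G$ the identity is the definition \eqref{eq:psinu}. Convergence follows as before.

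For the bounds, note that $|\nu-\lambda_j|\le q+\kappa_0+D_G\lesssim q+\kappa_0+\epsilon j$ for $j\in G$. This crude estimate would only give $|c_\nu|=O(1)$, so I will sharpen it. Use $(\nu-M_G)Q_G\tilde e_\nu=Q_G(\nu-\cA)\tilde e_\nu=Q_G(\nu-\varsigma-\mathcal K)\tilde e_\nu$ with $\nu-\varsigma=-q_i$; then $\|(\nu-M_G)v_\nu\|\le q+\kappa_0$. Combined with $\|\mathbb B_G^{(+)}\|\le1/(c_1j)$ this gives $|c_\nu|\le C(q+\kappa_0)/j$.

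Next decompose $\psi_\nu-\tilde e_\nu=-(I-Q_G)\tilde e_\nu+\sum_{j\notin G}(\cdots)X'_j$. The first term has square-summable norms by Proposition~\ref{prop:cluster}(iii); I expect it to be $O(1/j)$ by \eqref{eq:QQ}. For the second, Riesz-basis bounds reduce it to $\sum_{i\notin G}|\beta^{(3)}_i\cdot c_\nu|^2/|\nu-\lambda'_i|^2\lesssim j^{-2}\sum_{i\notin G}i^2/|\nu-\lambda'_i|^2$.

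\emph{Main obstacle.} The hard step is bounding this last sum uniformly by a constant. The nearest non-cluster eigenvalues sit at distance $\gtrsim\epsilon j$ from $\nu$ (Proposition~\ref{prop:cluster}(ii), together with $q$ fixed and the shifts kept off $\sigma(\cA)$ by \eqref{eq:shifts}), and they have $i\asymp j$. The eigenvalues are quadratic in the index, so $|\nu-\lambda_i|\gtrsim|i^2-j^2|$ away from the cluster. I will split the sum into $|i-j|\le j/2$, where the terms are $\lesssim j^2/(j|i-j|)^2$ and sum to $O(1)$, and the remaining range, where the terms are $\lesssim i^2/\max(i,j)^4$, which is also $O(1)$. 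The multiple-family structure means that "$|i-j|$" must be measured through the joint index of Proposition~\ref{prop:joint}(ii). Handling this carefully, together with the finitely many low $\lambda'_i=\mu_i$ (whose contribution is $O(j^{-4})$), yields $\|\psi_\nu-\tilde e_\nu\|^2\le C/j^2$. Summing over $\nu$ then gives \eqref{eq:qc}.
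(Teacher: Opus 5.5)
Your proposal is correct and follows the paper's proof essentially step for step. The shared steps are: the coordinatewise check of \eqref{eq:eigid} in the basis $\{X'_j\}$ (your ``iff'' in $\cX_{-1}$ rests on the dual vectors $\tilde X'_j$ differing from $X_j\in\mathcal D(\cA)$ by finitely many low terms, which the paper states explicitly); the bound $\|(\nu-M_G)v_\nu\|=\|Q_G(\nu-\varsigma-\mathcal K)\tilde e_\nu\|\le q+\kappa_0$ combined with $\|\mathbb B_G^{(+)}\|\le1/(c_1j)$; and the split $\psi_\nu-\tilde e_\nu=-(I-Q_G)\tilde e_\nu+\sum_{j\notin G}(\cdots)$, where the off-cluster sum is controlled through $|\nu-\lambda_i|\gtrsim|i^2-j^2|$ in the joint index, with your range splitting being the paper's estimate \eqref{eq:tailsum}.
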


\begin{proposition}[Transformation is invertible outside a finite set of parameter values]\label{prop:inv}
Define $R_\xi$ on $\cX$ by $R_\xi\tilde e_\nu\coloneqq\psi_\nu(\xi)$ for every target pole $\nu$, the low ones included, $R_\xi\tilde e_{\mu_j}=X'_j$. Let $M(\xi)$ be the matrix of $R_\xi$ from the high uncoupled eigenvectors to the high coordinates in the basis $\{X'_k\}$, with the high indices split into middle ($m$, the clusters of index below $n_*$) and tail ($\tau$), and
\begin{equation}\label{eq:Sigma}
\Sigma(\xi)\coloneqq M_{mm}(\xi)-M_{m\tau}M_{\tau\tau}^{-1}M_{\tau m}(\xi).
\end{equation}
For a design of Tables~\ref{tab:static} and~\ref{tab:fred} and $n_*$ large enough: (i) $M_{\tau\tau}$ does not depend on $\xi$ and is invertible; (ii) at every $\xi$ where the middle columns have no pole, $R_\xi-I$ is Hilbert--Schmidt, and $R_\xi$ is invertible if and only if $\det\Sigma(\xi)\ne0$; (iii) $\det\Sigma$ is meromorphic, that is, analytic except at isolated poles, on a neighborhood of $[0,2]$ with $\det\Sigma(0)\ne0$, so $R_\xi$ is invertible for all $\xi\in[0,2]$ outside a finite set, containing the poles of the middle columns and the zeros and poles of $\det\Sigma$, which contains $1$ only if $\det\Sigma(1)=0$. For such $\xi$, $\{\psi_\nu(\xi)\}$ is a Riesz basis of $\cX$, the feedback
\begin{equation}\label{eq:Fdef}
FX\coloneqq\sum_\nu c_\nu(\xi)\,\ip{X}{\tilde\psi_\nu}_\cX,\qquad\{\tilde\psi_\nu\}\ \text{the dual basis},
\end{equation}
is bounded from $\cX$ to $\mathbb C^3$, $(\cA'_{-1}+BF)\psi_\nu(\xi)=\nu(\xi)\psi_\nu(\xi)$ for every $\nu$, and $T\coloneqq R_\xi^{-1}$ satisfies $T-I\in\mathrm{HS}$.
\end{proposition}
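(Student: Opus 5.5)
The plan is to work throughout in the Riesz basis $\{X'_k\}$ of Proposition~\ref{prop:low}(ii) and to write $R_\xi$ as an infinite matrix whose columns are indexed by the target poles. The rows are ordered low, middle, tail, and so are the columns. The low columns are unit vectors, since $R_\xi\tilde e_{\mu_j}=X'_j$. Hence $R_\xi$ is block upper triangular with an identity in the low–low block, and it is invertible if and only if the high block $M(\xi)$ is. All the work is in $M(\xi)$. I will first show that $R_\xi-I$ is Hilbert--Schmidt, then treat the tail block, then the Schur complement, and finally analyticity in $\xi$.

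\emph{Hilbert--Schmidt property and part (i).} The tail columns $\psi_\nu$ do not involve $\xi$ by Definition~\ref{def:psi}, which gives the first half of (i). By \eqref{eq:qc}, $\sum_\nu\|\psi_\nu-\tilde e_\nu\|^2<\infty$ over the tail. Only finitely many middle and low columns remain, so in the orthonormal uncoupled basis $R_\xi-I$ is a Hilbert--Schmidt operator plus a finite-rank one, hence Hilbert--Schmidt. To express this in the coordinates $X'_k$, I change basis. On the tail, $\tilde X'_j$ differs from $X_j$ only by a correction in the low span (Proposition~\ref{prop:low}(ii)), and $\sum\|(I-Q_{G(\nu)})\tilde e_\nu\|^2<\infty$ by Proposition~\ref{prop:cluster}(iii). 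Together these show that the change of basis is $I$ plus Hilbert--Schmidt on the tail, up to finite rank.

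For the invertibility of $M_{\tau\tau}$, I split it as $D+E$. Here $D$ is block diagonal over the tail clusters, with blocks $\big(\ip{Q_G\tilde e_\nu}{X_j}_\cX\big)_{j,\nu\in G}$, and $E$ collects the off-cluster entries of the $\psi_\nu$, i.e.\ the sum in \eqref{eq:psinu}, together with the low-span corrections. Each block of $D$ is the matrix of $Q_G|_{Q^0_G\cX}$ between orthonormal bases. Since $\|Q_G-Q^0_G\|<\tfrac12$, each block has $\sigma_{\min}\ge\tfrac12$, so $\|D^{-1}\|\le2$ uniformly. On the other hand, $\|E\|_{\mathrm{HS}}^2\le\sum_{j(\nu)\ge n_*}C/j^2$ by \eqref{eq:cbound}, and this tends to $0$ as $n_*\to\infty$. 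Choosing $n_*$ with $\|D^{-1}E\|<1$, a Neumann series gives (i).

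\emph{Part (ii).} With $M_{\tau\tau}$ invertible, the block factorization
\[
M=\begin{pmatrix}I&M_{m\tau}M_{\tau\tau}^{-1}\\0&I\end{pmatrix}\begin{pmatrix}\Sigma&0\\0&M_{\tau\tau}\end{pmatrix}\begin{pmatrix}I&0\\M_{\tau\tau}^{-1}M_{\tau m}&I\end{pmatrix}
\]
shows that $M$ is invertible if and only if the finite matrix $\Sigma(\xi)$ is, which is (ii). The Hilbert--Schmidt property also makes $R_\xi$ Fredholm of index zero, a useful cross-check.

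\emph{Part (iii).} For a middle pole, the coefficient of $X'_j$ in \eqref{eq:psimid} is rational in $\xi$. Its pole in $\xi$ lies where $\nu(\xi)=\lambda'_j$, and if $\nu\ne\lambda_{k(\nu)}$ this is a single point. Since $\lambda'_j\to-\infty$, only finitely many of these points lie in a bounded neighborhood of $[0,2]$. Away from them, the general term is bounded by $C|\ip{B\hat c_\nu}{\tilde X'_j}_\cX|/|\lambda'_j|\le Cj/j^2$, using Proposition~\ref{prop:Bm}(ii) and \eqref{eq:betaPhi}. So the series converges in $\ell^2$ locally uniformly, and the middle columns are meromorphic $\cX$-valued functions. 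Consequently $\Sigma$ and $\det\Sigma$ are meromorphic.

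At $\xi=0$ every coefficient vanishes except those with $\lambda'_j=\lambda_{k(\nu)}$, which equal $1$. Hence $\psi_\nu(0)$ lies in the middle eigenspace of $\lambda_{k(\nu)}$, so $M_{\tau m}(0)=0$ and $\Sigma(0)=M_{mm}(0)$. This matrix is block diagonal over the distinct eigenvalues of $\cA$ in the middle clusters. Each block is exactly the matrix $(\beta^{(3)}_j\cdot\hat c_k)$ of step~7, which is invertible by the choice of $\hat c$ (Proposition~\ref{prop:design}), so $\det\Sigma(0)\ne0$. By the identity theorem the zeros and poles of $\det\Sigma$ in $[0,2]$ are finitely many; this is the step I expect to need the most care, namely the uniform convergence that justifies meromorphy of the infinite columns.

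\emph{Conclusion.} For $\xi$ outside this finite set, $R_\xi$ is bounded and boundedly invertible, so $\{\psi_\nu(\xi)\}$ is a Riesz basis with a bounded dual basis $\{\tilde\psi_\nu\}$. By \eqref{eq:cbound}, $\sum_\nu|c_\nu(\xi)|^2\le\sum C/j^2+(\text{finite})<\infty$, so Cauchy--Schwarz together with the Riesz bound on the coefficients $\ip{X}{\tilde\psi_\nu}_\cX$ shows that $F$ in \eqref{eq:Fdef} is bounded. Since $F\psi_\nu(\xi)=c_\nu(\xi)$, \eqref{eq:eigid} gives $(\cA'_{-1}+BF)\psi_\nu=\nu(\xi)\psi_\nu$; for the low poles this holds with $c=0$ and $\psi_{\mu_j}=X'_j$. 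Finally, $T-I=-R_\xi^{-1}(R_\xi-I)$ is Hilbert--Schmidt, since it is the product of a bounded operator and a Hilbert--Schmidt one.
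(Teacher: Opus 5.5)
Your proposal is correct and follows the paper's proof essentially step for step. The steps are:
\begin{itemize}
\item the reduction to the high block $M(\xi)$ through the unit low columns;
\item the splitting of $M_{\tau\tau}$ into a cluster-diagonal part with $\|E_G^{-1}\|\le2$ and an off-cluster Hilbert--Schmidt part, made small by taking $n_*$ large, followed by a Neumann series;
\item the Schur complement;
\item meromorphy of the middle columns, with your locally uniform $\ell^2$ bound filling in what the paper only asserts;
\item $\det\Sigma(0)\ne0$ from the invertible matrices of step~7.
\end{itemize}

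Two small points. First, the change-of-basis remark in your first paragraph is not needed. It is also not correct as stated, since the in-cluster blocks $E_G$ are only close to unitary, not to the identity; nothing later depends on it. Second, you should add the paper's one-line check that $\xi=1$ is not a pole of the middle columns, because the target poles are not eigenvalues of $\cA'$. That check is what gives the clause that the exceptional set contains $1$ only if $\det\Sigma(1)=0$.
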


\begin{proof}
Structure: since $R_\xi\tilde e_{\mu_j}=X'_j$, the low columns of the matrix of $R_\xi$ in the coordinates of $\{X'_k\}$ are unit vectors in the low coordinates, so this matrix is block upper triangular with the identity in the low block and $M(\xi)$ in the high block, and $R_\xi$ is invertible if and only if $M(\xi)$ is. (i) The tail columns do not involve $\xi$. $M_{\tau\tau}=E+N$, where $E$ is block diagonal with the blocks $E_G\coloneqq(\ip{\tilde e_\nu}{X_k}_\cX)_{k,\nu\in G}$, the matrix of $Q_G:Q^0_G\cX\to Q_G\cX$ in orthonormal bases, $\|E_G^{-1}\|\le2$ by \eqref{eq:QQ}, and $N$ collects the off-cluster tail coordinates, $\|N\|_{\mathrm{HS}}^2\le\sum_{\nu\in\tau}C|c_\nu|^2\le C\sum_{n\ge n_*}n^{-2}$ by \eqref{eq:tailsum} and \eqref{eq:cbound}; for $n_*$ large $\|E^{-1}N\|<1$. (ii) $R_\xi-I$ is Hilbert--Schmidt by \eqref{eq:qc} and the finitely many middle and low columns, and by (i) $M(\xi)$ is invertible if and only if its Schur complement $\Sigma(\xi)$ is. (iii) By \eqref{eq:psimid} the middle columns are meromorphic with values in $\ell^2$, with poles only where $\nu(\xi)\in\sigma(\cA')\setminus\{\lambda_{k(\nu)}\}$, and $\det\Sigma$ is meromorphic. At $\xi=0$, $\psi_\nu(0)=\sum_{\lambda_k=\lambda_{k(\nu)}}\ip{B\hat c_\nu}{\tilde X'_k}_\cX\,X'_k$, so $M_{\tau m}(0)=0$ and $M_{mm}(0)$ is block diagonal with the invertible matrices of step~7 of Table~\ref{tab:fred}, up to the ordering of indices. A meromorphic function that is not identically zero has finitely many zeros and poles in the compact set $[0,2]$, and $\xi=1$ is not a pole because the target poles are not eigenvalues of $\cA'$. For $\xi$ outside the finite set, $R_\xi$ is invertible, the image of the orthonormal basis is a Riesz basis, the dual basis is bounded, $\sum|c_\nu(\xi)|^2<\infty$ by \eqref{eq:cbound}, so $F$ is bounded with $F\psi_\nu(\xi)=c_\nu(\xi)$, and $(\cA'_{-1}+BF)\psi_\nu(\xi)=\nu(\xi)\psi_\nu(\xi)$ by \eqref{eq:eigid}.
\end{proof}

\begin{lemma}[Closed loop is similar to the target on its natural domain]\label{lem:domain}
Let $\xi$ be as in Proposition~\ref{prop:inv}, $R\coloneqq R_\xi$, $F$ as in \eqref{eq:Fdef}, $K\coloneqq\kappa+F$ and
\begin{equation}\label{eq:AK}
\begin{gathered}
A_Kx\coloneqq\cA_{-1}x+BKx,\\
\mathcal D(A_K)\coloneqq\{x\in\cX:\cA_{-1}x+BKx\in\cX\}.
\end{gathered}
\end{equation}
Then $A_K=R\widetilde\cS_\xi R^{-1}$ with $\mathcal D(A_K)=R\,\mathcal D(\widetilde\cS_\xi)$.
\end{lemma}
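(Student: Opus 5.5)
We prove the two inclusions $R\,\mathcal D(\widetilde\cS_\xi)\subset\mathcal D(A_K)$ and $R^{-1}\mathcal D(A_K)\subset\mathcal D(\widetilde\cS_\xi)$. Together with the identity $A_KR=R\widetilde\cS_\xi$ on $\mathcal D(\widetilde\cS_\xi)$, these give the statement.

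\emph{Action on eigenvectors.} Work in the extrapolation space $\cX_{-1}$, where $A_K$ extends to the bounded operator $\cA_{-1}+BK$ from $\cX$ to $\cX_{-1}$. By Proposition~\ref{prop:inv}, for every target pole $\nu$ we have $(\cA'_{-1}+BF)\psi_\nu(\xi)=\nu(\xi)\psi_\nu(\xi)$. Since $\cA'_{-1}=\cA_{-1}+B\kappa$, this reads $(\cA_{-1}+BK)R\tilde e_\nu=R\widetilde\cS_\xi\tilde e_\nu$. For the low poles the same identity holds with $\psi_{\mu_j}=X'_j$ and $c_{\mu_j}=0$, by Proposition~\ref{prop:low}(ii). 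So the two operators agree on finite linear combinations of the $\tilde e_\nu$.

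\emph{Extension to the domain.} Take $x\in\mathcal D(\widetilde\cS_\xi)$, so $\sum|\nu(\xi)|^2|x_\nu|^2<\infty$, and let $x_N$ be its truncations. Then $x_N\to x$ in $\cX$. Also $R\widetilde\cS_\xi x_N\to R\widetilde\cS_\xi x$ in $\cX$, because $R$ is bounded. The map $\cA_{-1}+BK$ is continuous from $\cX$ to $\cX_{-1}$, since $K$ is bounded on $\cX$ and $B$ maps into $\cX_{-1}$. Passing to the limit in $\cX_{-1}$ gives $(\cA_{-1}+BK)Rx=R\widetilde\cS_\xi x\in\cX$. Hence $Rx\in\mathcal D(A_K)$ and $A_KRx=R\widetilde\cS_\xi x$.

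\emph{Reverse inclusion.} This is the main obstacle: a priori $\mathcal D(A_K)$ could be larger than $R\,\mathcal D(\widetilde\cS_\xi)$. We resolve it with resolvents. Fix $z$ real and large, outside $\{\nu(\xi)\}$. Then $z-\widetilde\cS_\xi$ is a bijection from $\mathcal D(\widetilde\cS_\xi)$ onto $\cX$, so by the first part $z-A_K$ maps $R\,\mathcal D(\widetilde\cS_\xi)$ onto $\cX$. It remains to show that $z-A_K$ is injective on $\mathcal D(A_K)$. Suppose $y\in\mathcal D(A_K)$ and $(z-\cA_{-1}-BK)y=0$. Then $y=(z-\cA_{-1})^{-1}BKy$, which is a multiple combination of the vectors $Y_{i,z}\in\cX$ of Proposition~\ref{prop:Bm}(i) (and the corresponding $V_c$ response). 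Apply the modal coordinates, that is, pair with the dual basis $\tilde\psi_\nu$. Since $\{\psi_\nu\}$ is a Riesz basis, each coordinate satisfies $(z-\nu(\xi))\ip{y}{\tilde\psi_\nu}_\cX=0$. To justify this, pair $y$ with $\tilde\psi_\nu$, which lies in the domain of the adjoint. The adjoint identity follows from the eigen-relation by duality, since $\ip{(\cA_{-1}+BK)y}{\tilde\psi_\nu}=\nu(\xi)\ip{y}{\tilde\psi_\nu}$ is the transpose of $A_K\psi_\nu=\nu(\xi)\psi_\nu$ tested against the Riesz basis. Because $z\neq\nu(\xi)$ for every $\nu$, all coordinates vanish, so $y=0$.

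It follows that $z\in\rho(A_K)$ and $(z-A_K)^{-1}=R(z-\widetilde\cS_\xi)^{-1}R^{-1}$. Taking ranges gives $\mathcal D(A_K)=R\,\mathcal D(\widetilde\cS_\xi)$, and then $A_K=R\widetilde\cS_\xi R^{-1}$.

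The delicate point is the duality step: checking that $\tilde\psi_\nu$ belongs to the domain of the adjoint of $A_K$, so that $A_K^*\tilde\psi_\nu=\overline{\nu(\xi)}\tilde\psi_\nu$. I would obtain it from the block upper-triangular structure of the matrix of $R$ and from $T-I\in\mathrm{HS}$ (Proposition~\ref{prop:inv}).
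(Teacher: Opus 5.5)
Your first two steps match the paper and are fine. The eigen-relations give $(\cA_{-1}+BK)R\tilde e_\nu=R\widetilde\cS_\xi\tilde e_\nu$, and continuity of $\cA_{-1}+BK:\cX\to\cX_{-1}$ extends this to $R\widetilde\cS_\xi R^{-1}\subset A_K$.

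The gap is injectivity of $z-A_K$ on all of $\mathcal D(A_K)$, which is the entire content of the reverse inclusion. Your duality step assumes $\tilde\psi_\nu\in\mathcal D(A_K^*)$ with $A_K^*\tilde\psi_\nu=\overline{\nu(\xi)}\tilde\psi_\nu$, and the justification you give is circular. Testing the eigen-relations of the $\psi_{\nu'}(\xi)$ against the dual basis yields $\ip{A_Ky}{\tilde\psi_\nu}_\cX=\nu(\xi)\ip{y}{\tilde\psi_\nu}_\cX$ only for $y$ in the graph-norm closure of $\spn\{\psi_{\nu'}(\xi)\}$. That closure is $R\,\mathcal D(\widetilde\cS_\xi)$, the very set you are trying to show exhausts $\mathcal D(A_K)$. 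Equivalently, $R\widetilde\cS_\xi R^{-1}\subset A_K$ gives $A_K^*\subset(R\widetilde\cS_\xi R^{-1})^*$. So knowing that $\tilde\psi_\nu=R^{-*}\tilde e_\nu$ is an eigenvector of the larger operator says nothing about the smaller one. Neither $T-I\in\mathrm{HS}$ nor the triangular structure of $R$ helps: they give no regularity of $\tilde\psi_\nu=T^*\tilde e_\nu$, and membership in $\mathcal D(A_K^*)$ needs exactly that.

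The missing idea is a norm estimate, and you were one line from it when you wrote $y=(z-\cA_{-1})^{-1}BKy$. Set $G_z\coloneqq(z-\cA_{-1})^{-1}B:\mathbb C^3\to\cX$. Its modal coordinates are $b_j\cdot c/(z-\lambda_j)$, and $|b_j|\le C_1j$ by Proposition~\ref{prop:Bm}(ii). Splitting the sum at $j\asymp\sqrt z$ gives $\|G_z\|^2\le C\sum_j j^2/(z+cj^2)^2=O(z^{-1/2})$. So for $z$ large $\|G_zK\|<1$, and $y=G_zKy$ forces $y=0$; indeed $(z-A_K)^{-1}=(I-G_zK)^{-1}(z-\cA)^{-1}$. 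This is how the paper closes the argument, and your surjectivity-plus-injectivity conclusion then goes through unchanged.

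The duality route can also be repaired, but only by proving $\tilde\psi_\nu\in\mathcal D(\cA)$. One way is to build an eigenvector of $\cA+K^*B^*$ at $\overline{\nu(\xi)}$ from a null vector of the adjoint of the $3\times3$ matrix $I-K(\nu(\xi)-\cA_{-1})^{-1}B$. You would then identify it with a multiple of $\tilde\psi_\nu$ by biorthogonality. That is considerably more work than the estimate on $G_z$.
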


\begin{theorem}[Three wall inputs assign any decay rate at each wavenumber]\label{thm:main}
Fix $\Pra>0$, $\Ra>0$, a decay rate $\omega>0$, $\epsilon\in(0,\epsilon_0)$, and shifts $q_1,q_2\ge\omega$ as in step~3 of Table~\ref{tab:static}.
There are a low-block size $d$ and an index $n_0$ such that, for every middle-cluster bound $n_*\ge n_0$ and the gains of Tables~\ref{tab:static} and~\ref{tab:fred} with these parameters, vectors $\hat c_\nu$ as required in step~7 exist, and for every sufficiently small $\delta>0$ there is $\xi_*\in[1,1+\delta)$, with $\xi_*=1$ unless $R_1$ is not invertible, such that the gain functions \eqref{eq:gains} at $\xi=\xi_*$ are in $L^2(0,1)$, the closed loop \eqref{eq:plant}, \eqref{eq:controller} generates an analytic semigroup, and for every initial state $X(0)=(V,\theta)(0)\in\hat\cX$ with $V(0)(1)=K_V[P_0X(0)]$
\begin{multline}\label{eq:decay}
\|u(t)\|^2+\|V(t)\|^2+\|\theta(t)\|^2\\
\le C\,e^{-2\omega t}\big(\|u(0)\|^2+\|V(0)\|^2+\|\theta(0)\|^2\big),\qquad t\ge0.
\end{multline}
The transformation $T\coloneqq R_{\xi_*}^{-1}$ has the form \eqref{eq:transf}, is bounded and boundedly invertible on $\cX$, $T-I$ is Hilbert--Schmidt, $TP_0$ maps the physical state onto the target with the eigenvectors $\tilde e_k$ and the eigenvalues $\nu_k(\xi_*)$, which is \eqref{eq:target} for $\xi_*=1$, and the closed-loop spectrum is $\{\nu_k(\xi_*)\}_{k\ge1}$: the values $\lambda^S_n-q_1$ and $\delta_m-q_2$, of which finitely many are moved by $O(\delta)$, all simple with real parts at most $-\omega$. Here $\epsilon_0\coloneqq\tfrac12\pi\min(1,\sqrt\Pra)$; $R_\xi$ is the operator on $\cX$ with $R_\xi\tilde e_k\coloneqq\sum_{j\ge1}\Psi_{jk}(\xi)X'_j$ for every $k$, the columns $\Psi_{\cdot k}(\xi)$ of step~7 taken for all $j\ge1$; $\hat\cX\coloneqq\{(V,\theta)\in H^1(0,1)\times L^2(0,1):V(0)=0\}$ with the norm of Definition~\ref{def:ops}; $P_0$ is the orthogonal projection of $\hat\cX$ onto $\cX$; $C\coloneqq\chi\big(1+\alpha\coth\alpha\,\|K_V\|^2\big)\|R_{\xi_*}\|^2\|R_{\xi_*}^{-1}\|^2$; $\chi\coloneqq\max(1,\Pra\Ra)/\min(1,\Pra\Ra)$; and $K_V$ is the $V_c$ row of \eqref{eq:fb}.
\end{theorem}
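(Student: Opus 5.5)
The proof is an assembly of the results already stated, carried out at a single wavenumber $\alpha>0$. The first task is to fix the parameters. By Proposition~\ref{prop:design}, with $b$ the first cluster boundary at or above $j_1$ and $d\coloneqq b-1$, the low block is a union of eigenspaces that contains every cluster of index below $j_1$. For shifts outside the exceptional set $\mathcal E$, as step~3 requires, the target poles are simple, lie in $\rho(\cA)$ and have real parts below $-\omega$. Proposition~\ref{prop:low}(i) then gives $K_L$, and Proposition~\ref{prop:design} also gives the middle vectors $\hat c_\nu$. We take $n_0\ge j_3$ large enough that Proposition~\ref{prop:inv}(i) holds, namely $\|E^{-1}N\|<1$ for every $n_*\ge n_0$. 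Proposition~\ref{prop:inv}(iii) makes the set of bad $\xi$ in $[0,2]$ finite. Taking $\delta$ smaller than the distance from $1$ to the nearest bad point other than $1$ itself, and smaller than the margin that keeps the finitely many moved poles $\nu(\xi)=\lambda_{k(\nu)}+\xi(\nu-\lambda_{k(\nu)})$ left of $-\omega$, we get some $\xi_*\in[1,1+\delta)$ at which $R_{\xi_*}$ is invertible, with $\xi_*=1$ whenever $R_1$ is invertible. The moved poles shift by $|\xi-1|\,|\nu-\lambda_{k(\nu)}|=O(\delta)$ and stay distinct for $\delta$ small.

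With $\xi_*$ fixed, Proposition~\ref{prop:inv} gives a Riesz basis $\{\psi_\nu\}$, a bounded $F$ and $T=R^{-1}$ with $T-I$ Hilbert--Schmidt, and Lemma~\ref{lem:domain} gives the similarity $A_K=R\widetilde\cS_{\xi_*}R^{-1}$. The operator $\widetilde\cS_{\xi_*}$ is diagonal in an orthonormal basis with real eigenvalues, except possibly finitely many moved ones, all at most $-\omega$ and tending to $-\infty$. It is therefore normal up to a finite-rank change and generates an analytic semigroup with $\|e^{t\widetilde\cS_{\xi_*}}\|\le e^{-\omega t}$ (for $\xi_*$ real all eigenvalues are real). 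Similarity transfers analyticity and gives $\|e^{tA_K}\|\le\|R\|\|R^{-1}\|e^{-\omega t}$; the closed-loop spectrum equals $\{\nu_k(\xi_*)\}$.

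\emph{Identifying the feedback with \eqref{eq:gains} and $T$ with \eqref{eq:transf}.} The control is $K=\kappa+F$, a bounded functional on $\cX$. Writing $K_r X=\sum_j K_{r+1,j}\ip{X}{X_j}_\cX$ and expanding the $\cX$ inner product gives $\ip{V}{\phi^V_r}_\alpha+\Pra\Ra\,\alpha^2\ip{\theta}{\phi^\theta_r}$ with $\phi_r=\sum_j\overline{K_{r+1,j}}X_j\in\cX$. Since $\phi_r$ is the Riesz representer of $K_r$, it has finite norm, and \eqref{eq:gains} follows from $\ip{V}{\phi}_\alpha=\int V_y\overline{\phi'}+\alpha^2V\overline\phi$ with $V_y=-i\alpha u$; hence $g^u,g^V,g^\theta\in L^2$. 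For $T$, note that $T-I$ is Hilbert--Schmidt on $\cX=\Ha\times L^2$. Representing its two output components against $(V_y,V,\theta)$ through the Hilbert--Schmidt kernel theorem in $L^2$ produces a kernel $k\in L^2((0,1)^2;\mathbb C^{2\times3})$. The $V$-output lies in $\Ha$ rather than $L^2$, which costs only a change of norm.

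\emph{Physical state with $V_c\ne0$.} This is the main obstacle, because the physical state leaves $\cX$. Following Proposition~\ref{prop:plant}(iii), we write $X=P_0X+V_c(h_\alpha,0)$ with $V_c=K_V[P_0X]$, so that the compatibility condition on $X(0)$ is preserved by the flow. By \eqref{eq:modal-ode}, $P_0X$ evolves by $A_K$, so $\|P_0X(t)\|$ decays at rate $\omega$ with constant $\|R\|\|R^{-1}\|$. Then $\|X\|^2\le(1+\alpha\coth\alpha\|K_V\|^2)\|P_0X\|^2$, and $\|P_0X(0)\|\le\|X(0)\|$. Converting the $\hat\cX$ norm to $\|u\|^2+\|V\|^2+\|\theta\|^2$ by the identity in Proposition~\ref{prop:plant}(iii) costs the factor $\chi$, because of the weight $\Pra\Ra$ on $\theta$. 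This yields \eqref{eq:decay} with the stated $C$. The delicate point is to check that solutions of the closed loop in this affine lift are exactly the classical solutions of \eqref{eq:plant}, \eqref{eq:controller}. We would verify this on $\mathcal D(A_K)$ via Lemma~\ref{lem:domain} and extend by density.
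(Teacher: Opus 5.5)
Your assembly follows the paper's proof step for step: $d$ and $n_0$ from Propositions~\ref{prop:design} and~\ref{prop:inv}, $\xi_*$ off the finite exceptional set, the similarity of Lemma~\ref{lem:domain}, the lift $X=P_0X+V_c(h_\alpha,0)$ with the factors $1+\alpha\coth\alpha\,\|K_V\|^2$ and $\chi$, the Riesz representer $\phi_r$ for \eqref{eq:gains}, and the Hilbert--Schmidt kernel theorem for \eqref{eq:transf}. The gap is the step you flag as delicate: it is left open, and the route you propose does not close it.

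What is needed is the converse of Proposition~\ref{prop:plant}(iii): every $x\in\mathcal D(A_K)$ must be a state in $H^3\times H^2$ that carries the wall data $Kx$ and solves \eqref{eq:liftsys}. Without it you know only that classical solutions, if they exist, are governed by $A_K$. You do not know that $L_Ke^{A_Kt}P_0$, with $L_Kx\coloneqq x+K_V[x](h_\alpha,0)$, is the closed loop \eqref{eq:plant}, \eqref{eq:controller}. That identification is what the claim of an analytic semigroup and the decay for all initial states in $\hat\cX$ rest on.

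Lemma~\ref{lem:domain} describes $\mathcal D(A_K)$ as $R\,\mathcal D(\widetilde\cS_{\xi_*})$ and says nothing about traces. A density argument would need the traces $V_y(1)$, $\theta(1)$ and the equations to pass to the limit in the graph norm of $A_K$, which is exactly the point at issue. The paper closes this with a direct lift. For $w\coloneqq Kx$, take a smooth $L_w\in\cX$ with the boundary data \eqref{eq:liftedbc}--\eqref{eq:liftedbc2}, as in the proof of Proposition~\ref{prop:Bm}(i), and compute, as in \eqref{eq:Lw}, that $\cA_{-1}L_w+Bw\in\cX$. Then $\cA_{-1}(x-L_w)=A_Kx-(\cA_{-1}L_w+Bw)\in\cX$, so $x-L_w\in\mathcal D(\cA)$. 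Hence $x$ has the wall data $w$ and solves \eqref{eq:liftsys}, with no density step.

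Two smaller identifications are also skipped.
\begin{itemize}
\item The theorem's $R_\xi$ and $F$ are defined by the tables, while Proposition~\ref{prop:inv} is about the objects of Definition~\ref{def:psi}. They agree because the columns $\Psi_{\cdot k}(\xi)$ of step~7 are the $\{X'_j\}$-coordinates of $\psi_{\nu_k}(\xi)$, using $\ip{Bc}{\tilde X'_j}_\cX=\beta^{(3)}_j\cdot c$ and $\ip{\tilde e_k}{X_j}_\cX=E_{jk}$. Consequently \eqref{eq:rc-F} is the matrix of \eqref{eq:Fdef}.
\item The implemented controller \eqref{eq:controller} integrates the physical profiles, which contain $V_c(h_\alpha,0)$. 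It returns $K[P_0X]$, as your lift assumes, only because $\phi_r\in\cX$ is orthogonal to $(h_\alpha,0)$ by Proposition~\ref{prop:plant}(iii). Otherwise $V_c$ would enter its own definition, and your lift with $V_c=K_V[P_0X]$ would not be the feedback actually applied.
\end{itemize}
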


\begin{proof}[Proof of Theorem~\ref{thm:main}]
Take $d$ as in Proposition~\ref{prop:design}, which also gives the vectors $\hat c_\nu$, and $n_0\ge j_3$ as large as Proposition~\ref{prop:inv} requires. In the basis $\{X'_j\}$ of \eqref{eq:eigApsys}, whose plant coordinates are the columns of $W$ in \eqref{eq:rc-W}, the columns $\Psi_{\cdot k}(\xi)$ of step~7 are the coordinates of $\psi_\nu(\xi)$ of Definition~\ref{def:psi}, $\nu=\nu_k$, since $\ip{Bc}{\tilde X'_j}_\cX=\beta^{(3)}_j\cdot c$ and $\ip{\tilde e_\nu}{X_j}_\cX=E_{jk}$; so $R_\xi$ of Theorem~\ref{thm:main} is the operator of Proposition~\ref{prop:inv}, and $\Xi(\xi)\Psi(\xi)^{-1}W^{-1}$ in \eqref{eq:rc-F} is the matrix in plant coordinates of $F$ in \eqref{eq:Fdef}. Take the finite exceptional set of Proposition~\ref{prop:inv} and $\xi_*\in[1,1+\delta)$ outside it, $\xi_*=1$ when $R_1$ is invertible, that is, when $\det\Sigma(1)\ne0$. The finitely many middle poles move by $|\xi_*-1|\,|\nu-\lambda_{k(\nu)}|$, less than the distance of the target spectrum from $-\omega$ and than the gaps of the target spectrum for $\delta$ small, so $\widetilde\cS_{\xi_*}$ has simple spectrum with real parts at most $-\omega$. With $R\coloneqq R_{\xi_*}$ and $F$ from \eqref{eq:Fdef}, Lemma~\ref{lem:domain} gives $A_K=R\widetilde\cS_{\xi_*}R^{-1}$, which generates the analytic semigroup $Re^{\widetilde\cS_{\xi_*}t}R^{-1}$ with norm at most $\|R\|\|R^{-1}\|e^{-\omega t}$. By Proposition~\ref{prop:plant}(iii), whose modal equations \eqref{eq:modal-ode} are the coordinates of $\dot{\tilde X}=\cA_{-1}\tilde X+B(V_c,U_c,T_c)^\top$ in the eigenbasis, $A_K$ governs the component $\tilde X=P_0X$ of the physical state. Conversely, for $x\in\mathcal D(A_K)$ and $w\coloneqq Kx$, let $L_w$ be a smooth state with the boundary data \eqref{eq:liftedbc}--\eqref{eq:liftedbc2} for the inputs $w$, as in the proof of Proposition~\ref{prop:Bm}(i); then, with $L_w=(\ell,\eta)$,
\begin{equation}\label{eq:Lw}
\cA_{-1}L_w+Bw=\begin{pmatrix}\Lap^{-1}\big(\Pra\Lap^2\ell-\Pra\Ra\,\alpha^2\eta\big)\\ \Lap\eta+\ell+w_Vh_\alpha\end{pmatrix}\in\cX ,
\end{equation}
$w_V$ the $V_c$ component of $w$, so $x-L_w\in\mathcal D(\cA)$ and $x$ has the wall data $w$ and solves \eqref{eq:liftsys}. The physical state is $X=L_K\tilde X$, $L_K x\coloneqq x+K_V[x](h_\alpha,0)$, a bounded isomorphism of $\cX$ onto $\mathcal G_K\coloneqq\{X\in\hat\cX:V(1)=K_V[P_0X]\}$ with inverse $P_0$, and the closed loop is the analytic semigroup $L_K e^{A_Kt}P_0$ on $\mathcal G_K$; so $\|X(t)\|^2\le(1+\alpha\coth\alpha\,\|K_V\|^2)\|\tilde X(t)\|^2$ and $\|\tilde X(0)\|\le\|X(0)\|$. With the energy identity of Proposition~\ref{prop:plant}(iii) and $\chi\coloneqq\max(1,\Pra\Ra)/\min(1,\Pra\Ra)$, \eqref{eq:decay} holds with $C=\chi(1+\alpha\coth\alpha\,\|K_V\|^2)\|R\|^2\|R^{-1}\|^2$. The rows of $K$ are the bounded functionals $X\mapsto\ip{X}{\phi_r}_\cX$ with
\begin{equation}\label{eq:phir}
\phi_r=\sum_{i\le d}\overline{(K_L)_{r+1,i}}\,X_i+\sum_\nu\overline{(c_\nu(\xi_*))_{r+1}}\,\tilde\psi_\nu ,
\end{equation}
$\{\tilde\psi_\nu\}$ the basis dual to $\{\psi_\nu(\xi_*)\}$, the series converging in $\cX$ by Proposition~\ref{prop:inv}; this is $\phi_r$ of step~9, so the gain functions \eqref{eq:gains} are in $L^2(0,1)$ and \eqref{eq:gains} is $\ip{X}{\phi_r}_\cX$ written with $V_y=-i\alpha u$; on the physical state it takes the same value because $\phi_r\in\cX$ is orthogonal to $(h_\alpha,0)$. Finally, $J(V,\theta)\coloneqq(V_y,\alpha V,\alpha\sqrt{\Pra\Ra}\,\theta)$ is an isometry of $\cX$ into $L^2(0,1)^3$, $J(T-I)J^*$ is Hilbert--Schmidt on $L^2(0,1)^3$ and hence an integral operator with a kernel in $L^2((0,1)^2)$, and its second and third components give \eqref{eq:transf}, applied to $P_0X$.
\end{proof}

\section{Closed loop in physical variables}\label{sec:phys}

\begin{proposition}[Inputs are real, and the mean fields and the unactuated wavenumbers decay at the prescribed rate]\label{prop:mean}
Fix $\Ra$, $\Pra$ and $L$.
\begin{itemize}
\item[(i)] The plant at $-\alpha$ is the complex conjugate of the plant at $\alpha$. If the gain functions at $-\alpha$ are the conjugates of those at $\alpha$, the physical inputs $U_c(x,t)$, $T_c(x,t)$, $V_c(x,t)$ are real for real fields, and the closed-loop spectrum at $-\alpha$ is that at $\alpha$.
\item[(ii)] At $\alpha=0$, $V\equiv0$ and the mean fields obey
\begin{subequations}\label{eq:meansys}
\begin{gather}
\bar u_t=\Pra\,\bar u_{yy},\qquad\bar u(0)=0,\quad\bar u(1)=U_{c,0},\label{eq:mean}\\
\bar\theta_t=\bar\theta_{yy},\qquad\bar\theta(0)=0,\quad\bar\theta(1)=T_{c,0},\label{eq:mean2}
\end{gather}
\end{subequations}
decoupled from each other and from every $\alpha\ne0$. With the feedback \eqref{eq:meanfbsys},

the Volterra transformations $\bar\Theta=\bar\theta-\int_0^yk\,\bar\theta$ and $\bar w=\bar u-\int_0^yk_u\,\bar u$ map \eqref{eq:meansys} onto $\bar\Theta_t=\bar\Theta_{yy}-q_0\bar\Theta$ and $\bar w_t=\Pra\bar w_{yy}-q_u\bar w$, which decay at rates $\pi^2+q_0$ and $\Pra\pi^2+q_u$; $\bar\theta$ and $\bar u$ decay at the same rates with constants the condition numbers of the two transformations.
\item[(iii)] $\lambda_1(\alpha)\le-\min(\Pra,1)(\pi^2+\alpha^2)+\kappa_0(\alpha)$, and $\kappa_0(\alpha)$ is bounded as $\alpha\to\infty$; so for every $\omega>0$ there is $M(\omega)$ with $\lambda_1(\alpha_m)\le-\omega$ for all $|m|>M(\omega)$.
\end{itemize}
\end{proposition}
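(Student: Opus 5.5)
The three parts are independent, and I would prove them in order (i), (ii), (iii).

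For (i), take the complex conjugate of \eqref{eq:plant} at $\alpha$. The operator $\Lap=\partial_{yy}-\alpha^2$ depends on $\alpha$ only through $\alpha^2$, so it is unchanged under $\alpha\mapsto-\alpha$. The velocity equation \eqref{eq:pV} also contains $\alpha$ only through $\alpha^2$, and the coefficients are real. The boundary datum $V_y(1)=-i\alpha U_c$ becomes, after conjugation, $\overline{V}_y(1)=i\alpha\overline{U_c}=-i(-\alpha)\overline{U_c}$. Hence $(\overline V,\overline\theta)$ solves the plant at $-\alpha$ with inputs $(\overline{V_c},\overline{U_c},\overline{T_c})$. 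This is the Fourier-coefficient relation $X_{-m}=\overline{X_m}$ of a real field. In \eqref{eq:controller} the factor $u=V_y/(-i\alpha)$ is also consistent under conjugation. So conjugate gains at $-\alpha$ give inputs at $-m$ that are the conjugates of those at $m$. The inverse Fourier series of each input is then real, and the inputs at $|m|>M$ are zero. The closed-loop operator at $-\alpha$ is the conjugate of the one at $\alpha$. Its spectrum is therefore the conjugate set, and this equals the spectrum at $\alpha$ because that spectrum is real (simple real values $\nu_k(\xi_*)$ by Theorem~\ref{thm:main}, or real eigenvalues of $\cA$ on the unactuated band by Proposition~\ref{prop:plant}(i)).

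For (ii), average \eqref{eq:NS} over $x$. Incompressibility together with $v(0)=0$ gives $\bar v_y=0$, so $\bar v\equiv0$. Linearization removes the quadratic terms. The $x$-derivatives average to zero, so $\bar p_x$ drops out of the $u$-equation and the buoyancy term is absorbed into $\bar p_y$. This leaves \eqref{eq:meansys}, with boundary conditions from \eqref{eq:NSbc} at zero wavenumber. Next, verify the standard backstepping kernel. For $\bar\Theta=\bar\theta-\int_0^yk\bar\theta$ with target $\bar\Theta_t=\bar\Theta_{yy}-q_0\bar\Theta$, the kernel PDE is
\begin{equation*}
k_{yy}-k_{\eta\eta}=q_0k,\qquad k(y,0)=0,\qquad k(y,y)=-q_0y/2.
\end{equation*}
The closed-form solution of this PDE is \eqref{eq:meanfbk}. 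The target boundary values are $\bar\Theta(0)=0$ and $\bar\Theta(1)=T_{c,0}-\int_0^1k(1,\eta)\bar\theta\,\dd\eta=0$ by \eqref{eq:meanfb}. An energy estimate for the target gives decay at rate $\pi^2+q_0\ge\omega$. The velocity case is the same after rescaling by $\Pra$, which yields $q_u/\Pra$ in \eqref{eq:meanfbku} and the rate $\Pra\pi^2+q_u$. Both Volterra transformations have bounded inverses with $L^\infty$ kernels, so $\bar\theta$ and $\bar u$ decay at the same rates, with constants equal to the condition numbers.

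For (iii), use Proposition~\ref{prop:cluster}(i). We have $\cA=\cS+\mathcal K$ with $\cA$ self-adjoint, so $\lambda_1\le\sup\sigma(\cS)+\|\mathcal K\|$ by the Rayleigh quotient. Then
\begin{equation*}
\sup\sigma(\cS)=\max(\lambda^S_1,\delta_1)\le-\min(\Pra,1)(\pi^2+\alpha^2),
\end{equation*}
using \eqref{eq:lam} with $\sigma_1>\pi$ and \eqref{eq:delta}. Finally, $\kappa_0(\alpha)=\alpha\sqrt{\Pra\Ra}/\sqrt{\pi^2+\alpha^2}\le\sqrt{\Pra\Ra}$. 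So $\lambda_1(\alpha)\to-\infty$ as $\alpha\to\infty$, and $M(\omega)$ exists.

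The only step that needs real care is the kernel verification in (ii): checking that the Bessel expression satisfies the boundary condition on the diagonal, and confirming that the sign convention matches the feedback sign in \eqref{eq:meanfb}.
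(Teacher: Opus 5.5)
Your proposal is correct and follows the paper's own route. For (ii) it uses the $x$-average with $\bar v\equiv0$ from incompressibility and $v(0)=0$, followed by the classical Bessel kernel with $I_1(s)/s\to\tfrac12$ on the diagonal; for (iii) it uses the Rayleigh quotient of $\cS+\mathcal K$ with $\|\mathcal K\|\le\kappa_0\le\sqrt{\Pra\Ra}$. The paper gives no separate argument for (i), and your conjugation check is the intended one; the only thing to add is that the mean-mode inputs $U_{c,0},T_{c,0}$ are real because the kernels \eqref{eq:meanfbk}, \eqref{eq:meanfbku} are real.
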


\begin{proof}[Proof of Theorem~\ref{thm:phys}]
The Fourier decomposition in $x$ is orthogonal in $L^2$ and diagonalizes the linearized dynamics wavenumber by wavenumber. Each actuated wavenumber decays with its constant by Theorem~\ref{thm:main}; each unactuated one is a self-adjoint contraction in the weighted norm of $\cX$ with rate $\ge\omega$ by Proposition~\ref{prop:mean}(iii), hence decays in the unweighted norm with constant $\chi$; the mean fields by Proposition~\ref{prop:mean}(ii). The unactuated wavenumbers are self-adjoint with a uniform sector, so together they generate an analytic semigroup; adding the finitely many remaining wavenumbers, each analytic, preserves analyticity. Summing the squared norms gives \eqref{eq:physdecay}.
\end{proof}

\section{Conclusions}

The RB layer is the channel of \cite{NSF26} with the base flow removed and a heat equation attached. With the two-fluid channel of \cite{Yih26}, three plants have now been stabilized by the same Fredholm design, and the difficulty lies in a different place in each. In the channel the plant operator is not self-adjoint, and the Riesz basis of its root vectors is obtained by perturbation from the Stokes operator. In the two-fluid channel the unstable state is the interface, reached across a fluid; the eigenvalue sequences of the two layers, which come arbitrarily close when the layers are uncoupled, are uniformly separated once the layers are joined, so one Fredholm input suffices, as in the channel. In the RB layer the plant operator is self-adjoint and the basis property is immediate, but the velocity and temperature families are not separated by the coupling: their eigenvalues occur in close pairs, and the high spectrum is assigned in clusters of up to two, with two Fredholm inputs, $U_c$ and $T_c$, and a $2\times2$ transformation.

The number of Fredholm inputs is therefore determined by the high spectrum and not by the number of coupled subsystems. The two-fluid channel has three subsystems, two fluids and an interface, and needs one Fredholm input; the RB layer has two and needs two. In all three plants the normal wall velocity carries a finite-rank feedback acting on finitely many modes.

The target is determined by the coupling responsible for the instability. In the two-fluid channel the target keeps the transmission of stress between the fluids and deletes the coupling of the interface into them. In the RB layer neither the velocity nor the temperature is unstable alone, and the target deletes the coupling in both directions, leaving the Stokes and heat equations, each with its own decay rate.

For the thermal convection loop, \cite{VK06} proved stability when the viscosity is large relative to the thermal diffusivity, that is, for large Prandtl number, by a singular perturbation that leaves a single equation for the temperature, stabilized by a Volterra design. Here no separation of time scales is needed: the velocity and the temperature are treated together by one Fredholm transformation, for every ratio of viscosity to thermal diffusivity.

The design uses the following structure of the RB layer: at each wavenumber, two scalar parabolic equations on an interval, of fourth and second order in space, each stable alone, coupled through bounded terms that close a destabilizing loop, with boundary inputs acting on both and with eigenvalue families of different densities that come close in pairs. The same structure, sometimes with more than two subsystems, appears in problems far from the heated layer. In double-diffusive convection \cite{Stern60}, which produces salt fingers in the ocean, the Boussinesq equations \eqref{eq:NS} carry a second scalar, the salinity, with a diffusivity of its own. In convection through a porous medium, the Darcy--Boussinesq equations \cite{HR45,Lapwood48} replace the fourth-order equation of the velocity by Darcy's law. In the Cahn--Hilliard--Navier--Stokes model of two fluids that separate, whose boundary stabilization in a channel is in \cite{Mun21}, the fourth-order Cahn--Hilliard equation takes the place of the heat equation. Outside fluid mechanics, the linearized Keller--Segel model of chemotaxis \cite{KS70} couples the diffusion of a cell density with that of a chemical attracting the cells; each is stable alone, and the loop between them produces aggregation. These plants differ from the RB layer in the orders of their equations, the number of their eigenvalue families, or the symmetry of their coupling, and each requires its own analysis. However, for the features they share with the RB layer, the tools developed here are the clustering of eigenvalues from different families and the Fredholm transformation with one boundary input per family.

A plant from a physical domain far from thermal convection has the structure of the RB layer together with the shear flow of the channel of \cite{NSF26}. Boundary control of the Hartmann flow, a conducting fluid driven between plates across an imposed transverse magnetic field \cite{Hartmann37}, was designed by Volterra backstepping in \cite{XSVK08} in the inductionless approximation, valid at small magnetic Reynolds number, where the magnetic field is not a state. At magnetic Reynolds number of order one the magnetic field is a state; in two dimensions it is represented by a flux function $A$, with $(B_x,B_y)=(A_y,-A_x)$, and resistive magnetohydrodynamics takes the form
\begin{subequations}\label{eq:mhd}
\begin{align}
\omega_t+\psi_y\omega_x-\psi_x\omega_y&=\frac1{\mathrm{Re}}\Delta\omega\nonumber\\&\quad+N\big(A_y(\Delta A)_x-A_x(\Delta A)_y\big),\label{eq:mhdvort}\\
A_t+\psi_yA_x-\psi_xA_y&=\frac1{\mathrm{Rm}}\Delta A,\label{eq:mhdflux}
\end{align}
\end{subequations}
with $\omega$, $\psi$ as in \eqref{eq:nl}, the Reynolds and magnetic Reynolds numbers $\mathrm{Re}$, $\mathrm{Rm}$, and the interaction parameter $N$. The flux function plays the role of the temperature: with the imposed field $B_y=1$, $A=-x+a$, the perturbation $a$ obeys \eqref{eq:temp} with $\psi_y$ in place of $-\psi_x$ and the diffusivity $1/\mathrm{Rm}$, and the Lorentz term in \eqref{eq:mhdvort} contains $N(\Delta a)_y$ in place of $-\Pra\Ra\,\theta_x$ in \eqref{eq:vort}. The inputs are the two wall velocities and the wall flux, set by electromagnets outside the plate. Unlike the RB layer, the Hartmann flow carries a net flow, so the plant has the shear terms of the channel of \cite{NSF26} in addition to the two-way coupling of the RB layer; its stabilization requires merging the construction of \cite{NSF26} with that of this paper, which is not carried out here. The model extends the inductionless one of \cite{XSVK08} to one in which the magnetic field is a state. With an equilibrium field that reverses direction across the layer, the equations \eqref{eq:mhd} also govern the tearing mode \cite{FKR63}, the resistive instability of tokamak plasmas in which magnetic field lines break and reconnect at the surface where the equilibrium field is orthogonal to the wavenumber. The tearing mode exists only because of the magnetic diffusivity $1/\mathrm{Rm}$, and is absent from the inductionless model.

\appendix

\section{Proofs of supporting results}\label{app:proofs}

\subsection{Plant, uncoupled spectra, and clustering}

\begin{proof}[Proof of Proposition~\ref{prop:plant}(i)]
In the Hilbert space $H^1_0$ with the $\alpha$-inner product, $-S_\alpha$ is the operator of the closed positive form
\begin{equation}\label{eq:aform}
a(V,W)\coloneqq\Pra\int_0^1\Lap V\,\overline{\Lap W}\,\dd y,\qquad V,W\in H^2_0(0,1),
\end{equation}
with $\ip{S_\alpha V}{W}_\alpha=-a(V,W)$; so $S_\alpha$ is self-adjoint and negative, with domain $H^3\cap H^2_0$ by one-dimensional elliptic regularity, and compact resolvent by the compact embedding $H^2\hookrightarrow H^1$. The Dirichlet Laplacian is self-adjoint and negative on $L^2$ with compact resolvent. $\cA=\cS+\mathcal K$ with $\cS\coloneqq\operatorname{diag}(S_\alpha,\Lap)$ and $\mathcal K$ bounded and symmetric on $\cX$ (Proposition~\ref{prop:cluster}(i), whose proof uses only the norm of Definition~\ref{def:ops}), so $\cA$ is self-adjoint with compact resolvent. Reflection invariance follows from the constancy of every coefficient in \eqref{eq:plant}.
\end{proof}

\begin{proof}[Proof of Proposition~\ref{prop:plant}(ii)]
Multiply the four equations
\begin{subequations}\label{eq:four}
\begin{gather}
s\Lap V-\Pra\Lap^2V+\Pra\Ra\,\alpha^2\theta=0,\label{eq:four1}\\
\lambda\Lap e-\Pra\Lap^2e+\Pra\Ra\,\alpha^2\vartheta=0,\label{eq:four1e}\\
s\theta-\Lap\theta-V=0,\qquad \lambda\vartheta-\Lap\vartheta-e=0,\label{eq:four2}
\end{gather}
\end{subequations}
by $-e$, $-V$, $\Pra\Ra\,\alpha^2\vartheta$, $\Pra\Ra\,\alpha^2\theta$ respectively; integrate and subtract eigenvector from resolvent in each pair. The four cross terms cancel. With $\int(\Lap f)g-\int f(\Lap g)=[f'g-fg']_0^1$ applied once to the second-order and twice to the fourth-order terms,
\begin{multline}\label{eq:lagr}
(s-\lambda)\ip{(V,\theta)}{(e,\vartheta)}_\cX
=-\lambda[e'V-eV']_0^1\\
-\Pra\big[(\Lap V)'e-(\Lap V)e'-(\Lap e)'V+(\Lap e)V'\big]_0^1\\
+\Pra\Ra\,\alpha^2[\theta'\vartheta-\theta\vartheta']_0^1 .
\end{multline}
Every term vanishes at $y=0$. At $y=1$, $e(1)=e'(1)=\vartheta(1)=0$ leave
\begin{equation}\label{eq:wallterms}
\begin{split}
&-\Pra\big[-(\Lap e)'(1)V_c+(\Lap e)(1)V_y(1)\big]\\
&\quad-\Pra\Ra\,\alpha^2T_c\,\vartheta'(1),
\end{split}
\end{equation}
and since $e(1)=e'(1)=0$, $(\Lap e)(1)=e''(1)$, $(\Lap e)'(1)=e'''(1)$; with $V_y(1)=-i\alpha U_c$ this is \eqref{eq:residue}.
\end{proof}

\begin{proof}[Proof of Proposition~\ref{prop:plant}(iii)]
For $V\in H^1_0$, $\ip{V}{h_\alpha}_\alpha=[Vh_\alpha']_0^1-\int_0^1V\Lap h_\alpha\,\dd y=0$, since $V$ vanishes at the walls and $\Lap h_\alpha=0$; $\|h_\alpha\|_\alpha^2=[h_\alpha h_\alpha']_0^1=\alpha\coth\alpha$; and $V=(V-V(1)h_\alpha)+V(1)h_\alpha$. The energy identity is $u=V_y/(-i\alpha)$. With $\tilde V\coloneqq V-V_ch_\alpha$, $\Lap h_\alpha=0$ removes $\dot V_c$ and $V_c$ from the velocity equation:
\begin{subequations}\label{eq:liftsys}
\begin{gather}
\Lap\tilde V_t=\Pra\Lap^2\tilde V-\Pra\Ra\,\alpha^2\theta,\label{eq:lifted}\\
\theta_t=\Lap\theta+\tilde V+V_ch_\alpha,\label{eq:liftedT}\\
\tilde V(0)=\tilde V(1)=0,\qquad\tilde V'(0)=-\frac{\alpha}{\sinh\alpha}V_c,\label{eq:liftedbc}\\
\tilde V'(1)=-i\alpha U_c-\alpha\coth\alpha\,V_c,\label{eq:liftedbc1}\\
\theta(0)=0,\qquad\theta(1)=T_c .\label{eq:liftedbc2}
\end{gather}
\end{subequations}
The computation of Proposition~\ref{prop:plant}(ii), with the time derivative in place of $s$ and $X_j=(e,\vartheta)$ real, gives
\begin{multline}\label{eq:liftmodal}
\frac{\dd}{\dd t}\ip{\tilde X}{X_j}_\cX-\lambda_j\ip{\tilde X}{X_j}_\cX\\
=i\alpha\Pra\,e''(1)U_c-\Pra\Ra\,\alpha^2\vartheta'(1)T_c
+\Pra\Big(\alpha\coth\alpha\,e''(1)\\
-\frac{\alpha}{\sinh\alpha}\,e''(0)+\Ra\,\alpha^2\int_0^1\vartheta\,h_\alpha\,\dd y\Big)V_c ,
\end{multline}
the last term collecting the two wall slopes of $\tilde V$ and the source $V_ch_\alpha$. Integrating $\Pra\Lap^2e-\lambda_j\Lap e=\Pra\Ra\,\alpha^2\vartheta$ against $h_\alpha$, with $\int_0^1(\Lap e)h_\alpha\,\dd y=0$ by the clamped conditions,
\begin{equation}\label{eq:b0id}
e'''(1)=\alpha\coth\alpha\ e''(1)-\frac{\alpha}{\sinh\alpha}\,e''(0)+\Ra\,\alpha^2\int_0^1\vartheta\,h_\alpha\,\dd y ,
\end{equation}
so the coefficient of $V_c$ in \eqref{eq:liftmodal} is $b_0[X_j]$. Finally $\ip{\tilde X}{X_j}_\cX=\ip{X}{X_j}_\cX$ by the orthogonality.
\end{proof}

\begin{proof}[Proof of Proposition~\ref{prop:joint}(i)]
Solutions of $\Lap(\Pra\Lap-\lambda)V=0$ with $\lambda=-\Pra(\sigma^2+\alpha^2)$ are combinations of $\cosh\alpha z,\sinh\alpha z,\cos\sigma z,\sin\sigma z$, $z=y-\tfrac12$. On the symmetric branch $V=A\cosh\alpha z+C\cos\sigma z$ and $V(\tfrac12)=V'(\tfrac12)=0$ give $A\cosh\tfrac\alpha2+C\cos\tfrac\sigma2=0$, $A\alpha\sinh\tfrac\alpha2-C\sigma\sin\tfrac\sigma2=0$, whence \eqref{eq:roots}; the antisymmetric branch gives \eqref{eq:roots2}. With $\sigma=2s$, \eqref{eq:roots} reads $\tan s=-(\alpha/\sigma)\tanh\tfrac\alpha2$ and \eqref{eq:roots2} reads $\tan s=(\sigma/\alpha)\tanh\tfrac\alpha2$, giving \eqref{eq:sigma}. For \eqref{eq:etr}: with $A=-\cos\tfrac\sigma2/\cosh\tfrac\alpha2$ and the $\Ha$-normalized amplitude $\sqrt2/\sigma$,
\begin{subequations}\label{eq:e23}
\begin{align}
e_n''(1)&=\sqrt2\sigma^{-1}\big[A\alpha^2\cosh\tfrac\alpha2-\sigma^2\cos\tfrac\sigma2\big]\simeq-\sqrt2\,\sigma\cos\tfrac\sigma2,\label{eq:e2}\\
e_n'''(1)&=\sqrt2\sigma^{-1}\big[A\alpha^3\sinh\tfrac\alpha2+\sigma^3\sin\tfrac\sigma2\big]\nonumber\\
&=-\sqrt2\sigma^{-1}\alpha\tanh\tfrac\alpha2\cos\tfrac\sigma2\,(\alpha^2+\sigma^2),\label{eq:e3}
\end{align}
\end{subequations}
the last equality by \eqref{eq:roots}; two powers of $\sigma$ cancel. The antisymmetric branch is the same with $\coth$. \eqref{eq:ttr} is direct.
\end{proof}

\begin{proof}[Proof of Proposition~\ref{prop:joint}(ii)--(iii)]
\eqref{eq:jidx}: $|\lambda^S_n|=\Pra\pi^2n^2(1+O(1/n))$ and $|\delta_m|=\pi^2m^2(1+O(1/m))$, so the number of $\delta_m$ above $\lambda^S_n$ is $\Pra^{1/2}n+O(1)$ and the number of $\lambda^S_n$ above $\delta_m$ is $\Pra^{-1/2}m+O(1)$. \eqref{eq:bsys} are \eqref{eq:functionals} with \eqref{eq:trsys}. For the flux case the trace identity replaces $-\Pra\Ra\,\alpha^2\vartheta'(1)T_c$ by $-\Pra\Ra\,\alpha^2\vartheta(1)q_c/k_{\rm th}$, and $|\vartheta(1)|=\sqrt2/(\alpha\sqrt{\Pra\Ra})$ on the normalized uncoupled eigenvectors.
\end{proof}

\begin{proof}[Proof of Proposition~\ref{prop:cluster}(i)]
Two estimates:
\begin{subequations}\label{eq:Ksys}
\begin{gather}
\begin{aligned}\|(\Pra\Ra\,\alpha^2\Lap^{-1}\theta,0)\|^2_\cX&=(\Pra\Ra\,\alpha^2)^2\ip{\theta}{-\Lap^{-1}\theta}\\&\le\kappa_0^2\,\|(0,\theta)\|^2_\cX,\end{aligned}\label{eq:K1}\\
\begin{aligned}\|(0,V)\|_\cX^2=\Pra\Ra\,\alpha^2\|V\|^2&\le\frac{\Pra\Ra\,\alpha^2\|V\|_\alpha^2}{\pi^2+\alpha^2}\\&=\kappa_0^2\|(V,0)\|^2_\cX,\end{aligned}\label{eq:K2}
\end{gather}
\end{subequations}
using $\|V\|_\alpha^2\ge(\pi^2+\alpha^2)\|V\|^2$ for $V\in H^1_0$ and $\ip{\theta}{-\Lap^{-1}\theta}\le\|\theta\|^2/(\pi^2+\alpha^2)$. Symmetry: with $w\coloneqq\Pra\Ra\,\alpha^2$, $\ip{\mathcal K(V,\theta)}{(Z,\eta)}_\cX=w\ip{\theta}{Z}+w\ip{V}{\eta}$, symmetric in the two arguments. The eigenvalue bound is the min-max principle.
\end{proof}

\begin{proof}[Proof of Proposition~\ref{prop:cluster}(ii)]
Consecutive eigenvalues of one family are separated by $\Pra(\sigma_{n+1}^2-\sigma_n^2)=2\pi\sqrt{\Pra|\lambda^S_n|}\,(1+o(1))$ and $\delta_m-\delta_{m+1}=2\pi\sqrt{|\delta_m|}\,(1+o(1))$. A component of three elements contains two of the same family at distance at most the sum of two joined edges, $\le2\epsilon\sqrt{|\varsigma|}(1+o(1))<4\epsilon_0\sqrt{|\varsigma|}=2\pi\min(1,\sqrt{\Pra})\sqrt{|\varsigma|}$, below the same-family separation for $j$ large: contradiction. So components have at most two elements, of different families, and the diameter bound is the joining rule. The nearest eigenvalue outside a component is not joined to it, so its distance from the nearer endpoint exceeds $\epsilon\min(\sqrt{|\cdot|},\sqrt{|\cdot|})=\epsilon\sqrt{|\varsigma_j|}(1-o(1))$. The last claim is \eqref{eq:jidxsys}.
\end{proof}

\begin{proof}[Proof of Proposition~\ref{prop:cluster}(iii)]
By Proposition~\ref{prop:cluster}(i) every eigenvalue of $\cA$ is within $\kappa_0$ of one of $\cS$, so for $\tfrac12g_G\ge2\kappa_0$ the circle is at distance $\ge\tfrac14g_G$ from $\sigma(\cA)$ and $\sigma(\cS)$, and the resolvents are bounded by the reciprocal of that distance on $\Gamma_G$ by self-adjointness. Then
\begin{subequations}\label{eq:QQsys}
\begin{gather}
Q_G-Q^0_G=\frac1{2\pi i}\oint_{\Gamma_G}(z-\cA)^{-1}\mathcal K(z-\cS)^{-1}\dd z,\label{eq:QQint}\\
\|Q_G-Q^0_G\|\le r_G\cdot\kappa_0\cdot\Big(\frac{4}{g_G}\Big)^2\le\frac{C_G\kappa_0}{j},\label{eq:QQbound}
\end{gather}
\end{subequations}
by \eqref{eq:rG}; the bound is below $\tfrac12$ for $j\ge j_1$, so the ranks agree. Summability follows, and the last claim is $\|(I-Q_G)\tilde e_\nu\|=\|(Q^0_G-Q_G)\tilde e_\nu\|$.
\end{proof}

\subsection{Wall responses, Fredholm stage, and physical variables}

\begin{proof}[Proof of Proposition~\ref{prop:Bm}(i)]
Existence and uniqueness: with the smooth state $L_1\coloneqq\big(-i\alpha(y^3-y^2),0\big)$ or $L_2\coloneqq(0,y)$, which has the wall datum and satisfies the conditions at $y=0$ and the homogeneous conditions at $y=1$, $Y_{i,s}=L_i+(s-\cA)^{-1}f_i$, where $f_i\in\cX$ is the residual of $L_i$ in the resolvent equations; the difference of two solutions lies in $\mathcal D(\cA)$ and in the kernel of $s-\cA$. Identity: the computation in the proof of Proposition~\ref{prop:plant}(ii) uses of $(e,\vartheta)$ only its boundary conditions and the value of $\cA(e,\vartheta)$; repeated with $g=(\zeta,\eta)\in\mathcal D(\cA)$ in place of $(e,\vartheta)$, it gives $\ip{Y_{1,s}}{(\bar s-\cA)g}=-\overline{b_1[g]}$ and $\ip{Y_{2,s}}{(\bar s-\cA)g}=\overline{b_2[g]}$, first for smooth $g$ and then for $g\in\mathcal D(\cA)$ by density in the graph norm, since the terms in $\zeta'''$ have the factor $V_{i,s}(0)=V_{i,s}(1)=0$ of the velocity component of $Y_{i,s}$ and $\zeta''(1)$, $\eta'(1)$ are continuous on $H^3\times H^2$; $s=\bar z$, $g=(z-\cA)^{-1}f$ give \eqref{eq:dual}. The same computation holds for $\cS$.

Uncoupled bound: with $g=\tilde e_k$ the identity for $\cS$ gives $|\ip{Y^0_{i,s}}{\tilde e_k}_\cX|=|b_i[\tilde e_k]|/|s-\varsigma_k|$; $b_1$ vanishes on the uncoupled thermal eigenvectors and $b_2$ on the velocity ones, so by Parseval
\begin{equation}\label{eq:Y0}
\begin{gathered}
\|Y^0_{1,s}\|^2=\sum_{n\ge1}\frac{|b_1[e_n]|^2}{|s-\lambda^S_n|^2},\\
\|Y^0_{2,s}\|^2=\sum_{m\ge1}\frac{|b_2[\theta_m]|^2}{|s-\delta_m|^2}.
\end{gathered}
\end{equation}
Take the first sum; let $n_*$ minimize $|\lambda^S_n-z_G|$, with $z_G$ the center of $\Gamma_G$, so that $n_*\asymp j$. For $s\in\Gamma_G$, $|s-\lambda^S_n|\ge\tfrac12g_G\ge cj$ for every $n$ by \eqref{eq:rG} and Proposition~\ref{prop:cluster}(ii). Since $|s-\lambda^S_{n_*}|\le|\lambda^S_{n_*}-z_G|+r_G\le Cj$ and $|\lambda^S_n-\lambda^S_{n_*}|\ge c|n^2-n_*^2|\ge c|n-n_*|\,n_*$, there is $K$ independent of $j$ with $|s-\lambda^S_n|\ge\tfrac12|\lambda^S_n-\lambda^S_{n_*}|$ for $|n-n_*|\ge K$. With $|b_1[e_n]|\le Cn$ from \eqref{eq:bgrow},
\begin{equation}\label{eq:Y0bound}
\|Y^0_{1,s}\|^2\le\sum_{|n-n_*|<K}\frac{Cn^2}{c^2j^2}+\sum_{|n-n_*|\ge K}\frac{4Cn^2}{c^2(n^2-n_*^2)^2},
\end{equation}
where the first sum has at most $2K$ terms, each with $n\asymp j$, and the second is bounded uniformly in $n_*$ as in \eqref{eq:tailsum}. The thermal sum is the same with $\delta_m$ and $|b_2[\theta_m]|=\sqrt2\,\pi\alpha\sqrt{\Pra\Ra}\,m$.

Comparison: $\cA$ and $\cS$ differ by the bounded operator $\mathcal K$, which involves no derivatives, so $Y_{i,s}-Y^0_{i,s}$ has zero wall data and $(s-\cA)(Y_{i,s}-Y^0_{i,s})=\mathcal KY^0_{i,s}$. By the proof of Proposition~\ref{prop:cluster}(iii), $\|(s-\cA)^{-1}\|\le4/g_G\le1/\kappa_0$ on $\Gamma_G$ for $j\ge j_1$, which gives the factor $2$ in \eqref{eq:respbound}.
\end{proof}

\begin{proof}[Proof of Corollary~\ref{cor:BG}]
For $j>d$ the dual vector is $\tilde X'_j=X_j+\sum_{i\le d}a_{ji}X_i$, and biorthogonality with the low eigenvectors \eqref{eq:eigAp2}, whose coordinate on $X_j$ is $w_{jr}\coloneqq(b_j\cdot c_r)/(\mu_r-\lambda_j)$, gives $X_L^\top\overline{a_j}=-w_{j\cdot}$. By Proposition~\ref{prop:Bm}(ii), $|w_{jr}|\le C_1j|c_r|/|\mu_r-\lambda_j|\le C/j$, since $|\mu_r-\lambda_j|\ge\tfrac12|\lambda_j|\ge cj^2$ when $|\lambda_j|\ge2\max_r|\mu_r|$, and $\mu_r\in\rho(\cA)$ for the finitely many other $j$. Then $|\beta_j-(b_1[X_j],b_2[X_j])|\le|a_j|\,\|B_L\|\le C_L/j$, and Proposition~\ref{prop:Bm}(iii) gives the lower bounds once $2c_1j-\sqrt2\,C_L/j\ge c_1j$.
\end{proof}

\begin{proof}[Proof of Proposition~\ref{prop:design}]
By Proposition~\ref{prop:cluster}(i), $|\lambda_j-\varsigma_j|\le\kappa_0$, and at the boundary $b\ge j_1$ the gap exceeds $4\kappa_0$, so $\lambda_d>\lambda_{d+1}$. For $q_i\ge\omega$ the target poles are at most $\varsigma_1-\omega<-\omega$. Simplicity of the target spectrum and distinctness of the low target poles exclude $q_1-q_2\in\{\lambda^S_n-\delta_m\}$; $\mu_j\in\rho(\cA)$ excludes, for each $j\le d$, countably many values of one shift; a high target pole $\varsigma_k-q_i$ equal to some $\lambda_j$ excludes one value of $q_i$ for each pair $(k,j)$. A countable union of lines is meagre in the plane. $K_L$ exists by Proposition~\ref{prop:low}. For $\hat c_\nu$: if $w$ satisfies $\cA'^*w=\bar\lambda w$ and $\ip{Bc}{w}_\cX=0$ for every $c\in\mathbb C^3$, then $\cA w=\cA'^*w-\kappa^*B^*w=\lambda w$ with all three wall functionals of $w$ zero, so $w=0$ by Theorem~\ref{thm:three}; hence the matrix $\big(\ip{Be_i}{\tilde X'_k}_\cX\big)$, $i=1,2,3$, $\lambda_k=\lambda$, has full rank, and $\hat c_\nu$ can be chosen to make each square matrix of step~7 invertible.
\end{proof}

\begin{proof}[Proof of Lemma~\ref{lem:eig}]
Middle poles: $\psi_\nu(\xi)$ is the expansion of $(\nu(\xi)-\lambda_{k(\nu)})(\nu(\xi)-\cA'_{-1})^{-1}B\hat c_\nu$, the pole at $\lambda_{k(\nu)}$ removed, and its coefficients are $O(1/j)$ by Proposition~\ref{prop:Bm}(ii) and \eqref{eq:betaPhi}. Tail poles: $Q_G\cX$ is $\cA'$-invariant and $\cA'=\cA$ there, so $(\nu-\cA')v_\nu=(\nu-M_G)v_\nu$, and by \eqref{eq:cnu} and \eqref{eq:BG} this equals the cluster coordinates of $\BF c_\nu$ in the eigenbasis of $\cA'$, namely $\sum_{j\in G}\ip{\BF c_\nu}{\tilde X'_j}_\cX\,X'_j$. (What is prescribed is the cluster coordinate of $\psi_\nu$ in the eigenbasis of $\cA'$, not $Q_G\psi_\nu$, which also contains the tails of the low eigenvectors.) Applying $(\nu-\cA'_{-1})$ term by term to the series in \eqref{eq:psinu} returns $\sum_{j\notin G}\ip{\BF c_\nu}{\tilde X'_j}_\cX\,X'_j$. The sum of the two is the expansion of $\BF c_\nu\in\cX_{-1}$ in $\{X'_j\}$, valid in $\cX_{-1}$ because the change of basis differs from the identity by finitely many columns in $\cX$ and coordinate functionals $\ip{\cdot}{X_i}_\cX$, $X_i\in\mathcal D(\cA)$; this is \eqref{eq:eigid}. Bound on $c_\nu$: $\|(\nu-M_G)v_\nu\|\le\|(\nu-\cA)Q_G\tilde e_\nu\|=\|Q_G(\nu-\varsigma-\mathcal K)\tilde e_\nu\|\le q+\kappa_0$, with $\varsigma$ the uncoupled eigenvalue of $\tilde e_\nu$, and Corollary~\ref{cor:BG}, since $n_*\ge j_3$. Closeness: $\psi_\nu-\tilde e_\nu=-(I-Q_G)\tilde e_\nu+\sum_{j\notin G}(\cdots)$; the first term is bounded by \eqref{eq:QQ}. In the second, for $j>d$ the coefficient $|\ip{\BF c_\nu}{\tilde X'_j}_\cX|=|\beta_j\cdot c_\nu|\le C|c_\nu|\,j$ by Proposition~\ref{prop:Bm}(ii) and \eqref{eq:betaPhi}, and $|\nu-\lambda_j|\ge c|j^2-n^2|$ for $j\notin G$, where $n$ is the joint index of the uncoupled eigenvalue $\nu+q_i$: for $j$ beyond a cutoff depending on $q$ and $\kappa_0$ this is Proposition~\ref{prop:cluster}(ii), and for the finitely many remaining $j$ it holds with a smaller $c$ because the target poles are not eigenvalues of $\cA'$; so
\begin{equation}\label{eq:tailsum}
\begin{split}
\Big\|\sum_{j\notin G,\,j>d}\frac{\ip{\BF c_\nu}{\tilde X'_j}_\cX}{\nu-\lambda_j}X'_j\Big\|^2
&\le C|c_\nu|^2\sum_{j\ne n}\frac{j^2}{(j^2-n^2)^2}\\
&\le C'|c_\nu|^2 ,
\end{split}
\end{equation}
since $\sum_{j\ne n}j^2/(j^2-n^2)^2\le\sum_{k\ne0}k^{-2}+\sum_{j>2n}j^{-2}+n^{-4}\sum_{j<n/2}j^2$ is bounded uniformly in $n$; the finitely many $j\le d$ contribute a bounded multiple of $|c_\nu|^2$ because the target spectrum is simple. With \eqref{eq:cbound} for $c_\nu$ this gives $\|\psi_\nu-\tilde e_\nu\|^2\le C/j^2$, summable.
\end{proof}

\begin{proof}[Proof of Lemma~\ref{lem:domain}]
For real $z>0$ let $G_z\coloneqq(z-\cA_{-1})^{-1}B:\mathbb C^3\to\cX$. In the plant eigenbasis, $\|G_z\|^2\le\sum_j|b_j|^2/(z-\lambda_j)^2\le C\sum_jj^2/(z+cj^2)^2=O(z^{-1/2})$ by Proposition~\ref{prop:Bm}(ii), splitting the sum at $j\asymp\sqrt z$. So for $z$ large $I-G_zK$ is invertible and $(z-A_K)^{-1}=(I-G_zK)^{-1}(z-\cA)^{-1}$ is a bounded resolvent with range in $\mathcal D(A_K)$. Next, $(\cA_{-1}+BK)R\tilde e_\nu=R\widetilde\cS_\xi\tilde e_\nu$ for every target eigenvector, by \eqref{eq:eigid} and $F\psi_\nu(\xi)=c_\nu(\xi)$; finite sums of target eigenvectors are a core for $\widetilde\cS_\xi$, and $\cA_{-1}+BK$ is bounded from $\cX$ to $\cX_{-1}$, so passing to the limit gives $(\cA_{-1}+BK)Ry=R\widetilde\cS_ty\in\cX$ for $y\in\mathcal D(\widetilde\cS_\xi)$, that is, $R\widetilde\cS_tR^{-1}\subset A_K$. Both operators have $z$ in their resolvent sets; for $x\in\mathcal D(A_K)$ put $f\coloneqq(z-A_K)x$ and $y\coloneqq(z-R\widetilde\cS_tR^{-1})^{-1}f$; then $(z-A_K)y=f=(z-A_K)x$ and injectivity of $z-A_K$ gives $x=y\in R\,\mathcal D(\widetilde\cS_\xi)$.
\end{proof}

\begin{proof}[Proof of Proposition~\ref{prop:mean}(ii)]
Averaging \eqref{eq:NSu} and \eqref{eq:NST} in $x$, with $v$ of zero mean by \eqref{eq:NSdiv} and $v(0)=0$, gives \eqref{eq:meansys} at linear order. The kernels are the classical ones for the heat equation with Dirichlet actuation and target shift, with $I_1(s)/s\to\tfrac12$ on the diagonal.
\end{proof}

\begin{proof}[Proof of Proposition~\ref{prop:mean}(iii)]
By Proposition~\ref{prop:cluster}(i) and the lowest eigenvalues of $S_\alpha$ and $\Lap$,
\begin{equation}\label{eq:rayq}
\begin{split}
\lambda_1&=\sup_X\frac{\ip{\cS X}{X}_\cX+\ip{\mathcal KX}{X}_\cX}{\|X\|_\cX^2}\\
&\le-\min(\Pra,1)(\pi^2+\alpha^2)+\kappa_0 .
\end{split}
\end{equation}
\end{proof}

\section*{Acknowledgment}
The authors' problems, ideas, and results were developed with the assistance of Claude and ChatGPT in final theorem formulation, proofs, simulations, and drafting throughout the paper, under the authors' correction and complete verification.

\vspace{-.5cm}
\begin{IEEEbiography}[{\includegraphics[width=1in,height=1.25in,clip,keepaspectratio]{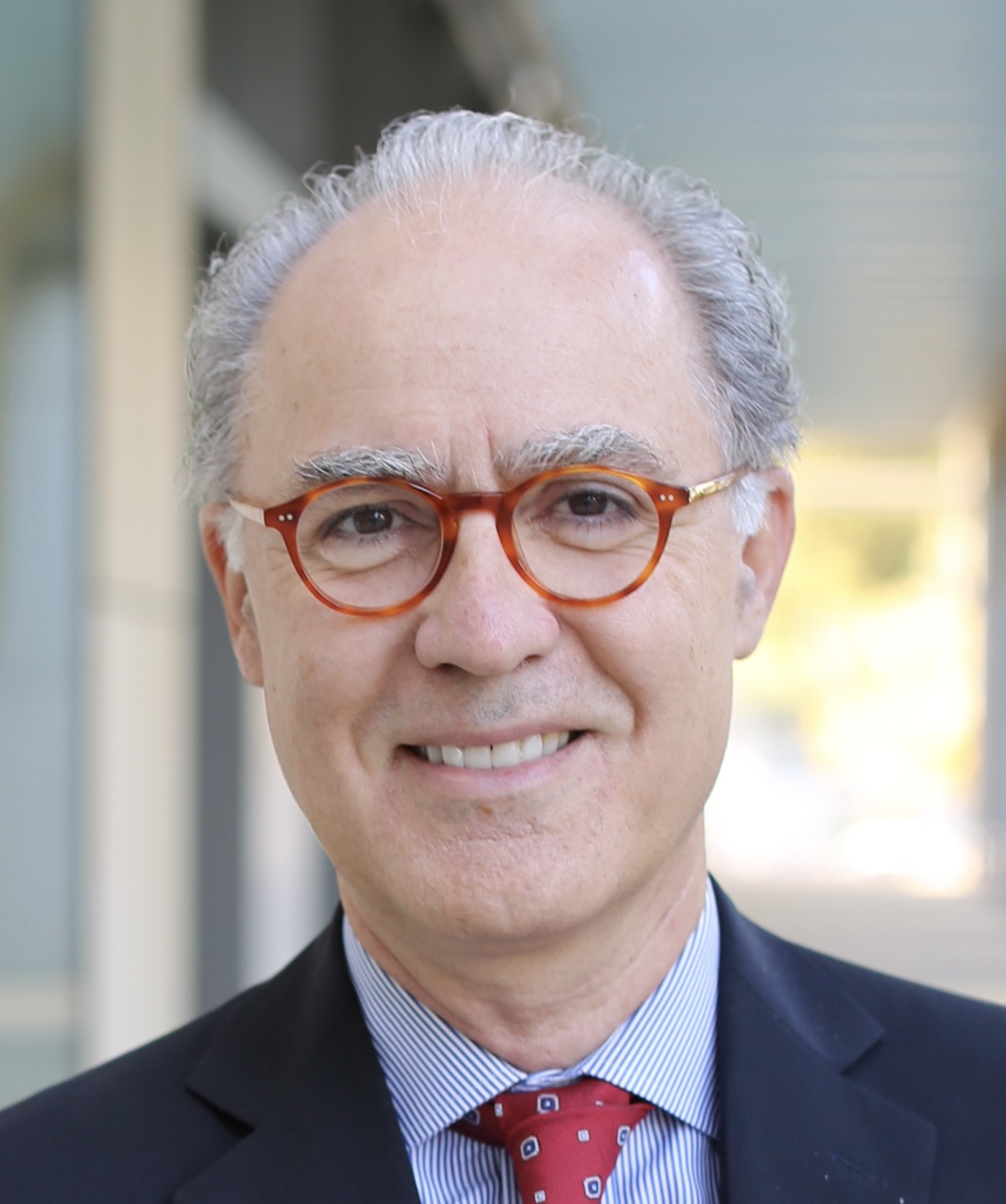}}]{Miroslav Krsti\'c} (Fellow, IEEE) received the Ph.D. degree in electrical engineering from UC Santa Barbara in 1995. He is currently a Distinguished Professor with UC San Diego. Dr. Krsti\'c is Fellow of IFAC, SIAM, AIAA, ASME, AAAS, IET, Serbian Academy of Sciences and Arts, and Academia Europaea. Krsti\'c was the recipient of the IEEE Brocket Control Systems Award,  Bode Lecture Prize, Bellman Control Heritage Award, SIAM Reid Prize, ASME  Oldenburger Medal, John Ragazzini Education Award, four IFAC TC Awards (Nonlinear Control Systems, Distributed Parameter Systems, Adaptive and Learning Systems, Time-Delay Systems), and IFAC Harold Chestnut Control Textbook Award. Krsti\'c is a Fellow-Ambassador of the French CNRS. He serves as the Editor-in-Chief of IEEE Transactions on Automatic Control and has previously served as Editor-in-Chief of Systems and Control Letters. 
\end{IEEEbiography}

\vspace{-.5cm}
\begin{IEEEbiography}[{\includegraphics[width=1in,height=1.25in,clip,keepaspectratio]{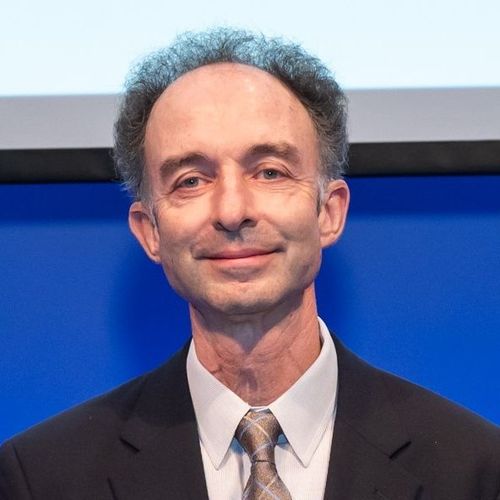}}]{Eugene Lavretsky} (Fellow, IEEE) is a Boeing Principal Senior Technical Fellow within the Boeing Research \& Technology in Huntington Beach, CA.
During his professional career at Boeing, Dr. Lavretsky has developed flight control methods, system identification tools and flight simulation technologies for piloted aircraft and autonomous aerial platforms. His research interests include robust and adaptive control, system identification and flight dynamics. He has written over 100 technical articles, a textbook and taught graduate control courses at the California Long Beach State University, Claremont Graduate University, California Institute of Technology, University of Missouri Science and Technology, and at the University of Southern California. Dr. Lavretsky is the recipient of the AIAA Mechanics and Control of Flight Award (2009), the IEEE Control System Magazine Outstanding Paper Award (2011), the AACC Control Engineering Practice Award (2012), and the IEEE Control Systems Society Award for Technical Excellence in Aerospace Control (2021). Dr. Lavretsky is a Fellow of IEEE (2016), a Fellow of AIAA (2018), and a member of the National Academy of Engineering (2024).
\end{IEEEbiography}

\end{document}